\newtheorem{theorem}{Theorem}
\newtheorem{lemma}{Lemma}
\newtheorem{fact}{Fact}
\newtheorem{definition}{Definition}
\newtheorem{corollary}{Corollary}
\newtheorem{proposition}{Proposition}
\newcommand{\feed}[1]{\ensuremath{\mathsf{Parity^{(#1)}}}}
\newcommand{\fullfeed}[1]{\ensuremath{\mathsf{Full}^{(#1)}}}
\newcommand{\abfeed}[1]{\ensuremath{\mathsf{Universal}^{(#1)}}}
\newcommand{\gfeed}{\mathsf{Feed}\xspace}
\newcommand{\adv}{\mathsf{Adv}\xspace}
\newcommand{\bcc}{\mathsf{BCC}\xspace}
\renewcommand{\Pr}[1]{\mathbb{P}\left[\,#1\,\right]}
\newcommand\E[1]{\mathbb{E}\left[\,#1\,\right]}
\newcommand{\cQ}{\mathcal{Q}}
\newcommand{\ie}{{\it i.e.,}\xspace}
\newcommand{\eg}{{\it e.g.}\xspace}
\newcommand{\symdiff}{\;\triangle \;}
\newcommand{\gt}{$(n, k)$-Group-Testing\xspace}
\newcommand{\aadv}{$\alpha$-Malicious Adversary\xspace}
\newcommand{\xaa}{$\alpha$-Honest $x$-Avoiding Adversary\xspace}
\newcommand{\hadv}{$\alpha$-Honest Adversary\xspace}
\newcommand{\grouptesting}{Group Testing\xspace}
\newcommand{\dk}[1]{{#1}}
\newcommand{\dpa}[1]{{#1}}
\newcommand{\remove}[1]{}
\newcommand\bigexists{%
  \mathop{\lower0.75ex\hbox{\ensuremath{%
    \mathlarger{\mathlarger{\mathlarger{\mathlarger{\exists}}}}}}}%
  \limits}
\newcommand\bigforall{%
  \mathop{\lower0.75ex\hbox{\ensuremath{%
    \mathlarger{\mathlarger{\mathlarger{\mathlarger{\forall}}}}}}}%
  \limits}
\begin{document}

\title{Generalized Framework for Group Testing: \\ Queries, Feedbacks and Adversaries}

\author{
	Marek Klonowski\footnote{Wroclaw University of Science and Technology, Poland, E-mails: marek.klonowski@pwr.edu.pl,dominik.pajak@pwr.edu.pl}
	\and
	Dariusz R. Kowalski\footnote{Augusta University, Augusta, GA, USA, and SWPS University, Warsaw, Poland, E-mail: dkowalski@augusta.edu}
	\and
	Dominik Paj\k{a}k\footnotemark[1]}

\maketitle

\begin{abstract}%
In the \grouptesting problem, the objective is to 
learn a subset $K$ of some much larger domain $N$, using the shortest-possible sequence of queries $\mathcal{Q}$. A feedback to a query provides some information about the intersection between the query and 
 subset $K$. Several specific feedbacks have been studied in the literature, often proving different formulas for the estimate of the query complexity of the problem, defined as the shortest length of queries' sequence solving \grouptesting problem with specific feedback.
In this paper we 
study what are the properties of the feedback that influence the query complexity of \grouptesting and what is their measurable impact.
We propose a generic
framework that covers a vast majority of relevant settings considered in the literature, 
which depends on two fundamental parameters of the feedback: input capacity $\alpha$ and output expressiveness $\beta$. 
\dk{They upper bound the logarithm of the size of the feedback function domain and image, respectively.}
To justify the value of the framework, we prove upper bounds on query complexity of non-adaptive, deterministic \grouptesting under \dk{some ``efficient''} feedbacks, for minimum, maximum and general expressiveness,
and complement them with a lower bound on all feedbacks with given parameters $\alpha,\beta$.
\dk{Our upper bounds also hold if the feedback function could get an input twisted by a malicious adversary, in case the intersection of a query and the hidden set is bigger than the feedback capacity~$\alpha$.}
We also show that slight change in the feedback function may result in substantial worsening of the query complexity. 
\dk{Additionally, we analyze explicitly constructed randomized counterparts of the deterministic results.}
Our results provide some insights 
\dk{to}
what are the most useful bits of information an output-restricted feedback could provide,
and open a number of challenging research directions. 


\

\noindent
{\bf Keywords:} Group Testing, queries, feedback functions, adversaries, non-adaptive algorithms, deterministic algorithms, \dk{randomized algorithms,} lower bound.



 
\end{abstract}


\section{Introduction}

\grouptesting, introduced by 
\cite{dorfman1943detection}, is an inference problem, where the goal is to identify, by asking queries, all elements of an unknown set $K$. All we initially know about set $K$ is that $|K| \leq k$ and that it is a subset of some much larger set $N$ with $|N| = n$, for given parameters $k,n$. To learn set $K$, we must have answers to the queries that provide some information about set $K$. In our model, the answer to a query $Q$ depends on the intersection between $K$ and $Q$ and equals to $\gfeed(K \cap Q)$, where $\gfeed$ is some known pre-defined feedback function. The sequence of queries is a correct solution to \grouptesting if and only if for any  two different sets $K_1, K_2$ such that $|K_1|,|K_2| \le k$,
\dk{the sequence of feedback answers computed for sets $K_1$ and $K_2$ are different; we say then that the sequence of queries distinguishes any pair of sets, or identifies any set of size at most $k$.\footnote{%
\dk{In this work we abstract from 
computational efficiency
of decoding of sets,
which is a large research area by itself, c.f.,~\cite{Aldridge2014}.}}}
The objective is, 
for a given deterministic feedback function $\gfeed(\cdot)$,
to find 
a sequence of queries that identifies any set $K$ of size at most $k$ and the length of this sequence, called query complexity, will be shortest possible. In particular, we are interested in algorithms that have query complexity logarithmic in $n=|N|$ and polynomial in $k=|K|$. In 
the classic 
variant, 
studied in most of the existing relevant literature,
the function $\gfeed$ simply answers whether the intersection between $K$ and $Q$ is empty or not, \dk{while another popular feedback returns the size of the intersection~\cite{duhwang,gallager1985perspective}.} These variants were applied in many domains, including pattern matching~\cite{clifford2010pattern, Indyk97}, compressed sensing~\cite{cormode2006combinatorial}, streaming algorithms~\cite{cormode2005s} and graph reconstruction~\cite{choi2010optimal,GrebinskiK00}
 or even accelerating computations in neural networks~\cite{Liang2021}. Though one of the most prominent examples of applications of \grouptesting is in conflict resolution in communication networks~\cite{capetanakis1979generalized,capetanakis1979tree,chlebus2017randomized,gallager1985perspective,greenberg1987estimating,greenberg1985lower,KomlosG85,massey1981collision,wolf1985born}.

There is a large body of literature introducing new variants of the \grouptesting~\cite{Censor-HillelHL15,GrebinskiK00,Bshouty09,MarcoJKRS20,MarcoJK19}, which could be \dk{simply} viewed as different feedback functions applied to \dk{some generic} \grouptesting framework. Therefore, in this paper we aim at \dk{designing such a universal
framework allowing a holistic view at many previous modifications of \grouptesting setting,
and
study the dependence of the query complexity 
on two identified fundamental parameters of the feedback function:}
\begin{description}
\item[\emph{Capacity:}] this parameter denotes the maximum set size that can be processed by the feedback function. 
\dk{In other words, the domain of the feedback function with capacity $\alpha$ is the family of all subsets of $N$ of size at most $\alpha$.} 
The capacity is denoted by $\alpha$ throughout this paper and varies between $1$~and~$k$.
\item [\emph{Expressiveness:}] this parameter denotes the number of output bits of the feedback function. \\
It is denoted by $\beta$ throughout this paper, and varies between $1$ and $\bar{\alpha}$, where the latter denotes a binary logarithm of the number of all subsets of $N$ of size at most $\alpha$, \dk{i.e., $\bar{\alpha}=\log_2 \sum_{i=0}^{\alpha} \binom{n}{i}$.}
\end{description}

A feedback function with capacity $\alpha$ and expressiveness $\beta$ is called an {\em $(\alpha,\beta)$-feedback}.

%
If for some query $Q$, $|Q \cap K| > \alpha$, then the intersection set $Q\cap K$ cannot be passed directly to the feedback function, \dk{because $(\alpha,\beta)$-feedback functions are not defined for such sets.} 
Therefore, in our framework we resolve this issue by presence of an adversary \dk{-- a non-deterministic feature which,} for such queries with large intersection, 
selects a set with at most $\alpha$ elements
and the answer to query $Q$ is the feedback on this set. 
We consider different models of an adversary, \dk{including a powerful Malicious Adversary, who could ``fool'' the feedback with {\em arbitrary sets} of at most $\alpha$ elements of $N$,\footnote{%
\dk{One could also assume that the Malicious Adversary could give {\em any} input to the feedback function in case of exceeded capacity --
this could however require re-definition of the feedback function to handle 
non-valid input sets.}

\dk{Example of an adversary, widely but implicitly considered in the literature, is the mechanism of feedback in radio networks or multiple access channels, when the feedback provides answers ``collision'' or ``silence'' or an arbitrary element if two or more neighbors of a communication device transmit simultaneously, c.f.,~\cite{Bar-YehudaGI92,chlebus2017randomized}.}
}
and more benign Honest Adversary, who always returns {\em some subset} of the intersection.}

Clearly, increasing 
the capacity parameter $\alpha$ or expressiveness $\beta$ increases the number of $(\alpha,\beta)$-feedback functions,
and \dk{thus} should decrease the query complexity of the best feedbacks in this family. 
But what is the asymptotic \dk{pace of this query complexity decrease?} 
\dk{Is there a substantial difference in query complexity of \grouptesting under Honest and Malicious adversaries?}
Are there better and worse feedback functions for given $\alpha,\beta$, i.e., resulting in smaller (resp., larger) query complexity?
This paper provides partial answers to 
these questions.


\paragraph{Document structure}
In Section~\ref{sec:model} we formally define the general framework of \gt,
\dk{including
$(\alpha,\beta)$-feedback functions and adversaries,} and outline the contribution of the paper.
Then we discuss a related work on specific feedbacks in Section~\ref{sec:related-future}.
In Section~\ref{sec:upper} we prove upper bounds on the query complexity for efficient feedbacks with minimal, maximal and general expressiveness $\beta$ and any capacity $\alpha$, \dk{under powerful Malicious Adversary}. Section~\ref{sec:lower} presents a lower bound for feedbacks with maximum expressiveness, i.e., $(\alpha,\bar{\alpha})$-feedbacks, \dk{which holds even for more benign Honest adversaries.} A case study of two $(\alpha,2\log n)$-feedback functions with (provably) substantially different query performance is given in Section~\ref{sec:case-study}. 
Discussion of results from perspective of future directions is given in~Section~\ref{sec:future}. 
\section{Generalized framework and our contribution}
\label{sec:model}

\dk{As we will discuss in Sections~\ref{sec:technical-results} and~\ref{sec:related-future},
many previously considered variants of \gt problem could be expressed by, and their query
complexities depend on, specific parameters of the feedback to the queries. Here,}
we formally introduce \dk{generalized framework, including} families of $(\alpha,\beta)$-feedbacks, where $\alpha$ is the feedback capacity
while $\beta$ is its expressiveness, \dk{and adversaries that provide input to the feedback function in case the intersection has more than $\alpha$ elements.} 
\dk{We consider {\em non-adaptive deterministic} solutions, 
in which subsequent queries do not depend on the feedback from the previous ones nor on random bits. 
This class is very popular in the literature, due to its applicability and relevance to coding~\cite{kautz1964nonrandom} and information theory~\cite{erdos1963two,aldridge2019group}.}

\dk{In subsequent technical sections, we will be studying} query complexity of the whole classes of $(\alpha,\beta)$-feedbacks,
depending on parameters $n,k,\alpha,\beta$ \dk{and specific adversary}, as well as several interesting sub-classes.  
\dk{We will also discuss randomized counterparts of our deterministic solutions, c.f., Definition~\ref{def:randomized}, as in some cases they could be computed more~efficiently.}

%

We assume that the universe of all elements $N$, with $|N| = n$, is enumerated with integers $1,2,\dots,n$. Throughout the paper we will associate an element with its identifier. 

\paragraph{Specification of generalized \grouptesting framework}
\begin{definition}
\label{def:framework}
The generalized \gt framework is defined as follows:
\begin{enumerate}
\item An \gt Algorithm is defined as a sequence of queries $\cQ=\mathcal{Q}^{n,k} = \langle Q_1,Q_2,\dots,Q_t\rangle$ depending on $n$, $k$, where each query is an arbitrary subset of $N$.
\dk{The sequence length $t$ is called a query complexity of the sequence/algorithm.}\footnote{%
\dk{Due to the scope of this paper, our definition considers non-adaptive algorithms, i.e., in which the sequence of queries is fixed in advance.
However, an analogous framework can be defined for adaptive algorithm, in which consecutive queries are defined
based on the partial feedback vector, i.e., feedbacks on the preceding queries.}}

\item An adversary is defined as an entity that performs two actions. Firstly, it chooses set $K$ as an arbitrary subset of $N$ with $|K|\leq k$. Secondly, it defines a
 function $\adv(X,i \mid \mathcal{Q},K)$, for every $X \subseteq N$, and every $i \in \{1,\dots, |\mathcal{Q}|\}$, where $i$ denotes the index\footnote{\dk{This means that the adversary receives not only the whole sequence $\mathcal{Q}$ but also the step number; hence, may output different values for two identical intersection sets but obtained for different queues.}} in sequence $\mathcal{Q}$. 
 This function must satisfy: 
 \begin{itemize}
\item
$\adv(X,i \mid \mathcal{Q}, K))\subseteq N$,
\item
$|\adv(X,i \mid \mathcal{Q}, K))| \leq \alpha$,
\item
 $\adv(X,i  \mid \mathcal{Q}, K) = X$, if $|X| \leq \alpha$.
 \end{itemize}
\item An adversary strategy, \dk{under a given query sequence $\cQ$ and a set $K$ fixed by an adversary
} is defined as 
\dk{a}
function $\adv(X,i \mid \mathcal{Q}, K)$ of two arguments: set $X \subset N$ and index $i \in \{1,2,\dots, |\mathcal{Q}|\}$. 
\dk{$\mathcal{S}_{adv}(\cQ,K)$ denotes the set of all adversarial strategies under a given query sequence $\cQ$ and a set $K$ fixed by an adversary, and $\mathcal{S}_{adv}\dk{(\mathcal{Q},\cdot)=\{\mathcal{S}_{adv}(\cQ,K)\}_{K\subseteq N, |K|\le k}}$ is the set of all possible strategies of the adversary over sets $K$ of at most $k$ elements.}
\item An $(\alpha,\beta)$-feedback function $\gfeed$ is a function that takes as an input any subset of $N$ with at most $\alpha$ elements and 
\dk{outputs}
a binary vector of $\beta$ bits.
\item A feedback vector is defined as a sequence of outputs of the feedback function on the intersections between $K$ and the subsequent queries $Q_1,Q_2,\dots,Q_t$:
\begin{align*}
\mathcal{F}(K,\adv) = &\langle \gfeed(\adv(Q_1 \cap K, 1 \mid \mathcal{Q}, K)), \gfeed(\adv(Q_2\cap K, 2 \mid \mathcal{Q}, K)),\dots,\\& \gfeed(\adv(Q_t \cap K, t \mid \mathcal{Q}, K))\rangle \ ,
\end{align*}

\item For any fixed $n,k$, we say that a sequence of queries $\mathcal{Q}$ solves \gt problem under some adversary with the set of possible strategies $\mathcal{S}_{adv}\dk{(\mathcal{Q},\cdot)}$
if we have:
\begin{equation}\nonumber
 \bigforall_{\substack{K_1, K_2 \subset N\\ |K_1|, |K_2| \leq \alpha \\ K_1 \neq K_2}} \{\mathcal{F}(K_1,\adv) : \adv \in \mathcal{S}_{adv}\dk{(\cQ,K_1)}\} \cap  \{\mathcal{F}(K_2,\adv) : \adv \in \mathcal{S}_{adv}\dk{(\cQ,K_2)}\} = \emptyset 
 \end{equation}
\dk{In other words, the sets of possible (under the given adversary) feedback vectors for two different sets $K_1,K_2$ are disjoint.}
 \end{enumerate}
\end{definition}

\remove{
-------------------------------------[REMOVE OLD DEFINITION]------------------------\\
 The {\em \gt} algorithm in the model with $(\alpha,\beta)$-feedback is a sequence of queries $Q_1, Q_2, \dots, Q_t$, where each $Q_\tau \subset N$. An $(\alpha,\beta)$-feedback function $\gfeed$ is a function that takes as an input any subset of $N$ with at most $\alpha$ elements and returns a binary vector of $\beta$ bits as its output. A \emph{feedback vector} is defined as a sequence of outputs of the feedback function on the intersections between $K$ and the subsequent queries $Q_1,Q_2,\dots,Q_t$:

($\adv(Q_1 \cap K | K,\mathcal{Q})$)
\[
\mathcal{F}(K) = \langle \gfeed(\adv(Q_1 \cap K)), \gfeed(\adv(Q_2 \cap K)),\dots,\gfeed(\adv(Q_t \cap K))\rangle \ ,
\]
where $\adv$ is an action of the adversary, for any subset of $N$.
, observing the following rules: $\adv(S) \subseteq S$ and $|\adv(S)| = \min\{|S|, \alpha\} $. 
In other words (HERE DEFINITION OF AN ADVERSARY), the adversary is in the ``middle'' between $K \cap Q$ and feedback. If $|K\cap Q| \leq \alpha$, then the adversary has to pass the whole set $K \cap Q$ to the feedback function. On the other hand if $|K\cap Q| > \alpha$, then the adversary, depending on the model (see Definition~\ref{def:adversaries}), can pass to the feedback function an arbitrary subset of $K\cap Q$ of size exactly~$\alpha$ or a completely arbitrary subset of at most elements from the whole domain $N$. We remark that the adversary may be adaptive and does not have to pass the same set even if two queries yield identical intersections larger than $\alpha$.

 A sequence of queries $\mathcal{Q} = \langle Q_1, Q_2, \dots, Q_t \rangle$ 
 is said to {\em solve \gt} if regardless of the actions of the adversary and for any two different sets $K_1 \neq K_2$ such that $|K_1|, |K_2| \leq k$ we have $\mathcal{F}(K_1) \neq \mathcal{F}(K_2)$. In other words, the feedback vector for any two sets $K_1, K^"$ is different UNDER A GIVEN TYPE OF ADVERSARY IF regardless of the strategy of the (WHICH ADVERSARY?) adversary (\ie which $\alpha$ elements it chooses from the intersections of sets with queries if they are larger than $\alpha$).\footnote{%
Formally, each set $K$ may produce several different feedback vectors, depending on possibly different adversarial strategy 
when feedback capacity is exceeded ($|K\cap Q|>\alpha$). Therefore, the solution to the \gt problem must assure
that the families of admissible feedback vectors obtained for any sets $K_1,K_2$ are disjoint. It also implies that each set $K$
could be correctly encoded based on any feedback vector that could be obtained for this set.}
When $\mathcal{F}(K_1)$ differs from $\mathcal{F}(K_2)$ on some position corresponding to query $Q$ we say that query $Q$\emph{distinguishes} sets $K_1$ and $K_2$ under feedback $\gfeed$ and adversary $\adv$. 
\\-------------------------------------------------------------\\
}

\dk{In the above framework, the adversary could be deterministic (if the set of strategies $\mathcal{S}_{adv}(\cQ,K)$ for any given $\cQ,K$ is a single function, e.g., always passing an empty set to the feedback function for intersections larger than $\alpha$) or non-deterministic (otherwise). The feedback function is always deterministic. 
Observe also that in case of non-adaptive algorithms considered in this work the order of queries does not matter from perspective of query complexity, but 
helps in the analysis
to relate queries with their corresponding feedbacks in the feedback vector.}

\paragraph{Decoding of elements}
\dk{It follows from our Definition~\ref{def:framework}, point 6, of solving \gt problem that elements of the hidden set $K$ could be enlisted.
A straightforward, though not computationally efficient way, would be to consider all possible sets $K$ of size at most $k$; then, for each of them -- consider a family of all possible adversarial strategies and compute
feedback vectors for them; finally, one could find among them a matching copy of the actual feedback vector.
This copy is in some computed family corresponding to a set $K$, which is the actual hidden set to be enlisted.
The correctness of this solution follows directly from Definition~\ref{def:framework}, point 6: all possible feedback vectors obtained for all possible adversarial strategies are disjoint 
for different sets $K$ 
of size at most $k$.
In this work we do not study more efficient decoding algorithms than the above mentioned method -- this topic could be an interesting and challenging future direction.}


\paragraph{Maximum capacity}
The intersection between a query and set $K$ has always at most $k$ elements, hence having $\alpha$ larger than $k$ does not increase the power of the model compared to the case of $k = \alpha$. Therefore, in all our results we assume that $k \geq \alpha$ (for a setting $\alpha > k$ one could use a sequence of queries for $k = \alpha$).

\paragraph{Maximum expressiveness}
Similarly, we may restrict our considerations to $\beta \leq \bar{\alpha}$ because of the following fact.
\begin{proposition}
\label{fct:expres}
For any feedback function $f_1$, there exists a feedback function $f_2$ with expressiveness $\beta \leq \bar{\alpha}$ such that for any two sets $K_1, K_2 \subseteq N$, with $|K_1|, |K_2| \leq \alpha$, 
\[
f_1(K_1) = f_1(K_2) \Leftrightarrow f_2(K_1) = f_2(K_2) \ .
\]
\end{proposition}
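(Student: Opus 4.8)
The plan is to exploit the single structural fact that makes the statement true: a feedback function of capacity $\alpha$ has a \emph{finite} domain, namely the family of all subsets of $N$ of size at most $\alpha$, whose cardinality is exactly $M := \sum_{i=0}^{\alpha}\binom{n}{i} = 2^{\bar{\alpha}}$. Hence, no matter how large the expressiveness of $f_1$ is, $f_1$ attains at most $M$ distinct output vectors on this domain. The idea is to discard the particular bit-strings that $f_1$ produces and simply re-label its distinct outputs with the shortest possible binary codes, in a way that keeps exactly the same inputs glued together.

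Concretely, I would first pass to the equivalence relation induced by $f_1$ on its domain: set $K \sim K'$ precisely when $f_1(K) = f_1(K')$, and let $C_1, \dots, C_r$ denote the resulting equivalence classes (the fibers of $f_1$). Since distinct classes correspond to distinct output vectors and there are at most $M$ such vectors, we have $r \le M = 2^{\bar{\alpha}}$. Now fix any injection from $\{C_1, \dots, C_r\}$ into $\{0,1\}^{\beta}$ with $\beta = \lceil \log_2 r \rceil$; such an injection exists because $2^{\beta} \ge r$. Define $f_2$ to send each input set $K$ (with $|K| \le \alpha$) to the codeword assigned to the class containing $K$.

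It then remains to check the two required properties. For the biconditional, observe that $f_2$ is by construction constant on each class $C_j$ and injective across classes; therefore, for any $K_1, K_2$ with $|K_1|, |K_2| \le \alpha$, the equalities $f_2(K_1) = f_2(K_2)$, the statement that $K_1$ and $K_2$ lie in the same class, and $f_1(K_1) = f_1(K_2)$ are all equivalent, which is exactly the claim. For the expressiveness bound, the output length of $f_2$ is $\beta = \lceil \log_2 r \rceil \le \lceil \log_2 M \rceil = \bar{\alpha}$, where in the last step we read $\bar{\alpha} = \log_2 M$ as the integer number of bits available (rounding up to the nearest integer, since $\beta$ must be integral).

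There is no substantial obstacle here: the argument uses nothing about the internal structure of $f_1$ and relies only on the domain-size count $r \le 2^{\bar{\alpha}}$. The one place deserving a word of care is the integrality gap between the integer $\beta$ and the possibly non-integer quantity $\bar{\alpha} = \log_2 M$; this is precisely why we pass to the number $r$ of fibers and take a ceiling, and why $M = 2^{\bar{\alpha}}$ is the tight bound that the definition of $\bar{\alpha}$ is engineered to provide.
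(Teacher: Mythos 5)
Your proof is correct and takes essentially the same approach as the paper's: both arguments partition the domain of $f_1$ into its fibers, observe that there are at most $2^{\bar{\alpha}}$ of them since that is the size of the domain, and define $f_2$ to output a distinct label of at most $\bar{\alpha}$ bits for each fiber. Your extra attention to the integrality of $\bar{\alpha}$ (taking $\beta = \lceil \log_2 r \rceil$) is a minor refinement that the paper's proof glosses over but implicitly resolves via its convention (stated in the caption of Table~\ref{tab1}) that $\bar{\alpha}$ is the ceiling of $\log_2 \sum_{i=0}^{\alpha}\binom{n}{i}$.
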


\paragraph{Adversaries and feedback functions}
\label{results}


In this paper we consider the following adversaries and feedback functions. Note that one could consider also other types of adversaries and feedback~functions.
\begin{definition}
\label{def:adversaries}
We define the following two adversary 
\dk{types:}
\begin{enumerate}
\item \aadv. This adversary,  whenever for some query $Q$ we have $|Q\cap K| > \alpha$, choses an arbitrary subset of at most $\alpha$ elements from set $N$ and passes this set to the feedback function. Effectively, such adversary has the power to choose an arbitrary 
 value of feedback for queries that intersect with the hidden set $K$ on more than~$\alpha$~elements.
\item \hadv.  This adversary, whenever for some query $Q$ we have $|Q\cap K| > \alpha$, choses a subset of exactly $\alpha$ elements from set $Q\cap K$ and passes this set to the feedback function.  

\begin{itemize}
\item \xaa, is a special case of \hadv that for some element $x\in N$, if $x \in K \cup Q$ and $|K\cup Q| > \alpha$ then the set chosen by the adversary does \textbf{not} contain element $x$. In other words it hides element $x$, whenever possible. 
\end{itemize}

\end{enumerate}
\end{definition}

\begin{definition}
\label{def:feedbacks}
We define the following three feedback functions:
\begin{enumerate}
\item $\feed{\alpha}(X) = (|X|\text{ mod } 2)$. It is an $(\alpha,1)$-feedback, function because the returned value can be encoded on one bit.  
\item $\fullfeed{\alpha}(X) = X$. It is an $(\alpha,\bar{\alpha})$-feedback, as any subset of $N$ of at most $\alpha$ elements can be encoded by $\bar{\alpha}$~bits.
\item $\abfeed{\alpha,\beta}(X) = \left(|X| \text{ mod } 2\right) \bigparallel \left(\bigoplus_{x \in X} \mathsf{BCC}(x)\right),$ where $\mathsf{BCC}(x)$ is an  $\left[n,\beta- 1, \left\lfloor \frac{\beta - 1}{c \log \frac{n}{k}}\right\rfloor\right]$-BCC code of element $x$, \dk{c.f., Definition~\ref{def:BCC},} and $c$ is a constant from~\cite[Lemma 2]{Censor-HillelHL15}
$\bigoplus$ denotes bitwise XOR operation and $\bigparallel$ denotes concatenation of vectors. It is an $(\alpha,\beta)$-feedback, because $BCC$ code uses $\beta-1$ bits and the remaining bit denotes the parity of $|X|$.
\end{enumerate}
\end{definition}
%


\dk{The above Definition~\ref{def:adversaries} of adversaries and the third defined feedback function in Definition~\ref{def:feedbacks} have not been considered in the \grouptesting literature, to the best of our knowledge. 
We will derive upper bounds under the strongest of the defined adversaries, \aadv, while we also prove nearly matching lower bound(s) that holds also under the weaker \xaa; thus, the power of the adversary does not have a substantial impact on the query complexity of \grouptesting.

The next observation specifies useful criteria for the analysis of algorithms against \aadv, which we will apply in all our proofs of upper bounds.}

\begin{proposition}
\label{fct:technique}
Fix any $n,k$. If for query sequence $\mathcal{Q}^{n,k}= \left\langle Q_{i}\right\rangle_{i=1}^{t}$ we have that for any $K_1,K_2 \subset N$, with $|K_1|, |K_2| \leq \alpha$ and $K_1 \neq K_2$: 
\[
\bigexists_{\tau} |Q_{\tau} \cap K_1| \leq \alpha  \wedge  |Q_{\tau} \cap K_2| \leq \alpha \wedge \gfeed(Q_{\tau} \cap K_1) \neq \gfeed(Q_{\tau} \cap K_2) \ ,
\]
then $\mathcal{Q}^{n,k}$ solves \gt under \aadv.
\end{proposition}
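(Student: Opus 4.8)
The plan is to check directly the definition of ``solving \gt'' given in Definition~\ref{def:framework}, point~6: for every pair of distinct sets $K_1 \neq K_2$ with $|K_1|,|K_2|\le\alpha$, the family of feedback vectors reachable under some \aadv strategy for $K_1$ must be disjoint from the corresponding family for $K_2$. I would prove disjointness by producing, for each such pair, a single coordinate of the feedback vector at which the two families can never coincide, regardless of which adversarial strategies are used on the two sides.

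Concretely, I would fix $K_1 \neq K_2$ and apply the hypothesis to obtain an index $\tau$ with $|Q_\tau\cap K_1|\le\alpha$, $|Q_\tau\cap K_2|\le\alpha$, and $\gfeed(Q_\tau\cap K_1)\neq\gfeed(Q_\tau\cap K_2)$. The crucial structural fact is that an adversary is granted freedom only on over-capacity intersections: the third constraint in Definition~\ref{def:framework}, point~2, forces $\adv(X,i\mid\cQ,K)=X$ whenever $|X|\le\alpha$, and this holds for \emph{every} admissible strategy, in particular for the \aadv. Instantiating it at step $\tau$ shows that the $\tau$-th entry of $\mathcal{F}(K_1,\adv)$ equals $\gfeed(Q_\tau\cap K_1)$ for every $\adv\in\mathcal{S}_{adv}(\cQ,K_1)$, and the $\tau$-th entry of $\mathcal{F}(K_2,\adv)$ equals $\gfeed(Q_\tau\cap K_2)$ for every $\adv\in\mathcal{S}_{adv}(\cQ,K_2)$. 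Since these two fixed values differ, every vector in the first family disagrees with every vector in the second at coordinate $\tau$, so the families are disjoint; as $K_1,K_2$ were arbitrary, this is exactly Definition~\ref{def:framework}, point~6.

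The only delicate point is the handling of quantifiers: the required disjointness ranges over \emph{all} pairs of strategies, so I must ensure the coordinate-$\tau$ value is genuinely strategy-independent rather than merely equal for one convenient pair. This is precisely what the identity constraint for within-capacity inputs delivers, and it is the reason the hypothesis must supply a coordinate $\tau$ at which \emph{both} intersections stay within $\alpha$: at a coordinate where one intersection exceeds $\alpha$, the \aadv could in principle steer the feedback to manufacture a collision, so such coordinates cannot certify disjointness. Note also that the argument never uses the specific behaviour of the \aadv on large intersections --- only the universal identity constraint --- so the same proof simultaneously covers every weaker adversary.
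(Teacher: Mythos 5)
Your proposal is correct and follows essentially the same route as the paper's own proof: both arguments invoke the constraint $\adv(X,i\mid\cQ,K)=X$ for $|X|\le\alpha$ to show that the $\tau$-th coordinate of the feedback vector is strategy-independent for each of $K_1,K_2$, and then conclude disjointness of the two families of feedback vectors from $\gfeed(Q_\tau\cap K_1)\neq\gfeed(Q_\tau\cap K_2)$. Your explicit remarks on the quantifier over all pairs of strategies and on the argument extending to weaker adversaries are sound elaborations of what the paper's proof does implicitly.
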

In the following we will say that a query $Q_{\tau}$ \emph{distinguishes} sets $K_1$  and $K_2$ under some feedback function $\gfeed$ if $|Q_{\tau}\cap K_1| \leq \alpha$, $|Q_{\tau}\cap K_2|\leq \alpha$ and $\gfeed(Q_{\tau} \cap K_1) \neq \gfeed(Q \cap K_2)$.
\subsection{Technical results}
\label{sec:technical-results}
\paragraph{Binary feedback} First, we consider feedbacks with minimum possible expressiveness, namely, 
returning
only one bit of information. In this setting we have to answer the question of \emph{What is the most useful bit of information about a set of elements?} It turns out that a parity bit allows us to obtain an efficient solution in the family of $(\alpha,1)$-feedbacks. Interestingly, this result, and all our other upper bounds, hold for the strongest adversary.
\begin{theorem}
\label{thm:binary}
Under $\feed{\alpha}$ feedback and under \aadv, there exists a \dk{deterministic} solution to \gt with query complexity $O\left(\left(k +\frac{k^2}{\alpha}\right) \cdot \log \frac{n}{k}\right)$.
\end{theorem}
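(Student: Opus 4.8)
The plan is to invoke Proposition~\ref{fct:technique}: it suffices to exhibit a single query sequence $\mathcal{Q}$ of length $O\!\left((k+\tfrac{k^2}{\alpha})\log\tfrac nk\right)$ such that every pair of distinct candidate sets is distinguished by some within-capacity query. Fix $K_1\neq K_2$ with $|K_1|,|K_2|\le k$, and write $U=K_1\cup K_2$ (so $|U|\le 2k$) and $D=K_1\symdiff K_2\neq\emptyset$. Under $\feed{\alpha}$, if a query $Q$ satisfies $|Q\cap U|\le\alpha$ then $|Q\cap K_1|,|Q\cap K_2|\le\alpha$, so the adversary is forced to pass the true intersections and the two feedbacks are the true parities; these differ exactly when $|Q\cap D|$ is odd. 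Hence it is enough to guarantee: for every such $U$ and nonempty $D\subseteq U$, some query $Q\in\mathcal{Q}$ has $|Q\cap U|\le\alpha$ and $|Q\cap D|$ odd. I also record that since $\alpha\le k$ we have $k\le k^2/\alpha$, so the target length is really $\Theta\!\left(\tfrac{k^2}{\alpha}\log\tfrac nk\right)$.

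First I would build $\mathcal{Q}$ probabilistically: take $m=\Theta\!\left(\tfrac{k^2}{\alpha}\log\tfrac nk\right)$ independent random queries, each formed by including every element of $N$ independently with probability $p=\Theta(\alpha/k)$, the constant chosen so that $p\le 1/2$ and $\E{|Q\cap U|}=p|U|\le \alpha/8$. The central estimate is that for each fixed pair $(U,D)$ a single such query is ``good'' (light on $U$ and odd on $D$) with probability $\Omega(\alpha/k)$. The key identity is $\Pr{|Q\cap D|\text{ is odd}}=\tfrac12\bigl(1-(1-2p)^{|D|}\bigr)$, obtained from $\E{(-1)^{|Q\cap D|}}=\prod_{x\in D}\bigl((1-p)-p\bigr)=(1-2p)^{|D|}$; lightness is controlled by Markov, $\Pr{|Q\cap U|>\alpha}\le \E{|Q\cap U|}/\alpha\le 1/8$.

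The hard part, and the step I would spend the most care on, is lower-bounding the joint probability of ``light and odd'' \emph{uniformly over all $|D|$}, because the odd event and the capacity event both involve $Q\cap U$ and are not independent. I would split into two regimes according to $p\,|D|$. When $p|D|\le 1$, I bound the odd event from below by the single-hit event $|Q\cap D|=1$, which depends only on $Q\cap D$ and is therefore independent of $Q\cap(U\setminus D)$; this factorizes as $\Pr{|Q\cap D|=1}\cdot\Pr{|Q\cap(U\setminus D)|\le\alpha-1}=\Omega(p|D|)\cdot\Omega(1)=\Omega(\alpha/k)$, using $(1-p)^{|D|-1}\ge e^{-2}$ and a second Markov bound. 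When $p|D|>1$, the identity gives $\Pr{|Q\cap D|\text{ is odd}}\ge \tfrac12(1-e^{-2})$, a constant, so the crude subtraction bound $\Pr{\text{odd and light}}\ge\Pr{\text{odd}}-\Pr{\text{not light}}\ge \tfrac12(1-e^{-2})-\tfrac18=\Omega(1)$ already suffices. (The earlier, tempting mistake is to use this subtraction bound for small $|D|$, where it is vacuous because $\Pr{\text{odd}}\approx p|D|$ is comparable to $\Pr{\text{not light}}$; the single-hit argument is what rescues that regime.) In both regimes the per-pair success probability is $\Omega(\alpha/k)$.

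Finally I would close with a union bound. The number of pairs $(K_1,K_2)$ is at most $\bigl(\binom{n}{\le k}\bigr)^2$, whose logarithm is $O(k\log\tfrac nk)$. Since each of the $m$ independent queries fails a fixed pair with probability at most $1-\Omega(\alpha/k)$, taking $m=\Theta\!\left(\tfrac k\alpha\cdot k\log\tfrac nk\right)$ drives the expected number of undistinguished pairs strictly below $1$; hence a deterministic sequence $\mathcal{Q}$ of this length distinguishing all pairs exists. By Proposition~\ref{fct:technique} it solves \gt under \aadv, yielding query complexity $O\!\left((k+\tfrac{k^2}{\alpha})\log\tfrac nk\right)$. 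Note that a probabilistic existence argument is exactly what the statement demands, since a non-adaptive deterministic solution is simply a fixed query sequence, so no explicit construction or derandomization is needed for the theorem as stated.
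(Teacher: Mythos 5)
Your proof is correct, and the per-query estimate at its core coincides with the ``Separation Property'' the paper proves inside Lemma~\ref{lem:telescope}: Bernoulli queries of density $\Theta(\alpha/k)$, the parity identity of Fact~\ref{clm:oddeven}, and---in the hard regime---exactly the paper's rescue move of lower-bounding the odd event by the single-hit event $|Q\cap D|=1$ and factoring it off the capacity event by disjointness (your $U\setminus D$ is literally the paper's $K_1\cap K_2$, so the factorization is identical). Where you genuinely diverge is the global architecture. The paper proves Lemma~\ref{lem:telescope} only for pairs with $k/2\le |K_1|\le k$, adds Lemma~\ref{lem:telescope2} (density-$1/2$ queries) for sets of size at most $\alpha$, and then concatenates dyadically rescaled copies of these sequences in Lemma~\ref{lem:binarydelta} before specializing to $\delta=1$; you instead run a single scale and one union bound over all pairs of size at most $k$, observing---correctly---that the $\Omega(\alpha/k)$ per-query success probability holds uniformly without any size restriction, because your case split is keyed on $p|D|\lessgtr 1$ rather than the paper's split on whether $\alpha\ge\log\frac{256k}{7\alpha\delta}$, and Markov suffices in both of your branches where the paper invokes a Chernoff bound in its large-$\alpha$ branch. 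This collapses the paper's three lemmas into one direct argument and is the leaner route to Theorem~\ref{thm:binary} as stated; your closing remark that a probabilistic existence proof needs no derandomization step is also how the paper itself concludes. What the paper's heavier structure buys is the $\delta$-parametrized Lemma~\ref{lem:binarydelta}, which is reused verbatim with $\delta=\beta'$ in the proof of Theorem~\ref{thm:generalupper} and underlies the randomized counterparts (Corollaries~\ref{cor:random1} and~\ref{cor:binary_random}); note, though, that your argument parametrizes by $\delta$ essentially for free (regime one already yields $\Omega(p|D|)\ge\Omega(\alpha\delta/k)$ when $|D|\ge\delta$), so nothing substantive is lost. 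One cosmetic caveat: for $k$ close to $n$ the quantity $\log(n/k)$ degenerates, so your query count and pair-count bookkeeping should be read with $\log(cn/k)$ for a constant $c>1$, as the paper does by working with $\log(4n/k)$; this is a convention shared with the theorem statement, not a gap.
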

The proof is based on derandomization of random queries drawn from different random distribution, after proving
that these queries satisfy a certain Separation Property (formulated and proved in Lemma~\ref{lem:telescope}).

\paragraph{Full feedback} 
Our second result is in the setting with
the 
maximum possible expressiveness $\beta=\bar{\alpha}=\Theta(\alpha\log(n/\alpha))$, i.e., sufficient to return all identifiers of any set of size at most $\alpha$. 
We show that maximum expressiveness allows to design algorithms with small query complexity $O\left(\min\left\{\frac{n}{\alpha},\frac{k^2}{\alpha^2} \cdot \log^c n\right\}\right)$ for some $c\in [1,2]$, \dk{more precisely:} 
\begin{theorem}
\label{thm:fullupper}
Under $\fullfeed{\alpha}$ feedback and under \aadv, 
there exists a \dk{deterministic} solution to \gt with query complexity: 
\begin{align}\nonumber
& O\left(\min\left\{\frac{n}{\alpha},\frac{k^2}{\alpha^2} \cdot \log n\right\}\right) & \text{if } \alpha > 18 \log k, \\\nonumber
& O\left(\min\left\{\frac{n}{\alpha},\frac{k^2}{\alpha} \cdot \log\frac{n}{k}\right\}\right) & \text{otherwise}.
\end{align}
\end{theorem}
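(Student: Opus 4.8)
The plan is to invoke Proposition~\ref{fct:technique}, which reduces solving \gt under \aadv\ to a purely combinatorial coverage condition on the query family: it suffices that for every pair of distinct sets $K_1,K_2$ with $|K_1|,|K_2|\le k$ there is a query $Q_\tau$ with $|Q_\tau\cap K_1|\le\alpha$, $|Q_\tau\cap K_2|\le\alpha$ and $\fullfeed{\alpha}(Q_\tau\cap K_1)\neq\fullfeed{\alpha}(Q_\tau\cap K_2)$. Since $\fullfeed{\alpha}$ returns the intersection verbatim, the last condition is simply $Q_\tau\cap K_1\neq Q_\tau\cap K_2$, i.e.\ $Q_\tau$ contains an element of $K_1\symdiff K_2$. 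Writing $S=K_1\cup K_2$ (so $|S|\le 2k$) and picking any $x\in K_1\symdiff K_2\subseteq S$, a \emph{sufficient} condition is that $x\in Q_\tau$ and $|Q_\tau\cap S|\le\alpha$, which forces both $|Q_\tau\cap K_i|\le\alpha$. Thus I would prove the theorem by constructing a family $\cQ$ that \emph{isolates} every element of every $2k$-subset: for all $S$ with $|S|\le 2k$ and all $x\in S$ there is $Q\in\cQ$ with $x\in Q$ and $|Q\cap S|\le\alpha$. The three terms in the bound correspond to three constructions, and the final sequence is the shorter of the partition-based one and the randomized one.

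The $n/\alpha$ term is immediate: partition $N$ into $\lceil n/\alpha\rceil$ blocks of size $\alpha$ and query each block; every intersection has size $\le\alpha$, so each query returns its intersection with $K$ exactly and the blocks jointly reveal $K$. For the regime $\alpha\le 18\log k$ I would use single-element isolation: draw $t$ queries independently, each including every element of $N$ independently with probability $p=\Theta(\alpha/k)$. For a fixed pair $(S,x)$ with $|S|\le 2k$, conditioned on $x\in Q$ the quantity $|Q\cap(S\setminus\{x\})|$ is $\mathrm{Bin}(|S|-1,p)$ with mean $\le\alpha/2$, so by Markov's inequality $|Q\cap S|\le\alpha$ with probability $\ge 1/2$; hence one query isolates $(S,x)$ with probability $\ge p/2=\Omega(\alpha/k)$. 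The probability that none of the $t$ queries isolates $(S,x)$ is at most $(1-\Omega(\alpha/k))^t$, and a union bound over the $\le\binom{n}{2k}\cdot 2k$ pairs $(S,x)$ — whose logarithm is $O(k\log(n/k))$ — pushes the total failure probability below $1$ as soon as $t=\Theta\!\big(\frac{k^2}{\alpha}\log\frac{n}{k}\big)$. Since the union bound is strictly below $1$, a \emph{fixed} (non-adaptive, deterministic) good sequence exists, which is all the theorem asks for.

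The regime $\alpha>18\log k$ needs the extra factor $\alpha$ of savings, which the single-element argument provably cannot give because its per-query success is capped by $p\le O(\alpha/k)$. The improvement comes from exploiting that one query with $|Q\cap S|\le\alpha$ reveals \emph{all} (up to $\alpha$) of its intersecting elements at once. I would therefore switch to a bucketing construction: take $r$ independent random hash functions $h_1,\dots,h_r:N\to[g]$ with $g=\Theta(k/\alpha)$ buckets, and let the queries be the $rg$ preimages $h_j^{-1}(b)$. For a fixed $S$ and a fixed hash, the load of $x$'s bucket is $1+\mathrm{Bin}(|S|-1,1/g)$ with mean $\Theta(\alpha)$; the assumption $\alpha>18\log k$ is exactly what lets a Chernoff bound give bucket-overload probability $e^{-\Omega(\alpha)}$, small enough to beat the union bound. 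Over the $r$ hashes, $x$ fails to be isolated with probability $e^{-\Omega(\alpha r)}$, so a union bound over $x\in S$ and over all $S$ (logarithm $O(k\log(n/k))$) closes once $\alpha r=\Omega(k\log n)$, i.e.\ $r=\Theta(\frac{k}{\alpha}\log n)$, giving $t=rg=\Theta(\frac{k^2}{\alpha^2}\log n)$; again failure probability below $1$ yields a deterministic sequence.

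The main obstacle I anticipate is the large-$\alpha$ analysis: one must verify that the Chernoff concentration, which hinges on the threshold $\alpha>18\log k$, simultaneously (i) keeps each bucket light enough that full feedback reveals its entire intersection, and (ii) survives the union bound over all $\binom{n}{2k}$ witness sets and all their elements — it is the interaction between the bucket count $g=\Theta(k/\alpha)$ and the repetition count $r$ that produces precisely the $\alpha^2$ in the denominator, and reconciling the constants in the Chernoff estimate with the stated threshold $18\log k$ is the delicate part. A secondary point to check is that requiring $|Q\cap S|\le\alpha$ (rather than controlling $|Q\cap K_1|$ and $|Q\cap K_2|$ separately) is genuinely sufficient; this is the step that collapses the two-set distinguishing requirement into a single coverage statement about $2k$-subsets, and it is what makes Proposition~\ref{fct:technique} applicable against the strongest adversary \aadv.
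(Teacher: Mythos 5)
Your proposal is correct, and its skeleton coincides with the paper's: the same reduction via Proposition~\ref{fct:technique} to an $\alpha$-isolation statement for unions $S=K_1\cup K_2$ of size at most $2k$ (the paper makes exactly this move, invoking its isolation lemma, Lemma~\ref{lem:fullupper}, with parameters $n,2k$), and the same trivial partition of $N$ into blocks of size $\alpha$ for the $n/\alpha$ term. Where you genuinely diverge is in how the two remaining bounds are established. For $\alpha>18\log k$, the paper proves Lemma~\ref{lem:fullupper} with i.i.d.\ random queries (inclusion probability $\alpha/(6k)$) and a two-step argument: first, every element lies in $\Omega\bigl(\frac{k}{\alpha}\log n\bigr)$ queries; second, for every $K$ and every set $T$ of that many query indices, some query in $T$ meets $K$ in at most $\alpha$ elements -- and it is the union bound over the index sets $T$, contributing roughly $\frac{k}{\alpha}\log n\cdot\log\frac{k}{\alpha}$, that forces the $\alpha=\Omega(\log k)$ threshold. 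Your bucketing construction ($r=\Theta\bigl(\frac{k}{\alpha}\log n\bigr)$ independent hashes into $g=\Theta(k/\alpha)$ buckets, per-pair Chernoff failure $e^{-\Omega(\alpha)}$ per hash, union bound only over the $(S,x)$ pairs) is more direct and in fact never needs the threshold: the Chernoff tail $e^{-\Omega(\alpha)}$ holds for all $\alpha\ge 1$, so your attribution of the $18\log k$ condition to the concentration step is a misdiagnosis rather than a gap -- your argument would actually prove the isolation lemma, and hence the bound $O\bigl(\frac{k^2}{\alpha^2}\log\frac{n}{k}\bigr)$, for the whole range of $\alpha$, slightly strengthening the stated theorem. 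For the small-$\alpha$ regime, the paper simply reuses the sequence of Theorem~\ref{thm:binary}, observing that a parity difference under $\feed{\alpha}$ implies $Q\cap K_1\neq Q\cap K_2$ and hence a difference under $\fullfeed{\alpha}$; your fresh single-element isolation argument with $p=\Theta(\alpha/k)$ and Markov's inequality reaches the same $O\bigl(\frac{k^2}{\alpha}\log\frac{n}{k}\bigr)$ bound self-containedly, at the cost of not exploiting the already-proved parity machinery. Both of your substitutions are sound; the trade-off is that your route is technically lighter and threshold-free, while the paper's buys modularity (Lemma~\ref{lem:fullupper} and Theorem~\ref{thm:binary} are reused elsewhere, e.g., in the proof of Theorem~\ref{thm:generalupper}).
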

The proof is via derandomization of a random sequence of queries $\mathcal{Q}$, from which we require 
to simultaneously satisfy two conditions: on the number of queries containing a specific element, and on the sizes of the intersections of queries from any subset of $\mathcal{Q}$ of certain size and any possible instantiation of set $K$.

Interestingly, for $\alpha =\omega(\sqrt{k\log n})$, the obtained query complexity is sublinear in $k$. This can be contrasted with an $\Omega(k \log(n/k))$ lower bound for classical \grouptesting{} \dk{(i.e., for $\alpha=O(1)$)} that holds also for any randomized algorithm working with non-vanishing probability~\cite{coja2020information}. \dk{This proves the impact of feedback capacity on query complexity.}

\paragraph{General feedback} 
After considering both extreme values of $\beta$ we study the general case, where a feedback needs to work for an arbitrary $1\le \beta\le \bar{\alpha}$. In this case our first contribution is a design of  a more sophisticated general feedback function $\abfeed{\alpha,\beta}$, 
c.f., Definition~\ref{def:feedbacks}, which works for almost any $\alpha, \beta$. Our proposed feedback is a concatenation of a specific code (called BCC code) with an additional parity bit. Under this feedback we obtain the main result of the paper:
\begin{theorem}
\label{thm:generalupper}
Under $\abfeed{\alpha,\beta}$ feedback and under 
\dk{\aadv,}
there exists a \dk{deterministic} solution to \gt with query complexity $O\left(\frac{k^2}{\alpha\beta}\log^{c+1} n\right)$ for some $c\in [1,2]$,
\dk{more precisely:} 
\begin{align}\nonumber
& O\left(\frac{k^2}{\alpha\beta}\log n\left(\frac{\beta}{\alpha} + \log n\right)\right) & \text{if } \alpha > 18 \log k, \\\nonumber
& O\left(\frac{k^2}{\alpha} \cdot \log\frac{n}{k}\right) & \text{otherwise}.
\end{align}

\end{theorem}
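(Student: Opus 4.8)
The plan is to invoke Proposition~\ref{fct:technique}: it suffices to produce, for every pair of distinct candidates $K_1,K_2$ with $|K_1|,|K_2|\le k$, a single query $Q$ with $|Q\cap K_1|,|Q\cap K_2|\le\alpha$ that yields different feedbacks. Writing $D=K_1\symdiff K_2\neq\emptyset$, the two feedbacks coincide exactly when $|Q\cap D|$ is even \emph{and} $\bigoplus_{x\in Q\cap D}\mathsf{BCC}(x)=\mathbf 0$; hence $Q$ distinguishes the pair whenever $|Q\cap D|$ is odd (caught by the parity bit of $\abfeed{\alpha,\beta}$) or, more usefully, whenever $1\le|Q\cap D|\le d$ for $d=\lfloor(\beta-1)/(c\log(n/k))\rfloor$, since the defining distance property of the $\mathsf{BCC}$ code (Lemma~2 of~\cite{Censor-HillelHL15}) guarantees that the XOR of any nonempty collection of at most $d$ codewords is nonzero. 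Throughout I would draw queries by including each element independently with probability $p=\Theta(\alpha/k)$ (and smaller values, see below); since $\alpha>18\log k$, a Chernoff bound gives $\Pr{|Q\cap K|>\alpha}\le e^{-\Theta(\alpha)}$ for every $K$ of size at most $k$, which is precisely where the hypothesis $\alpha>18\log k$ enters and why the complementary regime must be handled differently.

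The dominant cost comes from pairs with a \emph{small} symmetric difference. Here I would use the capacity-limited density $p=\Theta(\alpha/k)$ and the decoupled, two-condition construction already used for Theorems~\ref{thm:binary} and~\ref{thm:fullupper}. The first condition is structural: each element is forced to lie in $r=\Theta\!\bigl(\frac{k}{d}\log n\bigr)$ of the queries; this lets me treat the event ``$y\in Q$'' deterministically for a chosen $y\in D$ and is what defeats the naive union bound over the $\sim n^{2k}$ pairs. The second condition controls the intersection sizes so that, conditioned on $y\in Q$, one has $1\le|Q\cap D|\le d$ with probability $1-e^{-\Theta(d)}$ (the conditional mean of $|Q\cap D|$ stays below $d/2$) and $|Q\cap K_i|\le\alpha$ with probability $1-e^{-\Theta(\alpha)}$. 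Thus each of the $r$ queries through $y$ distinguishes the pair with probability $1-e^{-\Theta(d)}$, the probability that none of them does is at most $e^{-\Theta(d)\,r}=n^{-\Theta(k)}$, and a union bound over pairs closes once $r=\Theta\!\bigl(\frac{k}{d}\log n\bigr)$. As each query has density $\Theta(\alpha/k)$, this family has length $\Theta(r\cdot k/\alpha)=\Theta\!\bigl(\frac{k^2}{\alpha d}\log n\bigr)=\Theta\!\bigl(\frac{k^2}{\alpha\beta}\log n\log\tfrac{n}{k}\bigr)$, which is the dominant (second) term of the claimed bound; the additive $\frac{k^2}{\alpha^2}\log n$ is the lower-order capacity-control contribution inherited from Theorem~\ref{thm:fullupper}.

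Pairs with a \emph{large} symmetric difference are cheaper and, importantly, do not require the code to behave well on large XORs. For each dyadic scale $\delta\approx|D|$ exceeding $\Theta(dk/\alpha)$ I would instead use the smaller density $p_\delta=\Theta(d/\delta)$, so that $|Q\cap D|$ concentrates at $\Theta(d)$: a single such query then has $1\le|Q\cap D|\le d$ (hence distinguishes, via the code) with probability $1-e^{-\Theta(d)}$, while its intersection with each $K_i$ stays comfortably below $\alpha$. Because this per-query success probability is so close to $1$, a \emph{direct} union bound over the pairs of a given scale already suffices with only $O\!\bigl(\frac{k}{d}\log n\bigr)$ queries per scale, and summing over the $O(\log k)$ dyadic scales keeps the total within the claimed bound. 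Assembling the families and derandomizing the two simultaneous conditions by the method of conditional expectations yields the deterministic sequence. Finally, in the complementary regime $\alpha\le 18\log k$, where Chernoff no longer delivers usable capacity concentration, I would discard the code entirely and keep only the parity bit of $\abfeed{\alpha,\beta}$, which is exactly $\feed{\alpha}$; Theorem~\ref{thm:binary} then gives $O\!\bigl((k+\tfrac{k^2}{\alpha})\log\tfrac{n}{k}\bigr)=O\!\bigl(\tfrac{k^2}{\alpha}\log\tfrac{n}{k}\bigr)$, matching the stated ``otherwise'' bound.

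The step I expect to be the main obstacle is making the two randomized conditions hold \emph{simultaneously} and then derandomizing them: one must guarantee the prescribed per-element query multiplicity while keeping every relevant intersection $|Q\cap K|$ below $\alpha$ and every $|Q\cap D|$ below $d$, and the conditional-expectation argument has to track both a multiplicity potential and an intersection-size potential at once. The delicate quantitative point inside this is the uniform bound $1-e^{-\Theta(d)}$ on the per-query distinguishing probability across \emph{all} difference sizes, which is exactly what lets the single capacity-limited family absorb the expressiveness limitation $\beta$ (through $d$) and is what distinguishes $\abfeed{\alpha,\beta}$ from the full feedback $\fullfeed{\alpha}$.
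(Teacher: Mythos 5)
Your distinguishing criterion is exactly right (with $D=K_1\symdiff K_2$, the feedbacks differ iff $|Q\cap D|$ is odd or $\bigoplus_{x\in Q\cap D}\bcc(x)\neq \mathbf{0}$), your small-$\alpha$ fallback to Theorem~\ref{thm:binary} is the paper's, and your capacity-density family is close in spirit to Lemma~\ref{lem:fullupper}. But your case split is genuinely different from the paper's, and it is worth seeing why the paper's is simpler. The paper splits at $|D|<\beta'$ versus $|D|\ge\beta'$, where $\beta'=\Theta(\beta/\log n)$. Pairs with $|D|\ge\beta'$ are handled \emph{purely by the parity bit}, via Lemma~\ref{lem:binarydelta} instantiated with $\delta=\beta'$; there the dyadic concatenation is over the size of the larger set $K_1$ (with density tuned to that size), so the per-scale costs shrink geometrically and telescope to $O\bigl((k+\frac{k^2}{\alpha\beta'})\log\frac{n}{k}\bigr)$. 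Pairs with $|D|<\beta'$ need \emph{no window condition at all}: the sets $K_1'=(K_1\setminus K_2)\cap Q$ and $K_2'=(K_2\setminus K_1)\cap Q$ are subsets of $D$, hence automatically of size $<\beta'$, so it suffices to isolate one $s\in D$ in a query with $|Q\cap(K_1\cup K_2)|\le\alpha$ --- which is exactly Lemma~\ref{lem:fullupper} applied with parameters $2k,n$ to $K=K_1\cup K_2$ --- and the BCC property finishes after cancelling the common part $T=K_1\cap K_2\cap Q$. This removes the ``uniform $1-e^{-\Theta(d)}$ window probability'' that you yourself flag as the main obstacle, and the derandomization reduces to lemmas already proved.

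Two concrete problems with your version as sketched. First, your large-difference families (density $p_\delta=\Theta(d/\delta)$ and $O(\frac{k}{d}\log n)$ queries per dyadic scale of $|D|$) have \emph{flat} per-scale cost, so summing over $\Theta(\log k)$ scales gives $O(\frac{k}{d}\log n\log k)$; in the corner $\alpha=\Omega(k/\log k)$ --- e.g., $\alpha=k$ and $\beta=\Theta(\log n)$, so $d=\Theta(1)$ --- this is $\Theta(k\log n\log k)$, exceeding the displayed bound, which there evaluates to $\Theta(k\log n)$, by a $\log k$ factor. The fix is the paper's: parity is already part of your criterion, and Lemma~\ref{lem:binarydelta} handles all $|D|\ge\beta'$ at once with telescoping costs. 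Second, in your small-difference family, if you certify the window-plus-capacity events the way the paper does inside Lemma~\ref{lem:fullupper} (a union bound over all index sets $T$ of size $r$), the per-query failure $e^{-\Theta(d)}$ must also absorb $\binom{t}{r}\approx e^{r\log(k/\alpha)}$, which fails whenever $d=O(\log(k/\alpha))$, e.g., for constant $d$; to rescue it you would have to condition on the (independent) membership pattern of the chosen $y\in D$ rather than union over $T$, which is workable but is precisely the simultaneity issue you defer. Both difficulties simply disappear under the paper's decomposition, since the code is invoked only for pairs that are already small.
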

Our main result shows that the query complexity decreases linearly with $\alpha$ and with $\beta$. Intuitively factor $\frac{k}{\alpha}$ in our complexity comes from \emph{congestion}, since the feedback function has capacity to serve at most $\alpha$ elements out of $k$ in a single query. The second factor $\frac{k \log\frac{n}{k}}{\beta} \approx \frac{\log {n \choose k}}{\beta}$ comes from the information-theoretic bound that we need $\log_2 {n \choose k}$ bits to uniquely encode any subset of $k$ elements and the fact that the feedback function provides only $\beta$ bits per round. 
What is surprising and challenging to prove is that the query complexity of efficient (but not all!) $(\alpha,\beta)$-feedbacks is (close to) a multiplication of these two characteristics.

The proof combines ideas from the analysis of the binary feedback and full feedback. In the binary feedback case we observe that sets that differ on many elements can be distinguished quickly using the parity feedback. On the other hand, sets that differ only on few elements are handled using a combination of full feedback algorithm with a specific coding to encapsulate the feedback into $\beta$ bits.

\paragraph{Lower bound}
We show a lower bound that proves that our upper bound shown in Theorem~\ref{thm:fullupper} is optimal up to polylogarithmic factor, for any $\alpha$.
\dk{It holds} even for a weaker adversary, \dk{\hadv, or more specifically, for its sub-type of \xaa. Thus, it also holds for the stronger \aadv, for which all our algorithms are~analyzed.}


\begin{theorem}
\label{thm:fulllower}
If $n > k^2\log n/\log k$, then any \dk{deterministic} solution to \gt under any $(\alpha,\beta)$-feedback has query complexity 
$\Omega\left(\frac{k^2}{\alpha^2} \log^{-1} k\right)$ for some \hadv.
\end{theorem}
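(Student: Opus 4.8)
The plan is to exhibit, for any query sequence that is too short, a single ``core'' set together with two candidate elements that the sequence cannot separate, even against the weak \xaa of Definition~\ref{def:adversaries}. Fix the (deterministic, non-adaptive) query sequence $\langle Q_1,\dots,Q_t\rangle$. I work with pairs of the form $K_1=S\cup\{x\}$ and $K_2=S\cup\{y\}$, where $S$ has $k-1$ elements and $x,y\notin S$ are distinct, so that $|K_1|,|K_2|\le k$.

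First, a reduction. I would analyze, query by query, when such a pair can possibly be distinguished. If $x,y\notin Q_\tau$, the two intersections with $Q_\tau$ coincide and the adversary answers identically. If $|Q_\tau\cap S|\ge\alpha$, then whenever $x$ (resp.\ $y$) lies in $Q_\tau$ the intersection exceeds the capacity, and the $x$-avoiding adversary may drop $x$ and reveal a fixed size-$\alpha$ subset of $Q_\tau\cap S$ for both sets, again equalizing the feedback. Hence the only queries that can separate $K_1$ from $K_2$ are the \emph{small} ones, namely those $Q_\tau$ with $|Q_\tau\cap S|\le\alpha-1$ that contain $x$ or $y$. Consequently, if I can find $S$ and two elements $x,y\notin S$ such that \emph{every} query containing $x$ or $y$ satisfies $|Q_\tau\cap S|\ge\alpha$ (call such elements \emph{blocked} by $S$), then the pair is indistinguishable and the sequence fails to solve \gt. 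The task thus becomes purely combinatorial: assuming $t$ is below the claimed bound, construct a core $S$ that blocks two elements.

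To build $S$ I would classify queries by size relative to two thresholds, $T_{\mathrm{thin}}=\Theta(n\alpha^2/k^2)$ and $T_{\mathrm{thick}}=\Theta(\max\{\alpha,\log k\}\cdot n/k)$, into thin, medium and thick queries. Choose $x,y$ to avoid all thin queries: since thin queries cover at most $t\cdot T_{\mathrm{thin}}$ elements, the hypothesis $t=o(k^2/(\alpha^2\log k))$ (together with $n>k^2\log n/\log k$, which keeps $k/n$ small and hence makes the sampling below concentrate) leaves a constant fraction of $N$ uncovered; a further averaging (Markov) step restricts to elements lying in only $O(k/\alpha)$ medium queries, and the two constraints together still leave at least two admissible elements. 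For such $x,y$: there are no thin queries through them; each medium query through $x$ or $y$ has size $>\alpha$, so I can \emph{dedicate} $\alpha$ elements of $S$ to it, and choosing the constant in the assumed bound on $t$ small enough keeps the total dedicated budget below $k/2$, forcing $|Q_\tau\cap S|\ge\alpha$ on every medium query through $x,y$; finally I fill the remaining $\Theta(k)$ slots of $S$ at random, so that by a Chernoff/union bound every thick query --- whose expected intersection with the random part is $\Omega(\max\{\alpha,\log k\})$ --- receives at least $\alpha$ elements with high probability. Then all three kinds of queries through $x,y$ are blocked, completing the contradiction.

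The main obstacle is the simultaneous budget/concentration balancing in the construction of $S$: the threshold $T_{\mathrm{thick}}$ must be small enough that the medium band is narrow (so the dedicated budget stays below $k/2$ and elements $x,y$ of low medium-degree exist) yet large enough that a random choice of the remaining $\Theta(k)$ elements hits every thick query in $\ge\alpha$ points after a union bound over all $t$ queries --- and it is precisely this union bound that forces the expected intersection to be $\Omega(\log k)$ in the small-$\alpha$ regime and is the source of the $\log^{-1}k$ loss. Getting both sides to close at $t=\Omega(k^2/(\alpha^2\log k))$ requires choosing the constants in the two thresholds carefully and checking that the dedicated and random parts of $S$ do not interfere. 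The remaining verifications --- that the adversary used is a legitimate \hadv (indeed its \xaa sub-type) and that a blocked pair yields equal feedback vectors under some admissible strategy for each of $K_1,K_2$ --- follow directly from the per-query case analysis and Definition~\ref{def:adversaries}.
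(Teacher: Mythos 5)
Your proposal is correct in outline but takes a genuinely different route from the paper. The paper's proof is a two-step reduction to known selector bounds: first it shows (by the same indistinguishability observation you make, applied to the pair $K$ vs.\ $K\setminus\{x\}$ under an \hadv that drops $x$) that any solving sequence must, for every $K$ and every $x\in K$, contain a query $Q\ni x$ with $|K\cap Q|\le\alpha+1$; then it intersects the sequence with an $(n,\alpha+1,\alpha+1)$-selector of size $O(\alpha^2\log n)$ from~\cite{erdos1985families}, turning any solving sequence of length $t$ into an $(n,k,k)$-selector of length $t\cdot O(\alpha^2\log n)$, which contradicts the $\Omega(k^2\log n/\log k)$ selector lower bound of~\cite{ClementiMS01} when $t=o\left(\frac{k^2}{\alpha^2\log k}\right)$ (the hypothesis $n>k^2\log n/\log k$ ensures the relevant branch of the $\min$ in that bound, and the $\log^{-1}k$ loss is inherited from it). You instead re-prove a selector-type statement from scratch: a direct probabilistic construction of a single fooling pair $S\cup\{x\}$, $S\cup\{y\}$ with all queries through $x,y$ blocked, where your $\log^{-1}k$ loss arises transparently from the Chernoff/union bound over thick queries. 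Your per-query case analysis (only queries with $|Q_\tau\cap S|\le\alpha-1$ through $x$ or $y$ can separate the pair, since at intersection size exactly $\alpha$ the honest adversary for $K_1$ can match the forced full answer for $K_2$) is sound, and the budget arithmetic closes: the dedication cost is $O(k/\alpha)\cdot\alpha$ medium queries through $x,y$, which your Markov step controls precisely because $t\le\epsilon k^2/(\alpha^2\log k)$ dominates the constraint $t\cdot T_{\mathrm{thick}}\le nk/(16\alpha)$ in both regimes of $\max\{\alpha,\log k\}$. Two details deserve explicit verification beyond what you flagged: medium queries through $x$ or $y$ must contain at least $\alpha$ elements outside $\{x,y\}$, which requires $T_{\mathrm{thin}}\ge\alpha+2$ --- this follows from $n/k^2>\log n/\log k\ge 1$ except possibly for very small $\alpha$, where you should raise the thin threshold to $\max\{T_{\mathrm{thin}},\alpha+2\}$ (the coverage budget $t\cdot(\alpha+2)=O(k^2/(\alpha\log k))=o(n)$ tolerates this); and your union bound needs per-thick-query failure probability $k^{-3}$ or so, i.e., a sufficiently large constant in $T_{\mathrm{thick}}$, since $t$ can be as large as $k^2$. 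What each approach buys: the paper's reduction is short, outsources all concentration to cited results, and (as the paper notes in Section~\ref{sec:case-study}) the selector connection is available only in this lower-bound direction; your argument is self-contained, exhibits an explicit hard instance, and localizes the source of the logarithmic loss, at the price of delicate constant balancing across three query-weight classes.
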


\dk{
The proof of Theorem~\ref{thm:fulllower} is by transformation of our generalized \grouptesting framework to selectors -- structures studied in related literature, formally defined in Section~\ref{sec:lower}. We show that if there were shorter query sequences, there would exist selectors violating some of their lower bound. This transformation is however possible only in one way, as we will show in the next result.}

\paragraph{Minimum Elements feedbacks} Our two final results show that designing an efficient feedback function is very subtle. We show that a reasonable $(\alpha,2\log n)$-feedback function that returns two minimal elements from the set leads to very large query complexity of $\Omega(\min\{n,k^2\})$ if we restrict the function to return the elements \emph{in fixed order}, c.f., Theorem~\ref{thm:lower-2min}. Without this restriction it is possible to obtain feedback function for which there exists a \dk{deterministic} algorithm with query complexity $O\left(\frac{k^2}{\alpha} \cdot \log \frac{n}{k}\right)$, 
c.f., Corollary~\ref{cor:upper-2min}.

\dk{Theorem~\ref{thm:lower-2min} with Corollary~\ref{cor:upper-2min} provide an argument that there is no universal reduction between  selectors and our general \grouptesting framework, 
as both the considered feedback functions have the same parameters $\alpha$ and $\beta$ and differ only (slightly) in the definition of the feedback function, yet having query complexities different nearly by factor $\alpha$. Thus, our general framework is provably more complex than the theory of selectors.}

\begin{table}
	\centering
	\begin{tabular}{llll}
		\toprule
		$\alpha$ & $\beta$ & Upper bound & Lower bound \\\midrule
		$1$ & $1$ & $O\left(k^2 \log\frac{n}{k}\right)$~\cite{BonisGV03} & $\Omega\left(k^2 \frac{\log n}{\log k} \right)$~\cite{ClementiMS01} \\\midrule
		$k$ & $1$ & $O(k \log \frac{n}{k})$~\cite{Censor-HillelHL15} & $\Omega(k \log \frac{n}{k})$~\cite{Censor-HillelHL15}  \\\midrule
		$k$ & $\log k$ & $O\left(k \frac{\log \frac{n}{k}}{\log k}\right)$~\cite{GrebinskiK00} &$\Omega\left(k \frac{\log \frac{n}{k}}{\log k}\right)$ ~\cite{djackov1975search, lindstrom1975determining}\\\midrule
		$*$  & $1$ & $O\left(\left(k + \frac{k^2}{\alpha}\right) \log\frac{n}{k} \right)$ Thm~\ref{thm:binary} &$\Omega\left(\frac{k^2}{\alpha^2} \log^{-1}k \right)$ Thm~\ref{thm:fulllower} \\\midrule
	$*$  & $\bar{\alpha}$ & $O\left(\min\left\{\frac{n}{\alpha},\frac{k^2}{\alpha^2} \log n  \right\}\right)$ Thm~\ref{thm:fullupper} & $\Omega\left(\frac{k^2}{\alpha^2} \log^{-1}k \right)$ Thm~\ref{thm:fulllower} \\\midrule
		$*$  & $*$ & $O\left(\frac{k^2}{\alpha\beta} \log^2 n \right)$ Thm~\ref{thm:generalupper} & $\Omega\left(\frac{k^2}{\alpha^2} \log^{-1}k \right)$ Thm~\ref{thm:fulllower}\\		\bottomrule
	\end{tabular}
	\caption{\label{tab1} Results
	on non-adaptive \gt with $(\alpha,\beta)$-feedback. The upper bound column states query complexity of the best found $(\alpha,\beta)$-feedback found for 	parameters $\alpha,\beta$ fixed in the first two columns; as we will show, not all $(\alpha,\beta)$-feedbacks could reach that complexity. Symbol $*$ stands for any valid 
value of the parameter, and $\bar{\alpha}$ stands for a ceiling of the binary logarithm of the number of all subsets of $N$ of size at most $\alpha$. We display results from Theorem~\ref{thm:fullupper} and~\ref{thm:generalupper} in regime $\alpha > 18 \log k$, however our theorems cover the whole range of $\alpha$. }
	\end{table}

Table~\ref{tab1} presents our main \dk{deterministic} results in comparison to the most related previous work on specific feedback~functions.
\dk{In this work we also analyze explicitly constructed randomized counterparts of the deterministic results.}

\dpa{
\begin{definition}
\label{def:randomized}
A randomized algorithm solves \gt against Adaptive Adversary with probability $1-c$, for some $0\leq  c < 1$, if with probability $1-c$ it generates a sequence of queries $\mathcal{Q}$ that solves \gt according to Defintion~\ref{def:framework}.
\end{definition}
Note that in Definition~\ref{def:randomized} the adversary is assumed to know sequence $\mathcal{Q}$ (see Defintion~\ref{def:framework}(3)). Hence, 
\dk{our analysis' of randomized counterparts of deterministic solutions also hold}
against \emph{Adaptive Adversary}. This is to distinguish from the case, where the adversary does not know all the queries when choosing set $K$~\cite{BonisV17, bay2020optimal}.}




\section{Motivation, previous and related work}
\label{sec:related-future}

The problem of \grouptesting (and related equivalent problems such as coin weighting) has been considered in various feedback models. In this section we present details of implementation of some classical feedback models in our framework. Our framework, with two parameters of feedback $\alpha$ and~$\beta$, allows, among others, a comparison of results in different models, for a discussion about what is the best utilization of feedback output bits, and for comparison and generalization of existing results obtained for specific feedbacks, \dk{c.f., Table~\ref{tab1}.}

\paragraph{Beeping model and shared channel communication}
Beeping feedback model is a standard model considered in most of the \grouptesting literature~\cite{duhwang}, where the feedback tells whether the intersection between query $Q$ and set $K$ is empty or not.  Solutions to \grouptesting in this feedback model have direct applications to conflict resolution on a multiple access channel and broadcast in unknown radio networks, c.f.,~\cite{ClementiMS01}.

Observe that in 
 Beeping
feedback model, the feedback returns $0$ if the intersection is empty and $1$ otherwise. Thus beeping feedback is a $(1,1)$-feedback. 

In this feedback model, the \grouptesting problem is known to be solvable using $O(k^2\log(n/k))$~\cite{BonisGV03} queries and an explicit construction of length $O(k^2 \log^2 n)$~\cite{kautz1964nonrandom} exists. Best known lower bound (for $k < \sqrt{n}$) is $\Omega(k^2 \log n/ \log k )$~\cite{ClementiMS01}. 

A related model, where the feedback equals NULL if the intersection is of size $0$, the identifier of the element, if the intersection is of size $1$ and a value COLLISION otherwise, can be seen as $(2,\log n)$-feedback. This model is applicable to communication on shared channel and has been an area of extensive research. The solutions in literature include adaptive algorithms~\cite{capetanakis1979generalized, capetanakis1979tree}, semi-oblivious algorithms where an element can deactivate after successful transmission~\cite{KomlosG85, greenberg1985lower} (see surveys~\cite{gallager1985perspective,chlebus2017randomized} for more details on results in this model).
\dk{As mentioned earlier, some of the previous works also consider non-adaptive adversarial component of the feedback, c.f.,~\cite{Bar-YehudaGI92}.}

\paragraph{Finite-field additive radio network}
In this model, the feedback to a query is a parity of the size of the intersection between set $K$ and a query. 
One can observe that using $BCC$-codes of length $O(k \log \frac{n}{k})$~\cite{Censor-HillelHL15} it is possible to design a sequence of 
 queries of the same length, that solves \gt in this model. 
This construction solves \gt with $O(k \log \frac{n}{k})$ in a $\feed{k}$ feedback model (which is an example of $(k,1)$-feedback), because by the definition of $BCC$-codes any bit-wise XOR of up to $k$ codewords is unique.  The construction of BCC codes has also been applied to solutions of standard communication problems (such as broadcast) in specific models of communication networks~\cite{Censor-HillelHL15}.

 We note that $\feed{\alpha}$ feedback for borderline value of $\alpha = k$ corresponds to the setting considered in~\cite{Censor-HillelHL15}. In this case our algorithm matches the best known upper bound, hence our proposed feedback function and our algorithm are a valid generalization, showing the smooth transition of query complexity
between settings of $\alpha = \log k$ and $\alpha = k$ in a pace inversely proportional to the feedback capacity $\alpha$.

 \paragraph{Coin weighting}
 The problem of coin weighting is exactly the \grouptesting problem with a different feedback. In the coin weighting problem, we have a set of $n$ coins of two distinct weights $w_0$ (true coin) and $w_1$ (counterfeit coin), out of which up to $k$ are counterfeit ones. We are allowed to weight any subset of coins in a spring scale, hence we can deduce the number of counterfeit coins in each weighting. The task is to identify all the counterfeit coins. 
 
 The coin weighting can be implemented in our framework as a $(k,\log k)$-feedback, where the feedback returns the size of the intersection between the query and the set $K$. The problem is solvable with $O(k\log (n/k)/ \log k)$~\cite{GrebinskiK00} queries. 
 
Bounds for both $BCC$ codes and non-adaptive coin weighting are tight, thus increasing the number of output bits from $1$ to $\log k$ results in decrease in query complexity by a factor of $\log k$.

\paragraph{Threshold \grouptesting}
In this variant of \grouptesting introduced in~\cite{DAMASCHKE}, a number of thresholds $0 < t_1 \leq t_2 \leq \dots \leq t_s$ are defined. Thresholds divide the set $[k]$ into set of discrete intervals $[0,t_1), [t_1,t_2),\dots, [t_{s-1},t_{s}),[t_s,k]$. The feedback to query $Q$ is the index of the interval to which $|K \cap Q|$ belongs. This feedback can be implemented as a $(t_s + 1, \log s)$-feedback. An upper bound for a single threshold $t$ of approximately $O(\frac{k^2}{\sqrt{t}} \log\frac{n}{k})$~\cite{MarcoJKRS20} suggests that single threshold feedback is probably not the optimal feedback (according to our parameters) since we know that $(t,1)$-feedbacks can lead to query complexity $O(\frac{k^2}{t} \log\frac{n}{k})$. On the other hand, in~\cite{MarcoJK19} the authors analyze a feedback with $\sqrt{k \log k}$ thresholds out of which maximum threshold is $\Omega(k)$, which in our framework translates to a $(k, \log k)$-feedback. Result in~\cite{MarcoJK19} is an algorithm with query complexity $O(\frac{k}{\log k} \cdot \log \frac{n}{k})$, which is \dk{
logarithmically far from $O\left(\frac{k}{\log k} \log^2 n\right)$ obtained from our generic upper bound $O\left(\frac{k^2}{\alpha \beta} \log^2 n\right)$
in Theorem~\ref{thm:generalupper} instantiated for $\alpha = k$, $\beta = \log k$.} 

\paragraph{Other related results}
The problem of \grouptesting has been recently discussed from different perspectives. Some papers consider  different models of generating (or constraining) the subset $K$. This may lead to critically different optimal strategies, even for non-adaptive settings.  In~\cite{Aldridge2019} the author considers the model, wherein each element is included in $K$ with a fixed probability $p$ -- we need $\Omega(n)$ tests to have error probability tending to zero. Somehow related randomized model has been discussed in~\cite{BonisV17}, wherein the algorithm may fail on a small fraction of inputs. 
In \cite{Inan2020} the authors consider ``sparse'' \grouptesting, where the size of each query is limited. They also consider settings wherein each element can be included in a limited number of queries.

\section{Upper bounds}\label{sec:upper}

\subsection{Minimal expressiveness -- Binary feedback}
\label{sec:binary}

We first show (Lemma~\ref{lem:telescope}) an upper bound on length of a sequence that distinguishes any pair of sets satisfying a certain size restriction. 
This length is inversely proportional to the product of capacity $\alpha$ and the lower bound on the size of the symmetric difference between the sets, denoted by $\delta$.
This proof is based on analyzing a certain Separation Property of a sequence of random queries drawn from specific
probabilistic distribution, and showing that it yields distinguishing between two sets $K_1,K_2$ with a large probability, sufficient
to derandomize it.
In the second step (Lemma~\ref{lem:binarydelta}), we show how to remove the size restrictions from the result.
Finally, Theorem~\ref{thm:binary} will follow directly from Lemma~\ref{lem:binarydelta} applied for $\delta=1$.

In Lemma~\ref{lem:binarydelta} we will need the following notation and basic facts.
\paragraph{Basic notation and tools}
We will use the following notation for the symmetric difference 
of 
two sets $A \symdiff B = (A \setminus B) \cup (B \setminus A)$. In our proofs, we also use the following two elementary~facts:
\begin{fact}
\label{clm:oddeven}
Let $X \sim \mathsf{Binomial}(n,p)$, then $\Pr{X \text{ is odd}} = \frac{1}{2} - \frac{1}{2} (1-2p)^n$.
\end{fact}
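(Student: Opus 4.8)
The plan is to exploit the observation that parity is captured cleanly by the quantity $(-1)^X$, whose expectation factorizes. First I would write $X = \sum_{i=1}^n B_i$ as a sum of $n$ independent $\mathsf{Bernoulli}(p)$ random variables, so that $(-1)^X = \prod_{i=1}^n (-1)^{B_i}$. For each single indicator, a direct computation gives $\E{(-1)^{B_i}} = (1-p)\cdot 1 + p \cdot (-1) = 1 - 2p$, and by independence of the $B_i$ the expectation of the product is the product of the expectations, hence $\E{(-1)^X} = (1-2p)^n$.

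The second step is to translate this back into the target probability. Since $(-1)^X$ equals $+1$ when $X$ is even and $-1$ when $X$ is odd, we have the identity
\[
\E{(-1)^X} = \Pr{X \text{ is even}} - \Pr{X \text{ is odd}} = 1 - 2\,\Pr{X \text{ is odd}} \ .
\]
Combining with the first step yields $1 - 2\,\Pr{X \text{ is odd}} = (1-2p)^n$, and solving for the probability gives exactly $\Pr{X \text{ is odd}} = \tfrac{1}{2} - \tfrac{1}{2}(1-2p)^n$, as claimed.

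An equivalent route, should one prefer to avoid the probabilistic language, is purely algebraic via the binomial theorem: writing $q = 1-p$, one expands $(p+q)^n$ and $(q-p)^n$ and subtracts, so that the even-index terms cancel and the odd-index terms double, giving $\sum_{j \text{ odd}} \binom{n}{j} p^j q^{n-j} = \tfrac{1}{2}\big((p+q)^n - (q-p)^n\big)$; substituting $p+q = 1$ and $q - p = 1-2p$ produces the same closed form. Either argument is short and self-contained, so there is no genuine obstacle here; the only point requiring a moment of care is the sign bookkeeping, namely checking that it is the \emph{odd} terms (and not the even ones) that survive the subtraction, which the endpoint $p=0$ case ($\Pr{X \text{ is odd}} = 0$, matching $\tfrac12 - \tfrac12 = 0$) conveniently confirms.
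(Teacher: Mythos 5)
Your proposal is correct, and your primary argument takes a genuinely different route from the paper's. The paper proves the fact by direct algebraic expansion: it writes $(1-2p)^n = ((1-p)-p)^n$, expands via the binomial theorem, splits the sum into even- and odd-index terms to recognize it as $\Pr{X \text{ is even}} - \Pr{X \text{ is odd}}$, and combines with $\Pr{X \text{ is even}} + \Pr{X \text{ is odd}} = 1$ --- this is essentially the ``equivalent algebraic route'' you sketch in your final paragraph, so your fallback is the paper's proof. Your main argument instead computes $\E{(-1)^X}$ by factorizing over independent Bernoulli variables, $\E{(-1)^X} = \prod_{i=1}^n \E{(-1)^{B_i}} = (1-2p)^n$, and then reads off the parity probability from $\E{(-1)^X} = 1 - 2\Pr{X \text{ is odd}}$. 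The trade-off: your probabilistic route avoids all binomial-coefficient bookkeeping (the paper's displayed expansion is in fact slightly careless with its summation limits) and generalizes immediately --- for independent but non-identical success probabilities $p_i$ it yields $\Pr{X \text{ is odd}} = \frac{1}{2} - \frac{1}{2}\prod_{i=1}^n (1-2p_i)$ with no extra work, which the expansion argument does not give as cleanly; the paper's route, by contrast, is self-contained for a reader who wants only the binomial theorem and no appeal to independence or expectations of signed indicators. Both are complete and correct proofs of the stated fact.
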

\dk{The proof of Fact~\ref{clm:oddeven} can be found in the appendix.} The following fact can be found \eg, in \cite[(p. 34 eq. 6)]{mitrinovic1970analytic}.
\begin{fact}
\label{clm:bern_rev}
For any $0 \leq x \leq 1$ and $n \in \mathbf{N}_+$:
$
(1-x)^n \leq 1- nx + \frac12 n(n-1)x^2.
$
\end{fact}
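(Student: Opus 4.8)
The plan is to establish $(1-x)^n \le 1 - nx + \tfrac12 n(n-1)x^2$ by isolating the quadratic Taylor remainder of the left-hand side and bounding it; I also note a purely elementary induction on $n$ that avoids analysis altogether and is probably cleaner to write down.

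For the Taylor route, I would fix $x \in [0,1]$, set $f(t) = (1-t)^n$, and compute $f(0) = 1$, $f'(0) = -n$, and $f''(t) = n(n-1)(1-t)^{n-2}$. Taylor's theorem with the Lagrange form of the remainder then supplies some $\xi \in [0,x]$ with
\[
(1-x)^n = 1 - nx + \tfrac12 n(n-1)(1-\xi)^{n-2} x^2 .
\]
Since $0 \le \xi \le x \le 1$ forces $0 \le 1 - \xi \le 1$, we have $(1-\xi)^{n-2} \le 1$ for every $n \ge 2$, and substituting this into the remainder yields the claim. The value $n = 1$ must be handled separately, where both sides equal $1 - x$; the exponent $n-2$ is then negative, so the two-derivative Taylor setup degenerates, and this is the only delicate point.

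Alternatively, I would argue by induction on $n$. The base case $n = 1$ is the equality $1 - x = 1 - x$. For the inductive step I multiply the hypothesis by the factor $1 - x$, which is nonnegative because $x \le 1$; collecting terms gives $(1-x)^{n+1} \le 1 - (n+1)x + \tfrac12 n(n+1)x^2 - \tfrac12 n(n-1)x^3$, and discarding the nonnegative cubic term $\tfrac12 n(n-1)x^3$ produces exactly the bound for $n+1$. I expect no genuine obstacle beyond the boundary bookkeeping at $n = 1$ and the routine algebra of the inductive step: in both proofs the decisive feature is that the leftover quantity (the factor $(1-\xi)^{n-2}$, or the cubic term $\tfrac12 n(n-1)x^3$) is harmless, being at most $1$ in the first case and nonnegative in the second.
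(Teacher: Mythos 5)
Your proposal is correct, and it is worth noting that the paper itself supplies no proof of this fact at all: it is quoted from Mitrinovi\'c's \emph{Analytic Inequalities} (p.~34, eq.~6), so any self-contained argument is a genuine addition rather than a rederivation of the paper's reasoning. Both of your routes are sound. In the Taylor argument, the Lagrange remainder gives $(1-x)^n = 1 - nx + \tfrac12 n(n-1)(1-\xi)^{n-2}x^2$ with $\xi \in [0,x]$, and the bound $(1-\xi)^{n-2} \le 1$ is valid for all $n \ge 2$ (for $n=2$ the second derivative is the constant $n(n-1)$, so the exponent-zero case causes no trouble even at $\xi = 1$); you correctly isolate $n=1$ as the only degenerate case, where both sides equal $1-x$. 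The induction is cleaner and fully elementary: multiplying the hypothesis by the nonnegative factor $1-x$ gives $1-(n+1)x+\bigl(\tfrac12 n(n-1)+n\bigr)x^2-\tfrac12 n(n-1)x^3$, the quadratic coefficient collapses to $\tfrac12 n(n+1)$ as required, and the discarded cubic term is nonnegative precisely because $n \ge 1$ and $x \ge 0$. The inductive proof also exposes the real content of the statement: the inequality is the second-order instance of the Bonferroni-type truncation bounds for the alternating binomial expansion of $(1-x)^n$ (truncating after an even-order term overestimates), and your inductive step adapts verbatim to prove every such truncation, which is more than the citation route or the Taylor route (tied to two derivatives) gives you directly.
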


\paragraph{\dk{Main technical tools}}
\dk{We first show how to construct sequences distinguishing pairs of sets $K_1,K_2$ satisfying specific conditions.}

\begin{lemma}
\label{lem:telescope}
For any $1 \leq \delta \leq k/\alpha$ and if $k \geq \alpha$, there exists a sequence of  $O\left(\frac{k^2}{\alpha \delta} \cdot \log (n/k)\right)$ sets $\mathcal{Q}$ such that for any two sets $K_1,K_2\subseteq N$ satisfying $k \geq |K_1| \geq k/2$ and $|K_1| \geq |K_2|$ and $|K_1 \symdiff K_2| \geq \delta$ there exists $Q \in \mathcal{Q}$ that satisfies $|Q\cap K_1| \leq \alpha$, $|Q\cap K_2|\leq \alpha$ and $\feed{\alpha}(Q \cap K_1) \neq \feed{\alpha}(Q \cap K_1)$.

\end{lemma}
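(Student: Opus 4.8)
The plan is to construct $\cQ$ by the probabilistic method. I would draw each query independently by putting every element of $N$ into it with probability $p=\Theta(\alpha/k)$ (the hidden constant $c$ chosen small), prove that one such random query distinguishes any fixed pair $K_1,K_2$ obeying the hypotheses with probability $\rho=\Omega(\alpha\delta/k)$, and then argue that a sequence of $m=\Theta\!\left(\frac{k^2}{\alpha\delta}\log\frac nk\right)$ independent copies distinguishes all such pairs simultaneously with positive probability.

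First I would reduce the distinguishing requirement to a one-set event. Write $D=K_1\symdiff K_2$ and $S=K_1\cap K_2$; the counts $|Q\cap K_1|$ and $|Q\cap K_2|$ share the common term $|Q\cap S|$ and differ only through $D$, so their parities differ precisely when $|Q\cap D|$ is odd. Since also $|Q\cap(K_1\cup K_2)|\le\alpha$ forces both $|Q\cap K_1|\le\alpha$ and $|Q\cap K_2|\le\alpha$, it suffices that $Q$ lie in the event $\mathcal F=\{\,|Q\cap D|\ \text{odd}\,\}\cap\{\,|Q\cap S|+|Q\cap D|\le\alpha\,\}$; by the observation following Proposition~\ref{fct:technique} such a $Q$ distinguishes $K_1,K_2$ under $\feed{\alpha}$.

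The core is to show $\Pr{\mathcal F}=\Omega(\alpha\delta/k)$. Put $U=|Q\cap S|\sim\mathsf{Binomial}(|S|,p)$ and $W=|Q\cap D|\sim\mathsf{Binomial}(|D|,p)$; these are independent since $S,D$ are disjoint, and $\E U\le pk\le c\alpha$, $\E (U+W)=p\,|K_1\cup K_2|\le 2pk\le 2c\alpha$. By Fact~\ref{clm:oddeven}, $\Pr{W\ \text{odd}}=\tfrac12-\tfrac12(1-2p)^{|D|}$, which increases with $|D|$; together with $|D|\ge\delta$ and the hypothesis $\delta\le k/\alpha$ (so that $p\delta=O(1)$), Fact~\ref{clm:bern_rev} yields the lower bound $\Pr{W\ \text{odd}}\ge p|D|(1-p|D|)$. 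The delicate interaction is between oddness of $W$ and the capacity cap $U+W\le\alpha$, and this is the main obstacle: $W$ has mean up to $\Theta(\alpha)$, so a crude "truncate and subtract the tail" argument breaks down exactly when $\alpha$ is small, since there concentration is too weak to make the capacity-violation probability negligible next to $\Omega(\alpha\delta/k)$.

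I would resolve this by splitting on whether the symmetric difference is small or large relative to $k/\alpha$, i.e. on whether $p|D|\le\tfrac12$ or $p|D|>\tfrac12$, exploiting the complementary behaviour of the two quantities. When $p|D|\le\tfrac12$ I keep only the contribution of $\{W=1\}\cap\{U\le\alpha-1\}\subseteq\mathcal F$: here $\Pr{W=1}=|D|\,p\,(1-p)^{|D|-1}\ge\tfrac12\,\delta p$, using Bernoulli's inequality $(1-p)^{|D|}\ge 1-p|D|\ge\tfrac12$, while $\Pr{U\le\alpha-1}=\Omega(1)$ by Markov's inequality applied to $\E U\le c\alpha$ (with a direct check of the boundary case $\alpha=1$), so the product is $\Omega(p\delta)=\Omega(\alpha\delta/k)$. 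When $p|D|>\tfrac12$ I instead bound $\Pr{\mathcal F}\ge\Pr{W\ \text{odd}}-\Pr{U+W>\alpha}$: the first term is $\Omega(1)$ because $(1-2p)^{|D|}\le e^{-2p|D|}<e^{-1}$, and the second is at most $2c$ by Markov (as $\E (U+W)\le 2c\alpha$), so choosing $c$ small enough leaves $\Pr{\mathcal F}=\Omega(1)=\Omega(\alpha\delta/k)$. Either way $\rho=\Omega(\alpha\delta/k)$.

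Finally I would derandomize. With per-query success probability $\rho$, a sequence of $m$ independent queries leaves a fixed admissible pair undistinguished with probability at most $(1-\rho)^m\le e^{-\rho m}$. The number of admissible pairs is at most $\bigl(\sum_{i\le k}\binom{n}{i}\bigr)^2=\exp\!\bigl(O(k\log\tfrac nk)\bigr)$, so taking $m=\Theta\!\left(\frac{k\log(n/k)}{\rho}\right)=\Theta\!\left(\frac{k^2}{\alpha\delta}\log\frac nk\right)$ drives the expected number of undistinguished pairs below $1$; hence some realization gives a sequence $\cQ$ of the stated length distinguishing every admissible pair, which is exactly the claim.
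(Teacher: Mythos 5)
Your proposal is correct, and it shares the paper's overall skeleton: i.i.d.\ Bernoulli queries with inclusion probability $\Theta(\alpha/k)$, reduction of distinguishing to the event that $|Q\cap(K_1\cup K_2)|\le\alpha$ while $|Q\cap(K_1\symdiff K_2)|$ is odd (the paper's ``Claim''), a per-pair success probability of $\Omega(\alpha\delta/k)$ (the paper's Separation Property), and a union bound over roughly $k^2(4n/k)^{2k}$ pairs. Where you genuinely diverge is in the decomposition of the key probability estimate, and your version is cleaner. The paper splits on whether $\alpha\ge\log\frac{256k}{7\alpha\delta}$: for large $\alpha$ it subtracts a Chernoff tail $2^{-\alpha}$ from the parity probability --- an exponentially small tail is truly needed there, since the parity probability itself can be as small as $\Theta(\alpha\delta/k)$ when the symmetric difference is small --- while for small $\alpha$ it falls back on $\Pr{|S\cap Q|=1}$ times a Markov bound, with a further sub-split on whether $|S|\lessgtr 16k/\alpha$. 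You instead split on the size of the symmetric difference itself, $p|D|\lessgtr 1/2$: when $|D|$ is small, the event $\{W=1\}\cap\{U\le\alpha-1\}$ already has probability $\Omega(p\delta)$ by Bernoulli's inequality and Markov; when $|D|$ is large, the parity probability is $\Omega(1)$, so the capacity-violation probability only needs to be a small constant and Markov with a small constant in $p$ suffices. This eliminates the Chernoff bound and the $\alpha$-dependent case condition entirely, at the same query count $O\left(\frac{k^2}{\alpha\delta}\log\frac{n}{k}\right)$; what the paper's organization buys is explicit constants ($p=\alpha/(16k)$, $c=1/50$, sequence length $\lceil 150\frac{k^2}{\alpha\delta}\log\frac{4n}{k}\rceil$) that it reuses verbatim in Lemma~\ref{lem:binarydelta} and the randomized counterpart, Corollary~\ref{cor:random1}. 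One cosmetic point: in the union bound you should count pairs against $\left(\frac{4n}{k}\right)^{2k}$ as the paper does, so that the exponent does not degenerate when $k$ is close to $n$ and $\log(n/k)$ approaches zero.
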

\begin{proof}
We will show this result using the probabilistic method. 
More precisely, we first define a sequence of random queries $\mathcal{Q}$ of length $O(\frac{k^2}{\alpha\delta} \cdot \log (n/k))$.
Next, we fix any two different sets $K_1,K_2 \subseteq N$ whose cardinalities satisfy the conditions of the lemma. Recall that, \dk{by Proposition~\ref{fct:technique},} a query $Q \in \mathcal{Q}$ \emph{distinguishes} $K_1$ from $K_2$ if it satisfies the three conditions from the statement of the Lemma: $|Q\cap K_1| \leq \alpha$, $|Q\cap K_2|\leq \alpha$ and $\feed{\alpha}(Q \cap K_1) \neq \feed{\alpha}(Q \cap K_1)$. We will compute the probability that no query from sequence $\mathcal{Q}$ distinguishes the considered sets $K_1$ and $K_2$.
Then, we apply the union bound over all pairs of $K_1,K_2$ and take the complementary event, which, as we show, holds with a positive probability. This implies existence of the sought query sequence. The details follow.

\noindent
{\em Definition of random sequence $\mathcal{Q}$.}
We define a sequence of probabilities $\mathcal{P}$ of length $O(\frac{k^2}{\alpha\delta} \cdot \log (n/k))$ as probability $ \frac{\alpha}{16k}$ repeated $\left\lceil \frac{150 k^2}{\alpha\delta} \cdot \log \frac{4n}{k} \right\rceil$ times.
We define $Q_i$, an $i$-th element of the sequence $\mathcal{Q}$, as a set generated by including each element of $N$ independently with the $i$-th probability from sequence $\mathcal{P}$. 

\noindent
{\em Proving Separation Property.} 
Consider two different sets $K_1,K_2 \subseteq N$ whose cardinalities satisfy the conditions of the lemma.
Let $S = K_1 \symdiff K_2$ be the symmetric difference of $K_1$ and $K_2$. Note that $2k \geq |K_1 \cup K_2| \geq |S| \geq \delta$. We also know, by the assumed restriction on the size of $K_1$, that $|K_1 \cup K_2| \geq k/2$. 
We want to show the following:
\begin{quote}
{\bf\em  Separation Property:} for any positive integer $i\leq |\mathcal{Q}|/2$ and for some constant $c > 0$, the probability that query $Q_i \in \mathcal{Q}$,
distinguishes $K_1$ and $K_2$ is at least $c\alpha\delta/k$.
\end{quote}
Before proving the Separation Property we need the following technical claim.

\noindent
{\bf Claim.} For any $Q_j\in \mathcal{Q}$, where $j\leq |\mathcal{Q}|$:
\begin{align*}
\Pr{ Q_j \text{ distinguishes } K_1 \text{ from } K_2} & \geq \Pr{|(K_1 \cup K_2) \cap Q_j| \leq \alpha \text{ and } |S \cap Q_j| \text{ is odd}} \ .
\end{align*}

\noindent
{\em Proof of the Claim.}
%
Consider a query $Q_j$, for some $j\leq |\mathcal{Q}|$, from the random sequence $\mathcal{Q}$, and assume that the event ``$|(K_1 \cup K_2) \cap Q_j| \leq \alpha \text{ and } |S \cap Q_j| \text{ is odd}$'' holds. It follows from $|(K_1\cup K_2) \cap Q_j| \leq \alpha$ that $|K_1 \cap Q_j| \leq \alpha$ and $|K_2 \cap Q_j| \leq \alpha$. Moreover, since the event also implies that $|S \cap Q_j|$ is odd, then $|K_1 \cap Q_j| \neq |K_2 \cap Q_j| \mod 2$. Hence, $\feed{\alpha}(K_1 \cap Q_j) \neq \feed{\alpha}(K_2 \cap Q_j)$.
This completes the proof of the Claim. \ $\blacksquare$



We continue the proof of the Separation Property. 
In the case, where $\alpha \geq \log\frac{256k}{7\alpha \delta}$ we have, that:
\[ 
\Pr{|(K_1 \cup K_2) \cap Q| \leq \alpha \text{ and } |S \cap Q| \text{ is odd}}  \geq \Pr{|S \cap Q| \text{ is odd}}  - \Pr{|(K_1 \cup K_2) \cap Q| > \alpha}
\ .
\]
Random variable $H_2 = |(K_1 \cup K_2) \cap Q|$ is distributed according to the Binomial distribution with parameters $|K_1 \cup K_2|$ and $p$, and $\E{H_2} = p |K_1 \cup K_2| \leq 2pk = \alpha /8$. Then, by Chernoff bound (c.f.,~\cite{SURV}):
\[
\Pr{H_2 \geq  \alpha} \leq 2^{-\alpha} \ .
\]
Random variable $|S \cap Q|$ is also distributed according to the Binomial distribution with parameters $|S|$ and~$p$. By Fact~\ref{clm:oddeven} we have
\[
\Pr{|S \cap Q| \text{ is odd}} = \frac{1}{2} - \frac{1}{2}(1-2p)^{|S|} \ .
\]
Term $\frac{1}{2}(1-2p)^{|S|}$ is maximized, when $|S|$ is minimized, which gives us:
\[
\Pr{|S \cap Q| \text{ is odd}} \geq \frac{1}{2} - \frac{1}{2}(1-2p)^{\delta} \ .
\]
Using Fact~\ref{clm:bern_rev}, we get:
\begin{align*}
\frac{1}{2} - \frac{1}{2}(1-2p)^{\delta} \geq \frac12 - \frac12\left(1- 2p\delta + 4\delta(\delta-1)p^2\right) = \frac{\alpha \delta}{16k} - \frac{\delta(\delta-1) \alpha^2}{128 k} = \frac{\alpha \delta}{16k}\left(1 - \frac{(\delta - 1)\alpha}{8k}\right) \ ,
\end{align*}
and knowing that $\delta < k/\alpha$ we get $\left(1 - \frac{(\delta - 1)\alpha}{8k}\right) \geq 7/8$. Finally, combining the above and knowing that $\alpha \geq \log\frac{256k}{7\alpha \delta}$,
we get:
\[
 \Pr{|S \cap Q| \text{ is odd}}  - \Pr{|(K_1 \cup K_2) \cap Q| > \alpha}
\geq \frac{7\alpha\delta}{128k}  -  2^{-\alpha} \geq 
\frac{7\alpha \delta}{256k} \geq \frac{\alpha \delta}{50k} \ .
\]

In the second case assume, that $\alpha \leq \log\frac{256k}{7\alpha \delta}$. In this case we have:
\begin{align*}
\Pr{ Q_j \text{ distinguishes } K_1 \text{ from } K_2} & \geq \Pr{|(K_1 \cup K_2) \cap Q_j| \leq \alpha \text{ and } |S \cap Q_j| \text{ is odd}} \\
&  \geq \Pr{|(K_1 \cap K_2) \cap Q_j| \leq \alpha - 1 \text{ and } |S \cap Q_j| = 1} \\
& \geq \Pr{|(K_1 \cap K_2) \cap Q_j| \leq \alpha - 1 } \cdot \Pr{|S \cap Q_j| = 1},
\end{align*}
where the last equality is true, because sets $S$ and $K_1\cap K_2$ are disjoint. 

Random variable $H_1 = |(K_1 \cap K_2) \cap Q|$ is distributed according to the Binomial distribution with parameters $|K_1 \cap K_2|$ and $p$, and $\E{H_1} = p |K_1 \cap K_2| \leq pk = \alpha /16$. Then, by Markov inequality $\Pr{H_2 \geq  \alpha} \leq 1/16$.
We want to lowerbound term 
\[
\Pr{|S \cap Q_j| = 1} = p  |S| (1-p)^{|S|-1} = \frac{\alpha|S|}{16k}\cdot \left(1-\frac{\alpha}{16k}\right)^{|S|-1}
\]
If $|S| \leq 16k / \alpha$, then $(1-\alpha/(16k))^{|S|-1} \geq e^{-1}$ and:
\[
\Pr{|S \cap Q_j| = 1} \geq \frac{\alpha\delta}{16 e k}.
\] 
Hence $\Pr{ Q_j \text{ distinguishes } K_1 \text{ from } K_2} \geq \frac{15}{16} \cdot \frac{\alpha\delta}{16 e k} \geq \frac{\alpha\delta}{50 k}$.

If $|S| > 16 k / \alpha$, then $p|S| \geq 1$ and  (knowing that $|S| \leq 2k$), we get:
 \[
\left(1-\frac{\alpha}{16k}\right)^{|S|-1} \geq \left(1-\frac{\alpha}{16k}\right)^{\left(\frac{16k}{\alpha} -1\right) \frac{\alpha}{8} + \frac{\alpha}{8}} \geq e^{-\alpha / 8} \cdot e^{-\alpha / 8} \geq e^{-\alpha} \geq  \frac{7\alpha \delta}{256k}.
\]
Hence, also in this case, we get $\Pr{ Q_j \text{ distinguishes } K_1 \text{ from } K_2} \geq \frac{15}{16} \cdot \frac{7\alpha \delta}{256k} \geq \frac{\alpha\delta}{50 k}$.

This completes the proof of the Separation Property for $c=1/50$.

\noindent
{\em Computing the probability of $\mathcal{Q}$ distinguishing $K_1$ from $K_2$.}
By the proven Separation Property for $c=1/50$ and by independence of selection of each query in the sequence $\mathcal{Q}$ of length $150k^2/(\alpha\delta)\cdot \log(4n/k)$, the probability that 
$\mathcal{Q}$ fails to distinguish $K_1$ from $K_2$ is
at most:
\[
\left(1-\frac{\alpha\delta}{50 k}\right)^{150k^2/(\alpha\delta)\cdot \log(4n/k)} 
=
\left(1-\frac{\alpha\delta}{50 k}\right)^{50k/(\alpha\delta)\cdot 3k\log(4n/k)} 
\leq e^{-3k\log(4n/k)} = \left(\frac{4n}{k}\right)^{-3k} \ .
\]
{\em Applying the union bound and probabilistic argument.}
The number of possible pairs of sets $K_1,K_2$  for the case  $k \leq n/2$ can be upper bounded as follows: 
\[
\left(\sum_{i=1}^k {n\choose i}\right)^2 \leq k^2 {n\choose k}^2 \leq k^2\left(\frac{en}{k}\right)^{2k} \ .
\]
If $n \geq k \geq n/2$ the number of possible pairs $K_1,K_2$ can be simply upper bounded by:
\[
2^n \cdot 2^n \leq \left(\frac{4n}{k}\right)^{2k} \ .
\]
In both cases the number of possible pairs of $K_1,K_2$ is upper bounded by 
 $$ k^2 \cdot \left(\frac{4n}{k}\right)^{2k} \ .$$
Thus, using the Union Bound, the probability that some pair of sets is not distinguished by $\mathcal{Q}$ is at most:
\[
\left(\frac{4n}{k}\right)^{-3k} \cdot k^2 \cdot \left(\frac{4n}{k}\right)^{2k} \leq 4^{-k} \cdot k^2  < 1 \ .
\]
Hence, there is a positive probability of the complementary event that there exists a sequence of length $O(\frac{k^2}{\alpha\delta} \cdot \log (n/k))$ that distinguishes any pair $K_1$ and $K_2$ (satisfying the conditions of the lemma) under the $\feed{\alpha}$ feedback function, and by the probabilistic argument -- such a sequence exists.
\end{proof}

\dpa{
In the next lemma we show that it is possible to also distinguish sets of size at most $\alpha$.
\begin{lemma}
\label{lem:telescope2}
If $k \leq \alpha$, there exists a sequence of  $O\left(k \cdot \log (n/k)\right)$ sets $\mathcal{Q}$ such that for any two sets $K_1,K_2\subseteq N$ satisfying $k \geq |K_1|$, $k \geq |K_2|$  and $K_1 \neq K_2$ there exists $Q \in \mathcal{Q}$ that satisfies $\feed{\alpha}(Q \cap K_1) \neq \feed{\alpha}(Q \cap K_1)$.
\end{lemma}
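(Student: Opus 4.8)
The key structural simplification in this regime is that the feedback capacity is never a binding constraint: since $k\le\alpha$, for every query $Q$ and every candidate set $K_i$ with $|K_i|\le k$ we automatically have $|Q\cap K_i|\le k\le\alpha$, so $\feed{\alpha}$ is defined on both intersections and simply returns their true parities (in particular the adversary plays no role in this regime). Consequently, the distinguishing requirement collapses to a single condition: writing $S=K_1\symdiff K_2$, a query $Q$ separates $K_1$ from $K_2$ exactly when $|Q\cap K_1|\not\equiv|Q\cap K_2|\pmod 2$, i.e., when $|Q\cap S|$ is odd. Since $K_1\ne K_2$ we have $1\le|S|\le 2k$, and the condition depends on the pair only through $S$.

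The plan is to reuse the probabilistic method of Lemma~\ref{lem:telescope}, but now with the much cleaner sampling probability $p=\tfrac12$, which is permissible precisely because we no longer have to control the event $|(K_1\cup K_2)\cap Q|\le\alpha$. First I would let $\mathcal{Q}=\langle Q_1,\dots,Q_t\rangle$ be $t=\Theta\!\left(k\log(n/k)\right)$ independent queries, each formed by including every element of $N$ independently with probability $\tfrac12$. For a fixed nonempty $S$ with $|S|\le 2k$, Fact~\ref{clm:oddeven} with $p=\tfrac12$ gives $\Pr{|Q_i\cap S|\text{ is odd}}=\tfrac12-\tfrac12(1-2p)^{|S|}=\tfrac12$, uniformly over all admissible $|S|\ge 1$. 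Hence the probability that none of the $t$ independent queries makes $|Q_i\cap S|$ odd is exactly $2^{-t}$.

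Next I would take a union bound over all candidate symmetric differences $S$. Their number is at most $\sum_{i=1}^{2k}\binom{n}{i}$, which is bounded by $2k\,(en/(2k))^{2k}$ when $2k\le n/2$ and by $2^{n}\le(4n/k)^{2k}$ otherwise; in either case its binary logarithm is $O\!\left(k\log(n/k)\right)$. Choosing the hidden constant in $t=\Theta(k\log(n/k))$ large enough makes $2^{-t}$ times this count strictly below $1$, so the probabilistic argument yields a fixed sequence $\mathcal{Q}$ of length $O(k\log(n/k))$ for which, for every admissible $S$, some $Q\in\mathcal{Q}$ has $|Q\cap S|$ odd. Equivalently this $\mathcal{Q}$ distinguishes every pair $K_1\ne K_2$ with $|K_1|,|K_2|\le k$ under $\feed{\alpha}$, as required.

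I do not expect a genuine analytic obstacle: the argument is strictly easier than Lemma~\ref{lem:telescope}, since dropping the capacity event both removes the Chernoff/Markov truncation terms and lets the odd-intersection probability be exactly $\tfrac12$ independent of $|S|$. The only points demanding a little care are confirming that the feedback is always well-defined (so that the whole adversary machinery is vacuous here), and the routine counting in the union bound, where one should split into the cases $2k\le n/2$ and the degenerate large-$k$ case to keep the count at $(en/k)^{O(k)}$ and hence the exponent at $O(k\log(n/k))$.
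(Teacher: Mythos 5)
Your proposal is correct and follows essentially the same route as the paper's proof: independent queries with inclusion probability $\tfrac12$, Fact~\ref{clm:oddeven} giving odd-intersection probability exactly $\tfrac12$ (with the adversary vacuous since $k\le\alpha$), and a union bound yielding existence via the probabilistic method. The only difference is cosmetic --- you union-bound over symmetric differences $S$ while the paper counts pairs $(K_1,K_2)$, bounded by $k^2(4n/k)^{2k}$ --- and both counts give the same $O(k\log(n/k))$ length.
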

\begin{proof}
The proof follows in a similar vein as proof of Lemma~\ref{lem:telescope}.
We construct a sequence of  $\lceil 3k \log (4n/k) \rceil$ queries $\mathcal{Q}$ as follows: $i$-th element of the sequence is generated by including each element of $N$ independently with probability $\frac12$. Since $|K_1| \leq \alpha$ and $|K_2| \leq \alpha$, then 
$\Pr{ Q_j \text{ distinguishes } K_1 \text{ from } K_2} \geq \Pr{|(K_1 \symdiff K_2) \cap Q_j| \text{ is odd}}$. By Fact~\ref{clm:oddeven} we have $\Pr{|(K_1 \symdiff K_2) \cap Q_j| \text{ is odd}} = \frac12$.

Similarly as in Lemma~\ref{lem:telescope}, the number of possible pairs of $K_1,K_2$ can be upper bounded by $ k^2 \cdot \left(\frac{4n}{k}\right)^{2k}$.
Thus, using the Union Bound, the probability that some pair of sets is not distinguished by $\mathcal{Q}$ is at most:
\[
\left(\frac{4n}{k}\right)^{-3k} \cdot k^2 \cdot \left(\frac{4n}{k}\right)^{2k} \leq 4^{-k} \cdot k^2  < 1 \ .
\]
Hence, there is a positive probability of the complementary event that there exists a sequence of length $O(k \cdot \log (n/k))$ that distinguishes any pair $K_1$ and $K_2$ (satisfying the conditions of the lemma) under the $\feed{\alpha}$ feedback function, and by the probabilistic argument -- such a sequence exists.
\end{proof}
}

In the next lemma we show that the sequences constructed in Lemma~\ref{lem:telescope} and Lemma~\ref{lem:telescope2} could be concatenated in order to obtain a sequence that distinguishes sets without the lower restriction on their sizes.

\begin{lemma}
\label{lem:binarydelta}
There exists a sequence $\mathcal{Q}$  of length  $O\left(\left(k+\frac{k^2}{\alpha \delta}\right) \cdot \log (n/k)\right)$ for any $1 \leq \delta \leq \max\{k/\alpha,1\}$, such that for any sets $K_1,K_2\subseteq N$ satisfying $|K_1|,|K_2| \leq k$ and $|K_1 \symdiff K_2| \geq \delta$ there exists $Q \in \mathcal{Q}$ that satisfies $|Q\cap K_1| \leq \alpha$, $|Q\cap K_2|\leq \alpha$ and $\feed{\alpha}(Q \cap K_1) \neq \feed{\alpha}(Q \cap K_1)$.
\end{lemma}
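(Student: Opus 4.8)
The plan is to cover all set sizes by a dyadic decomposition and concatenate the sequences guaranteed by Lemma~\ref{lem:telescope} and Lemma~\ref{lem:telescope2}. First I would dispose of the regime $k \le \alpha$: here $\max\{k/\alpha,1\}=1$ forces $\delta=1$, both sets have size at most $\alpha$ so their intersections with any query automatically fit the feedback capacity, and Lemma~\ref{lem:telescope2} applied with size bound $k$ already distinguishes every pair of distinct sets using $O(k\log(n/k))$ queries. Since $k^2/(\alpha\delta)\le k$ when $k\le\alpha$ and $\delta=1$, this matches the claimed bound.

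For $k>\alpha$ I would build $\mathcal{Q}$ as a concatenation of blocks. For pairs whose larger set has size at most $\alpha$ I again invoke Lemma~\ref{lem:telescope2}, now with size bound $\alpha$, contributing $O(\alpha\log(n/\alpha))$ queries. For pairs whose larger set has size $s>\alpha$ I use a telescope of scales $k_j=k/2^{j}$ for $0\le j\le\lfloor\log_2(k/\alpha)\rfloor$, invoking at scale $j$ the sequence of Lemma~\ref{lem:telescope} with parameters $k_j$ and $\delta_j':=\min\{\delta,\,k_j/\alpha\}$. Correctness then reduces to a routing argument: for any admissible pair with $|K_1|\ge|K_2|$ and $s=|K_1|>\alpha$, the half-open interval $[s,2s)$ contains exactly one number of the form $k/2^{j}$ with $j\ge 0$; call it $k_j$. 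Then $k_j\ge s>\alpha$, so this scale is present, and $k_j\ge|K_1|\ge k_j/2$ with $|K_1|\ge|K_2|$ and $|K_1\symdiff K_2|\ge\delta\ge\delta_j'$, so Lemma~\ref{lem:telescope} yields a query in the $j$-th block that distinguishes $K_1$ from $K_2$ (the two capacity conditions $|Q\cap K_1|,|Q\cap K_2|\le\alpha$ are part of that lemma's conclusion). The lemma is applicable at this scale because $k_j\ge\alpha$ and $1\le\delta_j'\le k_j/\alpha$, and truncating from $\delta$ to $\delta_j'\le\delta$ only strengthens the distinguishing guarantee. Pairs with $s\le\alpha$ are handled by the base block.

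Next I would add up the block lengths. The base block costs $O(\alpha\log(n/\alpha))=O(k\log(n/k))$, using that $x\mapsto x\log(n/x)$ is increasing on $(0,n/e)$ (the regime $k\ll n$) and $\alpha\le k$. For the telescope I split the scales. When $k_j\ge\alpha\delta$ we have $\delta_j'=\delta$ and the block costs $O\!\left(\frac{k_j^2}{\alpha\delta}\log(n/k_j)\right)$; writing $\log(n/k_j)=\log(n/k)+j$ and $k_j=k/2^{j}$, this is a geometric series in $4^{-j}$ dominated by $j=0$, summing to $O\!\left(\frac{k^2}{\alpha\delta}\log(n/k)\right)$. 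When $\alpha\le k_j<\alpha\delta$ we have $\delta_j'=k_j/\alpha$ and the block costs $O(k_j\log(n/k_j))$; this series decays geometrically in $k_j$ and is dominated by its largest term $k_j=\Theta(\alpha\delta)\le k$, giving $O(\alpha\delta\log(n/(\alpha\delta)))=O(k\log(n/k))$ once more by monotonicity. Summing the three contributions yields total length $O\!\left((k+\frac{k^2}{\alpha\delta})\log(n/k)\right)$, as required.

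The main obstacle is the gap between the range of $\delta$ permitted by the statement (up to $k/\alpha$) and the per-scale ceiling $\delta\le k_j/\alpha$ required by Lemma~\ref{lem:telescope}, which forces the truncation $\delta_j'=\min\{\delta,k_j/\alpha\}$ on the lower scales. One then has to confirm that these truncated blocks collectively cost only $O(k\log(n/k))$, and not more; this is exactly where the geometric decay of the dyadic scales together with the monotonicity of $x\log(n/x)$ do the work, and it is also the reason the additive $k$ term appears in the final bound.
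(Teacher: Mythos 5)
Your proof is correct and follows essentially the same route as the paper's: a dyadic concatenation of the sequences from Lemma~\ref{lem:telescope} at scales $k/2^{j}$ down to $\alpha$, together with the base block from Lemma~\ref{lem:telescope2}, with the same routing of each pair to the unique scale bracketing $|K_1|$. In fact your truncation $\delta_j' = \min\{\delta, k_j/\alpha\}$ at the low scales, where the hypothesis $\delta \le k_j/\alpha$ of Lemma~\ref{lem:telescope} can fail for large $\delta$, addresses a point the paper's proof passes over silently, so your version is slightly more careful on exactly the step responsible for the additive $k\log(n/k)$ term.
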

\begin{proof}
Assume that $k$ is a power of $2$ (if it is not, we can increase $k$ to the closest power of $2$ without increasing the asymptotic complexity of our sequence). From Lemma~\ref{lem:telescope}, we have that there exists a sequence of length $c k^2 \log(n/k) / (\alpha\delta)$, for some constant~$c$, distinguishing any two sets $K_1,K_2$ satisfying $|K_1| \geq |K_2|$ and $k\geq |K_1| \geq k/2$ and $|K_1 \symdiff K_2| \geq \delta$. We want to show that such a sequence exists for any pair of sets of size at most $k$. We call the sequences from Lemma~\ref{lem:telescope} applied to parameter $k/2^i$ instead of $k$ as $\mathcal{Q}_i$. By concatenating such sequences for $i = 0,1,\dots,\lfloor \log_2 (k/\alpha) \rfloor$ and with sequence $\hat{\mathcal{Q}}$ from Lemma~\ref{lem:telescope2}, we obtain sequence $\mathcal{Q}$ of length:
$
\lceil 3k \log (4n/k) \rceil + \sum_{i = 0}^{\lfloor\log_2 (k/\alpha) \rfloor} c\frac{k^2\log(2^in/k)}{4^i\alpha\delta } \in O((k+\frac{k^2}{\alpha\delta}) \cdot \log(n/k) )$.
Take any two sets $K_1$ and $K_2$ such that $|K_1|, |K_2| \leq k$ and $|K_1 \symdiff K_2| \geq \delta$. Without loss of generality assume that $|K_1| \geq |K_2|$. 
If $|K_1| \leq \alpha$, then the pair is distinguished by $\hat{\mathcal{Q}}$ due to Lemma~\ref{lem:telescope2}. Otherwise,
we find such $i$, that $k2^{-i}\geq |K_1| \geq k 2^{-i-1}$. By Lemma~\ref{lem:telescope}, sequence $\mathcal{Q}_i$ distinguishes $K_1$ from $K_2$. Since $\mathcal{Q}$ contains $\mathcal{Q}_i$ as subsequence, then $\mathcal{Q}$ also distinguishes $K_1$ from $K_2$.
\end{proof}

As mentioned earlier, Theorem~\ref{thm:binary} follows directly from Lemma~\ref{lem:binarydelta} applied for $\delta=1$.
It is worth mentioning that Lemma~\ref{lem:binarydelta}, based on technical development in Lemma~\ref{lem:telescope},  could be seen as more universal tool that could be applied
to the analysis of other feedbacks related to or using parity as its part, c.f., Section~\ref{sec:geberal-feedback}.

\dpa{
\paragraph{Randomized counterpart construction}
\dk{First} 
note that the explicit randomized construction used in the proof of Lemma~\ref{lem:telescope} leads directly to the following corollary:

\begin{corollary}
\label{cor:random1}
There exists an \dk{explicit} randomized algorithm that generates a sequence $\mathcal{Q}$ of $O(\frac{k^2}{\alpha\delta} \cdot \log (n/k))$ sets such that with probability at least $1 - k^2 / 4^{-k}$ \dk{the following holds:} for any sets  $K_1,K_2\subseteq N$ 
\dk{such that}
$k \geq |K_1| \geq k/2$ and $|K_1| \geq |K_2|$ and $|K_1 \symdiff K_2| \geq \delta$, there exists $Q \in \mathcal{Q}$ that satisfies $|Q\cap K_1| \leq \alpha$, $|Q\cap K_2|\leq \alpha$ and 
\dk{$\feed{\alpha}(Q \cap K_1) \neq \feed{\alpha}(Q \cap K_2)$.}
\end{corollary}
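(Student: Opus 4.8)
The plan is to observe that this corollary is an immediate consequence of the computation already carried out inside the proof of Lemma~\ref{lem:telescope}. There, the probabilistic method was invoked only to assert \emph{existence} of a good sequence; but the same estimates, read off directly, furnish a quantitative high-probability guarantee for the explicit random construction itself, and that is exactly what the corollary records.

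First I would recall that the construction is explicit: fix the probability sequence $\mathcal{P}$ of length $\left\lceil \frac{150 k^2}{\alpha\delta}\cdot \log\frac{4n}{k}\right\rceil$ with every entry equal to $\frac{\alpha}{16k}$, and form each query $Q_i$ by including every element of $N$ independently with probability $\frac{\alpha}{16k}$. This sampling procedure requires no nonconstructive choice, produces a sequence of the claimed length $O\!\left(\frac{k^2}{\alpha\delta}\log(n/k)\right)$, and is precisely the random object analyzed in Lemma~\ref{lem:telescope}.

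Next I would invoke the two quantitative bounds established in that proof. By the Separation Property together with the independence of the queries, for any single fixed pair $K_1,K_2$ satisfying $k \geq |K_1| \geq k/2$, $|K_1|\geq|K_2|$ and $|K_1 \symdiff K_2| \geq \delta$, the probability that no query of $\mathcal{Q}$ distinguishes them is at most $(4n/k)^{-3k}$. Moreover, the number of admissible ordered pairs $(K_1,K_2)$ is at most $k^2 (4n/k)^{2k}$. A union bound over all such pairs then yields a failure probability of at most $k^2 (4n/k)^{2k}(4n/k)^{-3k} = k^2 (4n/k)^{-k} \leq k^2 4^{-k}$, using $4n/k \geq 4$. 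Hence the explicitly sampled sequence simultaneously distinguishes every admissible pair with probability at least $1 - k^2 4^{-k}$, which is the asserted event (the stated bound $1-k^2/4^{-k}$ being a typographical slip for $1 - k^2\cdot 4^{-k}$).

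There is essentially no obstacle to overcome here: the only conceptual point beyond the earlier argument is that we retain the random sequence instead of extracting a deterministic witness, so that the very same union-bound estimate is now interpreted as a concentration statement for the sampled construction rather than as a positivity-of-probability existence argument. The mild care needed is simply to confirm that the per-pair failure bound and the pair-counting bound quoted above hold verbatim under the hypotheses of the corollary, which they do since those hypotheses coincide with the size restrictions used in Lemma~\ref{lem:telescope}.
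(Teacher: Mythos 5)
Your proposal is correct and takes essentially the same route as the paper, which likewise obtains the corollary by simply rereading the explicit sampling construction and the union-bound computation inside the proof of Lemma~\ref{lem:telescope} as a high-probability guarantee (failure probability at most $k^2\cdot(4n/k)^{-k}\leq k^2\cdot 4^{-k}$) rather than as a positivity argument. You are also right that the stated bound $1-k^2/4^{-k}$ is a typographical slip for $1-k^2\cdot 4^{-k}$.
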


A randomized algorithm generating a concatenation of sequences \dk{$\mathcal{Q}_i$, taken} from Corollary~\ref{cor:random1} \dk{for parameters $k/2^i$,} in the same manner as in Lemma~\ref{lem:binarydelta} for $\delta = 1$, results in an explicit randomized algorithm for \gt under \aadv. It is easy to see that if each of $\lfloor \log_2 k \rfloor$ concatenated sequences $\mathcal{Q}_i$ does not fail (\ie it does distinguish all pairs of sets of certain sizes), then the resulting sequence distinguishes all pairs of sets of sizes at most $k$. 

\begin{corollary}
\label{cor:binary_random}
Under $\feed{\alpha}$ feedback and under adaptive \aadv, there exists an explicit randomized solution to \gt with query complexity\\  \dk{$O\left(\left(k+\frac{k^2}{\alpha}\right) \cdot \log \frac{n}{k}\log 1/c\right)$} working with probability at least  $1 - c$, for any 
$c\in (0,1)$.
\end{corollary}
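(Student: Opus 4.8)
The plan is to \emph{boost} the single-copy randomized construction sketched immediately before the statement (the concatenation of the sub-sequences $\mathcal{Q}_i$ from Corollary~\ref{cor:random1}) up to an arbitrary success probability $1-c$, by taking independent repetitions and exploiting a monotonicity property of the distinguishing criterion of Proposition~\ref{fct:technique}.

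\emph{Single-copy construction.} First I would assemble one ``copy'': the concatenation of the $O(\log k)$ randomized sub-sequences $\mathcal{Q}_i$ generated for parameters $k/2^i$ via Corollary~\ref{cor:random1}, together with the explicitly randomized counterpart of the sequence from Lemma~\ref{lem:telescope2}, put together exactly as in the proof of Lemma~\ref{lem:binarydelta} with $\delta=1$. This copy has length $O\!\left(\left(k+\tfrac{k^2}{\alpha}\right)\log\tfrac{n}{k}\right)$ and, when none of its sub-sequences fails, it provides for every pair $K_1\neq K_2$ of size at most $k$ a query witnessing the criterion of Proposition~\ref{fct:technique}. By Corollary~\ref{cor:random1} the sub-sequence $\mathcal{Q}_i$ fails with probability at most $k_i^2\,4^{-k_i}$ with $k_i=k/2^i$; since $x^2 4^{-x}$ is bounded by an absolute constant for $x\ge 1$ and the sum $\sum_i k_i^2 4^{-k_i}$ is dominated by its smallest terms, a union bound over the $O(\log k)$ sub-sequences shows that the whole copy fails with probability at most some absolute constant $q_0<1$, uniformly in $n,k,\alpha$ (if needed, a constant-factor increase of the sub-sequence lengths makes $q_0\le \tfrac12$).

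\emph{Boosting and correctness.} Next I would fix $r=\lceil \log_{1/q_0}(1/c)\rceil = O(\log(1/c))$ and draw $r$ \emph{independent} copies $\mathcal{Q}^{(1)},\dots,\mathcal{Q}^{(r)}$, outputting their concatenation $\mathcal{Q}$. The key observation is monotonicity: for any pair $K_1\neq K_2$, a query distinguishing them inside some $\mathcal{Q}^{(j)}$ also occurs in $\mathcal{Q}$, because each $\mathcal{Q}^{(j)}$ is a subsequence of $\mathcal{Q}$. Hence if \emph{any} single copy distinguishes all pairs, then so does $\mathcal{Q}$, and by Proposition~\ref{fct:technique} such a $\mathcal{Q}$ solves \gt under \aadv. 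Consequently $\mathcal{Q}$ can fail only if all $r$ copies fail, and by independence this event has probability at most $q_0^{\,r}\le c$.

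\emph{Length, adversary, and the main obstacle.} The total length is $r\cdot O\!\left(\left(k+\tfrac{k^2}{\alpha}\right)\log\tfrac{n}{k}\right)=O\!\left(\left(k+\tfrac{k^2}{\alpha}\right)\log\tfrac{n}{k}\,\log\tfrac1c\right)$, matching the claim, and the algorithm is explicit since each $\mathcal{Q}_i$ comes from the explicit sampler of Corollary~\ref{cor:random1}. Adaptiveness of the adversary is automatic: the sequence $\mathcal{Q}$ is fixed before any interaction, and whenever $\mathcal{Q}$ satisfies the criterion of Proposition~\ref{fct:technique} it solves \gt against the strongest \aadv regardless of the adversary's (possibly $\mathcal{Q}$-dependent) strategy, which is precisely what Definition~\ref{def:randomized} combined with Definition~\ref{def:framework} requires. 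The step I expect to require the most care is establishing that the single-copy failure probability is a constant strictly below $1$ \emph{uniformly} over all $n,k,\alpha$; once this $q_0<1$ is secured, the logarithmic number of repetitions and the bound $q_0^{\,r}\le c$ follow immediately.
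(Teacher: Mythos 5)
Your proposal is correct and takes essentially the same approach as the paper: the paper's proof likewise bounds the failure probability of one concatenated copy (built from Corollary~\ref{cor:random1} and Lemma~\ref{lem:telescope2} exactly as in Lemma~\ref{lem:binarydelta}) by the uniform constant $k^2\cdot 4^{-k} + \sum_{i}(k/2^i)^2 4^{-k/2^i} \le \frac14 + \sum_{j\ge 1} j^2 4^{-j} = \frac{107}{108}$, and then concatenates $\lceil \log_{108/107}(1/c)\rceil = O(\log(1/c))$ independent copies. The one step you flag as delicate --- that the single-copy failure probability is a constant strictly below $1$ uniformly in $n,k,\alpha$ --- is precisely what the paper settles by the explicit evaluation $\sum_{j\ge 1} j^2 4^{-j} = 20/27$ (the distinct integer values $k/2^i$ make the sum dominated by this geometric-type series), so your argument is sound as written.
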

\begin{proof}
The probability that a single of the sequences concatenated in Lemma~\ref{lem:binarydelta} fails to distinguish all sets of certain sizes is at most:
\[
k^2 \cdot 4^{-k} + \sum_{i=0}^{\lfloor \log_2 (k/\alpha)\rfloor} (k/2^i)^2 4^{-(k/2^i)} \leq \frac14 + \sum_{j = 1}^{\infty} j^2 4^{-j} = \frac{107}{108}
\ ,
\]
because $\sum_{j = 1}^{\infty} j^2 4^{-j} = 20/27$. If we concatenate $\lceil \log_{108/107} 1/c \rceil$ independently generated such sequences, we get a sequence that distinguishes all sets with probability at least $1-c$.
\end{proof}
}


\subsection{Maximum expressiveness -- Full feedback}

In this section we consider $\fullfeed{\alpha}$ feedback. Using it, we show that larger expressiveness of feedback allows for smaller query complexity. The following lemma is independent of any feedback function and shows that there exists a query sequence that \emph{$\alpha$-isolates} each element of $K$, in the following sense: for any set $K$ of size at most $k$ and each element in $K$, there exists a query such that this element and at most $\alpha-1$ other elements from $K$ belong to this query.

\begin{lemma}
\label{lem:fullupper}
If $\alpha \geq 9 \log k$ and $k \geq \alpha$, then there exists a sequence $\mathcal{Q}$ of $t=O((k^2 / \alpha^2) \cdot \log n)$ subsets of $N$ with the property that for any set $K \subseteq N$ such that $|K| \leq k$ and any element $x\in K$, there exists  $Q \in \mathcal{Q}$ with the property that $x \in Q$ and $|K \cap Q| \leq \alpha$.
\end{lemma}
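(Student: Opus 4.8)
The plan is to build $\mathcal{Q}$ by the probabilistic method: choose each query independently, including every element of $N$ in it independently with probability $p=\Theta(\alpha/k)$, tuned so that the expected intersection $\E{|Q\cap K|}=p|K|$ is a small constant fraction of $\alpha$ (say at most $\alpha/4$). Rather than handle each pair $(x,K)$ directly, I would impose two global conditions on the random sequence and show that both hold simultaneously with positive probability. The first condition, \textbf{(C1)}, asks that every element $x\in N$ be contained in at least $g$ of the queries, for a threshold $g=\Theta((k/\alpha)\log n)$. The second, \textbf{(C2)}, asks that for every sub-collection $\mathcal{Q}'\subseteq\mathcal{Q}$ with $|\mathcal{Q}'|=g$ and every $K\subseteq N$ with $|K|\le k$ we have $\sum_{Q\in\mathcal{Q}'}|Q\cap K|\le g\alpha$. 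These two conditions give the $\alpha$-isolation property at once: fix $K$ with $|K|\le k$ and $x\in K$; by \textbf{(C1)} at least $g$ queries contain $x$, and taking any $g$ of them as $\mathcal{Q}'$, condition \textbf{(C2)} bounds the average of $|Q\cap K|$ over $Q\in\mathcal{Q}'$ by $\alpha$, so some $Q\in\mathcal{Q}'$ satisfies $|Q\cap K|\le\alpha$ while still containing $x$ -- exactly the required query.

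Verifying \textbf{(C1)} is routine. For fixed $x$ the number of queries containing it is distributed as $\mathrm{Binomial}(t,p)$ with mean $tp$; choosing $t$ so that $tp\ge 2g$ and applying a Chernoff lower-tail bound makes the probability that fewer than $g$ queries contain $x$ smaller than $1/(2n)$, and a union bound over the $n$ elements finishes it. This only forces $tp=\Theta(g)$, hence $t=\Theta(gk/\alpha)$.

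The heart of the argument is \textbf{(C2)}. For a fixed $g$-subset $\mathcal{Q}'$ and fixed $K$, the quantity $\sum_{Q\in\mathcal{Q}'}|Q\cap K|$ is a sum of $g|K|\le gk$ independent Bernoulli($p$) variables with mean at most $g\alpha/4$, so an upper-tail Chernoff bound gives $\Pr{\sum_{Q\in\mathcal{Q}'}|Q\cap K|> g\alpha}\le e^{-cg\alpha}$ for an absolute constant $c>0$ (we exceed the mean by a constant factor). I would then union-bound over the $\binom{t}{g}\le (et/g)^g$ choices of $\mathcal{Q}'$ and the $\binom{n}{\le k}\le (en/k)^k$ choices of $K$, so the failure probability of \textbf{(C2)} is at most $\exp\!\big(g\log(et/g)+k\log(en/k)-cg\alpha\big)$. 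The term $g\log(et/g)$ is absorbed by $cg\alpha$ precisely because $t/g=\Theta(k/\alpha)$ and $\alpha\ge 9\log k\ge 9\log(k/\alpha)$, leaving a constant fraction of $cg\alpha$; requiring that remaining $\Theta(g\alpha)$ to dominate $k\log(en/k)$ forces $g=\Theta((k/\alpha)\log(n/k))$, consistent with the choice $g=\Theta((k/\alpha)\log n)$. Substituting back, $t=\Theta(gk/\alpha)=\Theta\!\big((k^2/\alpha^2)\log n\big)$, the claimed bound.

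Since $\Pr{\neg\textbf{(C1)}}+\Pr{\neg\textbf{(C2)}}<1$, a sequence satisfying both conditions exists, and by the argument above it $\alpha$-isolates every element of every admissible $K$; this proves the lemma (the random experiment can be derandomized by the method of conditional expectations if an explicit sequence is wanted). The main obstacle is the analysis of \textbf{(C2)}: one must choose $p$ and the threshold $g$ so that the two entropy terms from the double union bound -- over query-subsets and over sets $K$ -- are simultaneously beaten by the single Chernoff exponent $g\alpha$. This balancing is exactly what pins down $g=\Theta((k/\alpha)\log n)$ and $t=\Theta((k^2/\alpha^2)\log n)$, and it is where the hypothesis $\alpha\ge 9\log k$ enters, guaranteeing that the per-subset entropy $\log(t/g)=O(\log k)$ is absorbed by $\alpha$; the assumption $k\ge\alpha$ ensures $p\le 1$ and keeps the regime non-degenerate.
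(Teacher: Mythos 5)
Your proposal is correct and takes essentially the same route as the paper's proof: the same random model with inclusion probability $\Theta(\alpha/k)$, a coverage condition identical to your \textbf{(C1)}, and a double union bound over size-$g$ sub-collections of queries and sets $K$ whose entropy terms are absorbed exactly as you describe, with $\alpha \geq 9\log k$ killing the per-subset term and $g = \Theta((k/\alpha)\log n)$ killing the $k\log(en/k)$ term. The only cosmetic difference is in the second condition: the paper bounds the probability that \emph{every} query in a fixed sub-collection has intersection with $K$ exceeding $\alpha$ (a per-query tail of $e^{-\alpha}$ raised to the power $g$ by independence), whereas you bound the aggregate $\sum_{Q\in\mathcal{Q}'}|Q\cap K|$ with a single Chernoff bound and conclude by averaging --- both yield the same $e^{-\Theta(g\alpha)}$ failure probability and identical parameter balancing, so the arguments are interchangeable.
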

\begin{proof}
We prove existence of such family $\mathcal{Q}$ by a probabilistic argument. 
Let  $h =\lceil 16 \log (3n) \rceil$.  The family  $\mathcal{Q}$ consists of 
$t=6h\cdot x \cdot k^2 / \alpha^2$  subsets denoted as $Q_1,\ldots, Q_t$. For each $i=1,\ldots, t$ the set $Q_i$ is generated in the following manner: each element $x\in N$ belongs to $Q_i$ with probability  $\frac{\alpha}{6 \cdot k}$. All the random choices are independent over all elements and subsets.

\subparagraph{Claim 1.} With probability at least $2/3$ each element of $N$ belongs to at least $\frac{ k}{2\alpha} $ queries in the sequence $\mathcal{Q}$.

\textit{Proof of Claim 1.}
For a fixed $x\in N$, let $L_x=|\{i\in [t]: x \in A_i\}|$. 
Clearly, $L_x$ is a sum of  Bernoulli trials and $\E{L_x} = \frac{h x k}{\alpha}$. Due to  the independence of random inclusion of consecutive elements we can use a standard Chernoff bound~\cite{SURV} to get
$
\Pr{L_x < h\frac{k}{2\alpha}} \leq e^{-\frac{hk}{16\alpha}} <\frac{1}{3n}
$,
where the last inequality follows from the fact that, $\frac{hk}{16\alpha} \geq \log (3n)$.
Using the union bound over all $n$ possible elements $v \in N$ we get Claim 1. $\blacksquare$


Consider a sub-sequence of queries from $\mathcal{Q}$, and from all these queries we remove all elements that do not belong to $K$,
namely: $\mathcal{Q}_{K,T}=\{Q_i\cap K: Q_i \in \mathcal{Q} \ \& \ i \in T\}$. 

\subparagraph{Claim 2.} With probability at least $2/3$, for any choice of $K$ with $k$ elements and any  $T$ of 
$\frac{hk}{2\alpha}$ indices, the resulting sequence $\mathcal{Q}_{K,T}$ contains a set with at most $\alpha$ elements.

\textit{Proof of Claim 2.}
Let us fix  any subset  $K\subseteq N$ and a set $T$ with proper cardinalities. In any fixed set $Q^' \in  \mathcal{Q}_{K,T}$ we define its number of elements as $X_{Q^'}$. Clearly, $X_{Q^'}$ is a sum of Bernoulli random variables. 

We have $\E{X_{Q^'}} = \frac{\alpha}{6}$ and by the  Chernoff bound we get
$
\Pr{X_{Q^'} > \alpha} <\Pr{X > 6 \E{X}} < e^{-\alpha} 
$.
%
Due to independence of choices elements in different queries, the probability that the number of elements is greater than $\alpha$ in all $\frac{hxk}{2\alpha}$ sets in $\mathcal{Q}_{K,T}$ is at most
$
e^{-\alpha \cdot \frac{hk}{2\alpha}} =e^{-\frac{hk}{2}} 
$.
%
Recall that the above reasoning was performed for a fixed choice of sets $K$ and $T$. To apply a union bound argument one needs to multiply the above value by the number of all possible choices of sets $K$ and~$T$. The logarithm of the number of possible combinations of $K$ and $T$ equals~to:
\[
\log \left({n \choose k} \cdot {\frac{6h k^2}{\alpha^2}\choose \frac{hk}{2\alpha}}\right) \leq k \log\left(\frac{en}{k}\right) + \frac{hxk}{2\alpha} \log\left(\frac{6 e k}{\alpha}\right),
\]
and since $\alpha > 9 \log k$ we obtain the logarithm of the union-bounded probability multiplied by the number of choices we get:
\begin{align*}
\log \left(e^{-\frac{hk}{2}} \cdot {n \choose k} \cdot {6h \cdot \frac{k^2}{\alpha^2}\choose h \cdot \frac{k}{2\alpha}}\right) &\leq -\frac{hk}{2}  + k \log\left(\frac{en}{k}\right) + \frac{hk }{2\alpha} \log\left(\frac{6 e k}{\alpha}\right)\\
& \leq - \frac{hk}{2}   + k \log\left(\frac{en}{k}\right) + \frac{hk \log k}{18 \log k} \leq -\frac{4hk}{9} + k \log\left(\frac{en}{k}\right) \\
& \leq - 14 k \log\frac{3n}{k} + k\log\frac{en}{k} \leq -13 k\log\frac{3n}{k} < \log\frac{1}{3}
\ .
\end{align*}
Hence,
$
e^{-k \cdot 32 \log(3n/k)} \cdot {n \choose k} \cdot {\frac{6h k^2}{\alpha^2}\choose \frac{hk}{\alpha}} <\frac{1}{3} 
$.
This concludes the proof of Claim 2. $\blacksquare$

Observe that with probability at least $1/3$, a randomly chosen family $\mathcal{Q}$ simultaneously meets conditions described in Claim 1 and Claim 2, by the union bound. Consequently, with probability at least $1/3$, in the randomly generated family $\mathcal{Q}$ for any set $K$ of size $k$ and every sub-sequence $T$ of $\frac{hk}{\alpha}$ queries from $\mathcal{Q}$ there is at least one query $Q_i$ such that $|Q_i \cap K| \leq \alpha$, for some $i\in T$. Hence, such a family $\mathcal{Q}$ exists, by straightforward probabilistic argument. 
Finally, observe that since $\mathcal{Q}$ works for any set $K$ of exactly $k$ elements, then it also does 
for any $K$ such that $|K| \leq k$. 
%
%
%
%
\end{proof}


Interestingly, sequence $\mathcal{Q}$ from Lemma~\ref{lem:fullupper} with parameters $n,k$ does \textbf{not} distinguish all pairs of sets $K_1,K_2$ of size at most $k$. We only know, that each element $x \in K_1$ belongs to some query $Q_{\tau} \in \mathcal{Q}$, with $|K_1 \cap Q_{\tau}| \leq \alpha$. But we may have $|K_2 \cap Q_{\tau}| > \alpha$ and the \aadv may force the feedbacks to be equal on this position for sets $K_1$ and $K_2$. To solve this problem, in the proof of Theorem~\ref{thm:fullupper}, we take the sequence from Lemma~\ref{lem:fullupper} with parameters $n,2k$, and use it for set $K = K_1 \cup K_2$.

\begin{proof}[Proof of Theorem~\ref{thm:fullupper}]

The component $\frac{n}{\alpha}$ follows from the fact, that a simple selector, where each element belongs to one query and each query contains $\alpha$ elements (except the last query that contains at most $\alpha$) has query complexity $O(\frac{n}{\alpha})$ and solves \gt under the $\fullfeed{\alpha}$ feedback and works under \aadv. 
The first part of theorem is a consequence of Lemma~\ref{lem:fullupper}. Specifically, we take the family $\mathcal{Q}$ from Lemma~\ref{lem:fullupper} with parameters $n, 2k$. We observe that for any two sets $K_1$, $K_2$, with $|K_1|, |K_2|\leq \alpha$ and $K_1 \neq K_2$, we have $|K_1 \cup K_2| \leq 2k$ and  $K_1 \symdiff K_2 \neq \emptyset$. Take any $x \in K_1 \symdiff K_2 $ and observe that due to Lemma~\ref{lem:fullupper} there is a query $Q\in \mathcal{Q}$ such that $x\in Q$ and $|Q\cap (K_1 \cup K_2)| \leq \alpha$. Hence, $|Q\cap K_1| \leq \alpha$, $|Q\cap K_2| \leq \alpha$ and $\fullfeed{\alpha}(Q\cap K_1) \neq \fullfeed{\alpha}(Q\cap K_2)$. Consequently, $\mathcal{Q}$ solves \gt under \aadv, by Proposition~\ref{fct:technique}.

The second part of the theorem follows from the fact that we can use the result from Theorem~\ref{thm:binary} and obtain a sequence of length $O(\frac{k^2}{\alpha} \log(n/k))$ (this results does not require the assumption on $\alpha$ and also works under \aadv). Note that we do not need the $O(k \log(n/k))$ component here because under $\fullfeed{\alpha}$ in the case where $k \leq \alpha$, the problem is solvable using a single query.
 \end{proof}

\dpa{
\paragraph{Randomized counterpart construction}
In the proof of Lemma~\ref{lem:fullupper} we construct a sequence at random and show that it satisfies a certain condition with probability at least $1/3$. Clearly, from this we can obtain an explicit randomized construction that succeeds with probability~$1/3$, \dk{and by iterating it a certain number of times we get the following:}

\begin{corollary}
Under $\fullfeed{\alpha}$ feedback and under adaptive \aadv, there exists an explicit randomized solution to \gt with query complexity 
\begin{align}\nonumber
& \dk{O\left(\frac{k^2}{\alpha\beta}\left(\frac{\beta}{\alpha} + \log n\right)\cdot \log n \cdot \log 1/c\right)} & \text{if } \alpha > 18 \log k, \\\nonumber
& O\left(\frac{k^2}{\alpha} \cdot \log\frac{n}{k} \dk{\cdot \log 1/c}\right) & \text{otherwise}.
\end{align}
 working with probability at least  $1 - c$, for any 
$c\in (0,1)$.
\end{corollary}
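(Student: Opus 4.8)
The plan is to keep the random construction that underlies Theorem~\ref{thm:fullupper} instead of derandomizing it, and to amplify its constant success probability up to $1-c$ by independent repetition. Concretely, in the regime $\alpha > 18\log k$ I would invoke the random family $\mathcal{Q}$ built inside the proof of Lemma~\ref{lem:fullupper}, instantiated with parameter $2k$ in place of $k$ exactly as in the proof of Theorem~\ref{thm:fullupper}. That family has length $O((k^2/\alpha^2)\log n)$ and, with probability at least $1/3$, $\alpha$-isolates every element of every $K$ with $|K|\le 2k$. As already argued for the deterministic bound, whenever this event occurs the sequence distinguishes every pair $K_1,K_2$ with $|K_1|,|K_2|\le k$: one applies the isolation guarantee to some $x\in K_1\symdiff K_2$ and to the set $K_1\cup K_2$, yielding a query on which both intersections have size at most $\alpha$ and the feedbacks differ. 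Thus the explicit random construction already gives a randomized algorithm solving \gt with probability at least $1/3$.

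To boost the probability I would generate $r=\lceil \log_{3/2}(1/c)\rceil = O(\log 1/c)$ independent copies of this sequence and concatenate them. The distinguishing condition of Proposition~\ref{fct:technique} is monotone under adding queries -- if one sub-sequence already contains, for every pair, a query distinguishing it with both intersections at most $\alpha$, then so does the concatenation -- so the concatenated sequence fails only when all $r$ copies fail, an event of probability at most $(2/3)^r\le c$. The resulting length is $O\!\left((k^2/\alpha^2)\log n\cdot\log 1/c\right)$, which matches the claimed first branch once one substitutes $\beta=\bar\alpha=\Theta(\alpha\log(n/\alpha))$, so that $\beta/\alpha=O(\log n)$ and the stated expression collapses to $(k^2/\alpha^2)\log n\cdot\log 1/c$.

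For the complementary regime $\alpha\le 18\log k$ I would instead invoke the randomized binary-feedback construction of Corollary~\ref{cor:binary_random}, which already delivers query complexity $O\!\left((k+k^2/\alpha)\log(n/k)\log 1/c\right)$ with probability $1-c$; since under $\fullfeed{\alpha}$ any instance with $k\le\alpha$ is solved by a single query, the additive $k$ term may be dropped, leaving the claimed second branch $O\!\left((k^2/\alpha)\log(n/k)\log 1/c\right)$. The one point that I expect to require care -- and which I consider the crux rather than any calculation -- is the adaptive adversary of Definition~\ref{def:randomized}, which is permitted to see the entire random sequence $\mathcal{Q}$ before acting. I would emphasize that this causes no difficulty, because the distinguishing condition of Proposition~\ref{fct:technique} is a purely combinatorial property of $\mathcal{Q}$ together with $(n,k)$: on the distinguishing query both intersections are at most $\alpha$, so the \aadv has no freedom there, and hence the success event is adversary-independent. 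The randomness is therefore only over the generation of $\mathcal{Q}$, so independent repetition amplifies correctly even against a fully adaptive adversary, and the remaining steps reduce to the routine bookkeeping of the per-copy failure probabilities.
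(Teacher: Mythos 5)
Your proposal is correct and follows essentially the same route as the paper: handle $k\le\alpha$ with a single query under $\fullfeed{\alpha}$, invoke Corollary~\ref{cor:binary_random} when $\alpha\le 18\log k$ (dropping the additive $k$ term since $k>\alpha$), and for $\alpha>18\log k$ repeat the explicit random construction from Lemma~\ref{lem:fullupper} independently $O(\log 1/c)$ times, using that the distinguishing condition of Proposition~\ref{fct:technique} is a combinatorial property of $\mathcal{Q}$ alone and hence unaffected by the adaptive adversary. If anything, your bookkeeping is slightly more careful than the paper's: since the construction succeeds with probability at least $1/3$ (fails with probability at most $2/3$), the correct repetition count is $\lceil\log_{3/2}(1/c)\rceil$ as you state, whereas the paper's proof asserts failure probability at most $1/3$ and uses $\lceil\log_{3}(1/c)\rceil$ --- an immaterial slip asymptotically.
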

\begin{proof}
We first observe that if $k\leq \alpha$ then, because of the $\fullfeed{\alpha}$ feedback, a single query containing all elements from set $N$ solves \gt. Hence, we focus on case $k \geq \alpha$. If $\alpha \leq 18 \log k$, then we can use the result from Corollary~\ref{cor:binary_random} and obtain a \dk{desired} sequence of length $O((k + \frac{k^2}{\alpha}) \log(n/k)\dk{\log 1/c})$ \dk{with probability of success at least $1-c$}, which becomes $O(\frac{k^2}{\alpha} \log(n/k)\dk{\log 1/c})$, because $k \geq \alpha$. Finally if $18\log k < \alpha \le k$, we can use the construction from the proof of Lemma~\ref{lem:fullupper} that fails with probability at most $1/3$. By repeating it $\lceil \log_3 1/c\rceil$ times, \dk{independently, and concatenating the resulting sequences} we obtain a desired probability of success.
\end{proof}
}


\subsection{General feedback}
\label{sec:geberal-feedback}

In our construction of $\abfeed{\alpha,\beta}(X)$ (introduced in Definition~\ref{def:feedbacks}) we use the following 
code, where notation $\bigoplus S$ denotes bit-wise XOR of all the elements of set $S$. Such a code was defined in \cite{Censor-HillelHL15} and its explicit construction can be found in \cite{roth2006introduction}.
\begin{definition}
\label{def:BCC}
 An $[n, \beta, \gamma]$-BCC-code is a set $C\subseteq \{0,1\}^\beta$ of size $|C| = n$ such that for any two subsets $S_1,S_2 \subseteq C$ (with $S_1\neq S_2$) of sizes $|S_1|,|S_2| \leq \gamma$ it holds that $\bigoplus S_1 \neq \bigoplus S2$.
\end{definition}

\begin{lemma}{\cite[Lemma 2]{Censor-HillelHL15}} 
\label{lem:bcc}
There exist $[n,\beta , \gamma]$-BCC codes with $\beta = O(\gamma \log n)$. 
\end{lemma}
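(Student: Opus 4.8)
The plan is to reduce the defining property of a BCC code to a zero-sum condition on subsets, and then establish existence by the probabilistic method. First I would observe the reduction: for two distinct subsets $S_1,S_2\subseteq C$ with $|S_1|,|S_2|\le\gamma$, the elements lying in $S_1\cap S_2$ cancel over $\mathbb{F}_2$, so $\bigoplus S_1=\bigoplus S_2$ holds if and only if $\bigoplus(S_1\symdiff S_2)=\mathbf{0}$. Since $S_1\neq S_2$, the set $T=S_1\symdiff S_2$ is nonempty with $|T|\le|S_1|+|S_2|\le 2\gamma$. Conversely, any nonempty $T$ with $|T|\le 2\gamma$ can be partitioned into two disjoint parts, each of size at most $\gamma$, which realizes $T$ as the symmetric difference of two distinct admissible subsets. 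Hence the BCC property is \emph{equivalent} to the statement that every nonempty $T\subseteq C$ with $|T|\le 2\gamma$ satisfies $\bigoplus T\neq\mathbf{0}$.

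Next I would construct $C$ at random: draw vectors $c_1,\dots,c_n$ independently and uniformly from $\{0,1\}^\beta$. For any fixed nonempty index set $T$ with $|T|\le 2\gamma$, the partial sum $\bigoplus_{i\in T}c_i$ is uniformly distributed on $\{0,1\}^\beta$ (fix all but one summand; the last one is uniform and independent), so the probability that it equals $\mathbf{0}$ is exactly $2^{-\beta}$. The number of candidate bad sets is $\sum_{i=1}^{2\gamma}\binom{n}{i}\le 2\gamma\cdot n^{2\gamma}\le n^{2\gamma+1}$ (using $2\gamma\le n$, which holds in our regime $\gamma\le k\le n$). By the union bound, the probability that some nonempty $T$ of size at most $2\gamma$ sums to $\mathbf{0}$ is at most $n^{2\gamma+1}2^{-\beta}$, and this is strictly below $1$ as soon as $\beta\ge(2\gamma+1)\log_2 n+1=O(\gamma\log n)$. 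Therefore a code with the claimed redundancy exists; note that the case $|T|=2$ already forces all codewords to be distinct, so indeed $|C|=n$.

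The argument above yields only existence, which suffices for the statement. For the \emph{explicit} construction referenced in the lemma, I would switch to the parity-check viewpoint: arranging the $n$ codewords as the columns of a $\beta\times n$ matrix over $\mathbb{F}_2$, the zero-sum condition says that every $2\gamma$ columns are linearly independent, i.e., the associated linear code has minimum distance exceeding $2\gamma$. A binary narrow-sense BCH code of designed distance $2\gamma+1$ and length $n$ has redundancy at most $\gamma\lceil\log_2(n+1)\rceil=O(\gamma\log n)$, matching the bound, and admits a standard explicit construction (\cite{roth2006introduction}). The only delicate points in the whole proof are the reduction constant -- the passage from subsets of size $\gamma$ to zero-sum sets of size $2\gamma$ via the symmetric difference -- and the counting estimate on the number of bad subsets; once the equivalence is in place, the probabilistic bound is routine, and this is where I would expect the reader's attention rather than any genuine obstacle.
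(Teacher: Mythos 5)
Your argument is correct, but it is worth noting that the paper does not actually prove this lemma: it is imported verbatim as \cite[Lemma 2]{Censor-HillelHL15}, with the explicit construction attributed to \cite{roth2006introduction}, so there is no in-paper proof to match against. Your proposal is a valid self-contained substitute. The reduction to zero-sum sets is sound: $\bigoplus S_1 = \bigoplus S_2$ iff $\bigoplus (S_1 \symdiff S_2) = \mathbf{0}$, and only the forward direction is actually needed for existence (if no nonempty $T$ with $|T|\le 2\gamma$ sums to zero, no violating pair $S_1,S_2$ can exist); your converse via splitting $T$ into halves of size at most $\gamma$ is fine but strictly optional. The probabilistic step is routine and correct -- uniformity of $\bigoplus_{i\in T} c_i$, the bound $\sum_{i=1}^{2\gamma}\binom{n}{i}\le n^{2\gamma+1}$, and hence $\beta = (2\gamma+1)\log_2 n + O(1)$ suffices -- and you rightly observe that the $|T|=2$ case enforces distinctness of the $n$ codewords, so $|C|=n$ as the definition demands. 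Your closing remark is also the historically accurate one: the proof in \cite{Censor-HillelHL15} proceeds exactly through the parity-check viewpoint you describe, identifying codewords with columns of the parity-check matrix of a binary BCH code of designed distance $2\gamma+1$, whose redundancy $\gamma\lceil\log_2(n+1)\rceil$ gives $\beta = O(\gamma\log n)$ explicitly. What your probabilistic variant buys is a shorter, construction-free existence proof; what it loses is explicitness, which matters downstream in this paper, since Section~\ref{sec:geberal-feedback} points out that the lack of an explicit (even randomized) BCC construction within the probabilistic framework is precisely why no randomized counterpart is given for the $\abfeed{\alpha,\beta}$ feedback -- so for the paper's purposes the BCH route is the one that carries the weight.
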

\begin{proof}[Proof of Theorem~\ref{thm:generalupper}]
First we prove the part of the theorem that works under the assumption $\alpha > 18 \log k$. We denote $\beta^' =\min\left\{\left\lfloor \frac{\beta - 1}{c \log n}\right\rfloor,\alpha \right\}$. We note that if $\beta < \log n$, then we get $\beta^' = 0$ but the result follows simply by choosing $\mathcal{Q}$ as the sequence from Theorem~\ref{thm:binary}. 
Note that we must have $\beta^' \leq \alpha$ because input set $X$ cannot contain more than $\alpha$ elements.

Assume that $\beta > \log n$ and let us take the family  from Lemma~\ref{lem:binarydelta} with parameter $\delta = \beta^'$ and concatenate it with the family from Lemma~\ref{lem:fullupper} with parameters $2k$ and $n$
. Observe, that this resulting family $\mathcal{Q}$ (composed of two parts $\mathcal{Q}_1$ and $\mathcal{Q}_2$) has length $ O\left(\frac{k^2}{\alpha\beta^'} \cdot \log n\right) = O\left(\max\left\{\frac{k^2}{\alpha^2}\log n,\frac{k^2}{\alpha \beta}\log^2 n\right\}\right)$. We will show that this family distinguishes under $\abfeed{\alpha,\beta}$  feedback model, any two sets $K_1$, $K_2$ satisfying $|K_1|,|K_2| \leq k$.
We will consider two cases.

In the case, where $|K_1 \symdiff K_2| < \beta^'$, we pick an arbitrary element $s \in K_1 \symdiff K_2$. Without loss of generality assume that $s \in K_1$. By Lemma~\ref{lem:fullupper} in some $Q \in \mathcal{Q}_2$ we have $|Q\cap (K_1 \cup K_2)| \leq \alpha$ and $s \in Q$. 

We want to show that:
\[
\bigoplus_{s \in Q \cap K_1} \bcc(s)  \neq \bigoplus_{s \in Q\cap K_2} \bcc(s).
\]
We denote $K_1^' = (K_1 \setminus K_2) \cap Q$ and $K_2^' = (K_2 \setminus K_1) \cap Q$ and $T = K_1\cap K_2 \cap Q$. We know that $K_1^' \neq \emptyset$, because $s \in K_1^'$ and also $K_1^' \neq K_2^'$ because $s \notin K_2^'$. Since $|K_1 \symdiff K_2| < \beta^'$, and $K_1^',K_2^' \subset K_1 \symdiff K_2$ then also $|K_1^'|,|K_2^'| <\beta^'$. By the definition of BCC codes~\cite[Lemma 2]{Censor-HillelHL15} we have that:
$
\bigoplus_{s \in K_1^' } \bcc(s)  \neq \bigoplus_{s \in K_2^' } \bcc(s).
$
Using the properties of operation XOR:
\[
\bigoplus_{s \in Q\cap K_1} \bcc(s) = \bigoplus_{s \in K_1^' } \bcc(s) \oplus \bigoplus_{s \in T} \bcc(s) \neq \bigoplus_{s \in K_2^' } \bcc(s) \oplus \bigoplus_{s \in T} \bcc(s) = \bigoplus_{s \in Q\cap K_2} \bcc(s).
\]

Hence, if $|K_1\symdiff K_2| \leq \beta^'$ then there exists $Q$ such that $|Q\cap (K_1\cup K_2)|\leq \alpha $. Consequently, $|Q\cap K_1|\leq \alpha$ and $|Q\cap K_2|\leq \alpha$.
We have obtained that $\abfeed{\alpha,\beta}(Q\cap K_1) \neq \abfeed{\alpha,\beta}(Q\cap K_2)$, thus $Q$ distinguishes $K_1$ and $K_2$.

If $|K_1 \symdiff K_2| \geq \beta^'$, then by Lemma~\ref{lem:binarydelta} the first part of our sequence $\mathcal{Q}_1$ distingushes $K_1$ and $K_2$ under the binary feedback. Since $\abfeed{\alpha,\beta}$ includes the binary feedback, our sequence $\mathcal{Q}$ distinguishes $K_1$ from $K_2$. 

After considering both cases, we have that sequence $\mathcal{Q}$ distinguishes any two sets $K_1,K_2$ of size at most $\alpha$ and thus we can use Proposition~\ref{fct:technique} and obtain  that $\mathcal{Q}$ solves the $(n,k)$-Group-Testing problem under \aadv.



The second part of the theorem follows from the fact that we can use the result from Theorem~\ref{thm:binary} and obtain a sequence of length $O(\frac{k^2}{\alpha} \log(n/k))$ (this results does not require the assumption on $\alpha$). 
\end{proof}

\dpa{
\paragraph{Randomized counterpart construction}
\dk{Unlike in previous sections, for $\abfeed{\alpha,\beta}(X)$ feedback it is not simple to provide an explicit algorithm, even randomized.
This is because} an explicit (even randomized) construction of BCC-codes is not known. 
\dk{Therefore, we suggest this problem as one of interesting open directions.}
}

%
\section{Lower bound}
\label{sec:lower}

\begin{proof}[Proof of Theorem~\ref{thm:fulllower}]
An $(n,k,k)$-selector is a sequence of queries $Q_1,Q_2,\dots,Q_t$ such that for any $|K| \leq k$ and any element $x \in K$, for some query $Q$ we have $Q\cap K = \{x\}$. An $(n,k,k)$-selector is known to have query complexity $\Omega(\min\{n, (k^2/\log k) \cdot \log n\})$~\cite{ClementiMS01} and 
is known to exist with query complexity $O(k^2 \log n)$~\cite{erdos1985families}. Assume that $n$ is sufficiently large and let $c_1$ be such a constant that $(n,k,k)$-selector of query complexity $t_1 \leq c_1 (k^2/\log k) \cdot \log n $ does not exist; $c_1$ is well-defined by~\cite{ClementiMS01}. Let $c_2$ be a constant such that $(n,\alpha+1,\alpha+1)$-selector of size $t_2 \leq c_2 \alpha^2 \log n$ exists;
$c_2$ is well-defined by~\cite{erdos1985families}. Let $\mathcal{R}= \langle R_1, R_2, \dots, R_{t_2}\rangle$ be such a selector.

The proof of the theorem is by contradiction. Assume that there exists a sequence $\mathcal{Q} = \langle Q_1,Q_2,\dots, Q_t\rangle$ solving \gt with some $(\alpha,\beta)$-feedback function, such that the length of the sequence is $t \leq \frac{c_1}{c_2}\cdot \frac{k^2}{\alpha^2} \log^{-1} k$.

First we show the following fact: for any set $K \subset N$ of size $|K|\leq k$ and any element $x \in K$ there exists a set $Q$ in sequence $\mathcal{Q}$ such that $|K\cap Q| \leq \alpha + 1$. Assume the contrary and fix $K$ and $x$ that violate the fact. Observe that sets $K$ and $K\setminus \{x\}$ may produce the same feedback for any $(\alpha,\beta)$-feedback function (regardless of the value of $\beta$). This is because for any $Q_i$ such that $x\in Q_i$ we have $|Q_i \cap K| \geq \alpha + 2$ and $|Q_i \cap (K\setminus \{x\})| \geq \alpha + 1$. Hence, in the case of $Q_i \cap K$ the \hadv may provide to the feedback function the same $\alpha$ elements as in $Q_i \cap (K\setminus \{x\})$. Since the feedback function is deterministic, the results will be the same, which is a contradiction with the fact that $\mathcal{Q}$ solves the \gt problem.

Next we transform $\mathcal{Q}$ into an $(n,k,k)$-selector: we take the $(n,\alpha+1,\alpha+1)$-selector $\mathcal{R}= \langle R_1, R_2, \dots, R_{t_2}\rangle$ and construct a sequence 
$\mathcal{S} = \langle Q \cap R \text{, for each } R = R_1, R_2, \dots, R_{t_2}   \text{, for each } Q = Q_1, Q_2, \dots, Q_t \rangle$.
The obtained family $\mathcal{C}$ has $t \cdot t_2 \leq t_1$ queries.
We prove that it is also an $(n,k,k)$-selector.
For any set $K$ with $|K| \leq k$ and any $s \in S$ there exists, by the property of the sequence $\mathcal{Q}$ proved above, a set $Q$ in sequence $\mathcal{Q}$ such that $|Q \cap K| \leq \alpha + 1$. Now, since $\mathcal{R}$ is an $(n,\alpha+1,\alpha+1)$-selector, there exists a set $R$ in $\mathcal{R}$ such that $R\cap (Q \cap K) = \{x\}$. By the construction of $\mathcal{S}$, set $Q\cap R$ belongs to sequence $\mathcal{S}$, hence element $x$ is selected by family $\mathcal{S}$. Hence $\mathcal{S}$ is an $(n,k,k)$-selector. We know however that $(n,k,k)$-selector of length at most $t_1$ does not exist, and thus obtain a contraction showing that such a family $\mathcal{Q}$ cannot exist.
\end{proof}
%
\dpa{
\paragraph{Randomized counterpart \dk{result}}
In Theorem~\ref{thm:fulllower} we show that any sequence that solves \gt must have length of at least $\Omega\left(\frac{k^2}{\alpha^2} \log^{-1} k\right)$. Thus, a randomized algorithm generating sequences that solve \gt with at least a constant probability must have expected query complexity of $\Omega\left(\frac{k^2}{\alpha^2} \log^{-1} k\right)$, \dk{since each correct sequence must have such length (by Theorem~\ref{thm:fulllower}).}
\begin{corollary}
If $n > k^2\log n/\log k$, then any randomized solution to \gt under any $(\alpha,\beta)$-feedback has expected query complexity 
$\Omega\left(\frac{k^2}{\alpha^2} \log^{-1} k\right)$ for some adaptive \hadv.
\end{corollary}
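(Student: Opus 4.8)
The plan is to obtain the randomized bound directly from the deterministic Theorem~\ref{thm:fulllower} by a one-line averaging argument, the key observation being that Theorem~\ref{thm:fulllower} is really a statement about the \emph{length of every correct query sequence}, not merely about the behaviour of deterministic algorithms.

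First I would isolate the exact content of the proof of Theorem~\ref{thm:fulllower}: it assumes that \emph{some} fixed sequence $\mathcal{Q}$ of length $t \le \frac{c_1}{c_2}\cdot\frac{k^2}{\alpha^2}\log^{-1}k$ solves \gt and derives a contradiction with the $(n,k,k)$-selector lower bound of~\cite{ClementiMS01}, using the honest $x$-avoiding adversary that hides a single element $x$. Hence what is actually established is the following: under this honest adversary, no query sequence shorter than $L=\Omega\!\left(\frac{k^2}{\alpha^2}\log^{-1}k\right)$ can solve \gt. Since solving \gt in the sense of Definition~\ref{def:framework} demands that the feedback vectors be distinguishable against \emph{every} admissible adversary strategy -- in particular against this honest $x$-avoiding adversary -- every correct sequence, no matter how it is generated, must have length at least $L$.

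Next I would set up the randomized quantity. Let the randomized algorithm produce a random query sequence $\mathcal{Q}$, and let $E$ be the event that $\mathcal{Q}$ solves \gt in the sense of Definition~\ref{def:randomized}; by hypothesis $\Pr{E}\ge 1-c$ for some fixed constant $c<1$. On the event $E$ the realized sequence is correct, so by the previous paragraph it deterministically satisfies $|\mathcal{Q}|\ge L$. Because the adversary of Definition~\ref{def:randomized} is given knowledge of $\mathcal{Q}$ (i.e.\ is adaptive), and the witness of Theorem~\ref{thm:fulllower} is precisely such an adaptive honest adversary, this per-realization bound is legitimate. The expected query complexity then satisfies
\[
\E{|\mathcal{Q}|}\;\ge\;L\cdot\Pr{E}\;\ge\;(1-c)\,L\;=\;\Omega\!\left(\frac{k^2}{\alpha^2}\log^{-1}k\right),
\]
where the first inequality uses that $|\mathcal{Q}|\ge L$ holds on $E$ and $1-c$ is a positive constant.

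The argument is short, so there is no genuine combinatorial obstacle; the only delicate point is bookkeeping about the adversary. One must state explicitly that ``solving \gt'' already quantifies over all adversary strategies, so that the bound $|\mathcal{Q}|\ge L$ is unambiguous on the event $E$ and is compatible with the adaptive adversary of the randomized model. I would spell this out before the averaging step, after which the displayed chain of inequalities completes the proof.
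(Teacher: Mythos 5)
Your proposal is correct and follows essentially the same route as the paper: the paper's own proof also observes that Theorem~\ref{thm:fulllower} bounds the length of \emph{every} correct sequence (under the adaptive \hadv, who by Definition~\ref{def:framework} knows $\mathcal{Q}$), and then concludes the expected-length bound for any algorithm succeeding with constant probability via the same averaging step $\E{|\mathcal{Q}|}\ge (1-c)L$. Your explicit bookkeeping about the adversary's adaptivity is a slightly more careful write-up of exactly the point the paper's one-line justification relies on.
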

}
\section{Some $(\alpha,\beta)$-feedbacks are better than others}
\label{sec:case-study}
\dk{One could be tempted to develop a similar universal reduction, \dk{as in the proof of the lower bound in Section~\ref{sec:lower}, also} for upper bounds -- between a setting with any $(\alpha,\beta)$-feedback 
and some strong selectors. However, in this section we show that such a reduction does not exist: we define two seemingly very similar feedback functions with the same values of $\alpha, \beta$ and show that the resulting query complexities for these two feedbacks are asymptotically very different.}

Consider the following two feedback functions, both being $(\alpha,\beta)$-feedbacks for any $\alpha\le k$ and $\beta = 2 \lceil\log_2 n \rceil$. In the following we associate each element with its identifier. We assume that each identifier has exactly $\lceil \log_2 n \rceil$ bits and is different from the string of~all~zeros.
\[
F_1(S) =
\begin{cases}
(\min S) \bigparallel (\min(S \setminus \{\min S\}), \quad \text{if }2\leq |S| \leq \alpha \\
(\min S) \bigparallel 00\dots 0, \quad \text{if } |S| = 1~. \\
\end{cases}
\]
\[
F_2(S) =
\begin{cases}
(\min S) \bigparallel  \min(S \setminus \{\min S\}), \quad \text{if }2\leq |S| \leq \alpha \text{ and $|S|$ is odd} \\
(\min(S \setminus \{\min S \})) \bigparallel  \min S , \quad \text{if }2\leq |S| \leq \alpha  \text{ and $|S|$ is even} \\
(\min S) \bigparallel 00\dots 0, \quad \text{if } |S| = 1~. \\
\end{cases}
\]
We show that the query complexity of \gt with feedback $F_2$ is substantially lower than with feedback $F_1$.

\begin{corollary}
\label{cor:upper-2min}
For any $\alpha \leq k$, query complexity of \gt under feedback $F_2$ under \aadv is $O\left(\frac{k^2}{\alpha} \log (n/k) \right)$.
\end{corollary}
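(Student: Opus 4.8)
The plan is to show that the feedback $F_2$ secretly carries the full information of the parity feedback $\feed{\alpha}$, so that any query sequence distinguishing sets under $\feed{\alpha}$ — in particular the one from Theorem~\ref{thm:binary} — also distinguishes them under $F_2$. The corollary then falls out of the binary-feedback machinery already developed.

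First I would define a decoding map $g$ recovering $|S| \bmod 2$ from $F_2(S)$ alone. Reading the definition of $F_2$: when $|S|\ge 2$ the output is the pair of the two smallest elements of $S$, written smallest-first exactly when $|S|$ is odd and largest-first exactly when $|S|$ is even; when $|S|=1$ the second half is the all-zero string, which is not a valid identifier and hence unambiguously marks the (odd) singleton case; and when $|S|=0$ the output is the fixed all-zero value. Accordingly $g$ is defined by: output $0$ if both halves are zero, output $1$ if only the second half is zero, and otherwise compare the two nonzero halves and output $1$ when the first is smaller than the second and $0$ otherwise. By construction $g(F_2(S)) = |S|\bmod 2$ for every $S$ with $|S|\le\alpha$.

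The key consequence is that $F_2$ is at least as discriminating as $\feed{\alpha}$: whenever a query $Q$ satisfies $|Q\cap K_1|\le\alpha$, $|Q\cap K_2|\le\alpha$ and $\feed{\alpha}(Q\cap K_1)\neq\feed{\alpha}(Q\cap K_2)$, the parities of the two intersections differ, so $g(F_2(Q\cap K_1))\neq g(F_2(Q\cap K_2))$, forcing $F_2(Q\cap K_1)\neq F_2(Q\cap K_2)$. Thus any query distinguishing $K_1,K_2$ under $\feed{\alpha}$ also distinguishes them under $F_2$. I would then invoke Lemma~\ref{lem:binarydelta} with $\delta=1$ (equivalently Theorem~\ref{thm:binary}) to obtain a sequence $\mathcal{Q}$ of length $O\!\left(\left(k+\frac{k^2}{\alpha}\right)\log\frac{n}{k}\right)$ that, for every pair $K_1\neq K_2$ with $|K_1|,|K_2|\le k$, contains a query $Q$ with $|Q\cap K_1|\le\alpha$, $|Q\cap K_2|\le\alpha$ and $\feed{\alpha}(Q\cap K_1)\neq\feed{\alpha}(Q\cap K_2)$. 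By the previous observation this same query distinguishes $K_1,K_2$ under $F_2$, so Proposition~\ref{fct:technique} shows $\mathcal{Q}$ solves \gt under \aadv with feedback $F_2$. Since $\alpha\le k$ gives $\frac{k^2}{\alpha}\ge k$, the length simplifies to $O\!\left(\frac{k^2}{\alpha}\log\frac{n}{k}\right)$, as claimed.

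The only genuine subtlety — and the step I would treat most carefully — is the edge-case bookkeeping in defining $g$: one must check that the decoding is well-defined and correct for $|S|\in\{0,1\}$, using the never-all-zeros property of identifiers, and that these corner outputs never collide with the $|S|\ge 2$ outputs. Everything else is a direct reuse of the parity analysis, which is precisely the point the corollary illustrates: because $F_2$ encodes $|S|\bmod 2$ in the \emph{ordering} of its two reported minima, it inherits the efficient query complexity of $\feed{\alpha}$, in sharp contrast with $F_1$.
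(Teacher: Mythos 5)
Your proposal is correct and follows essentially the same route as the paper: the paper's proof likewise observes that the parity of $|S|$ can be decoded from the \emph{order} of the two reported minima (with singleton and empty intersections being obvious), and then invokes Theorem~\ref{thm:binary} directly. Your explicit decoding map $g$ and the edge-case check for $|S|\in\{0,1\}$ merely spell out details the paper leaves implicit, so the two arguments coincide in substance.
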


\begin{proof}
We can see that under feedback $F_2$ we can deduce the parity bit from the feedback from every request with intersection at most $\alpha$, by checking the order of the two outputted elements (note that if there is only one or no outputted elements, the parity is obvious). Hence, the corollary is a direct consequence of Theorem~\ref{thm:binary}.
\end{proof}
%
\paragraph{Remark}
An unexpected inspiration for the feedback $F_2$ is a type of move, called \emph{count signal}, used in contract bridge. Contract bridge is a card game, where players play in pairs (but without seeing non-revealed cards of other players) and sometimes it is crucial to exchange some information between the partners about their cards. The only way to disclose is by revealing (playing) the cards, but the \emph{order} in which the cards are played can have some meaning. A count signal is exactly the $F_2$ feedback, where the parity of one player's cards (in some particular suit) is disclosed by the order in which he/she plays the cards. For example a player holding $\diamondsuit Q 963$ (meaning Queen, 9, 6, 3 in diamonds) plays $6$ and then $3$ to show even number of cards in diamonds. 

On the other hand, returning the same two minimal values as in $F_2$ but always in order, results in a dramatic increase in the query complexity, \dk{even under Honest Adversary.}

\begin{theorem}
\label{thm:lower-2min}
For any $k, \alpha$, query complexity of \gt under feedback $F_1$ is $\Omega(\min\{n,k^2\})$ under \hadv.
\end{theorem}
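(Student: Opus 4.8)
The plan is to exploit that under $F_1$ the feedback reveals nothing beyond the two smallest elements of the served set, and that its \emph{fixed} order carries no parity information; this forces $\mathcal{Q}$ to behave like a strong selector that must isolate the maximum of every set with at most two ``witnesses'', an object that provably needs $\Omega(\min\{n,k^2\})$ queries.

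\textbf{Step 1 (reduction to a selection condition).} First I would prove the necessary condition: if $\mathcal{Q}$ solves \gt under $F_1$ and \hadv, then for every $K$ with $|K|\le k$ and $z^{*}=\max K$ there is a query $Q\in\mathcal{Q}$ with $z^{*}\in Q$ and $|Q\cap K|\le 2$. Suppose not, and compare $K$ with $K'=K\setminus\{z^{*}\}$. For any $Q$ with $z^{*}\notin Q$ the two intersections coincide. For $Q$ with $z^{*}\in Q$ the assumption gives $|Q\cap K|\ge 3$, so $Q\cap K$ contains at least two elements smaller than $z^{*}$ (all of $K\setminus\{z^{*}\}$ lies below $z^{*}$); hence the two smallest elements of $Q\cap K$ and of $Q\cap K'$ are identical. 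I would then exhibit matching \hadv strategies: when $|Q\cap K|\le\alpha$ there is no adversary and $F_1(Q\cap K)=F_1(Q\cap K')$ by the observation above; when $|Q\cap K|>\alpha$ (so $|Q\cap K'|\ge\alpha\ge 2$) both adversaries pass the \emph{same} $\alpha$-subset, chosen inside $Q\cap K'$ and containing its two smallest elements. Thus the feedback vectors of $K$ and $K'$ coincide, contradicting that $\mathcal{Q}$ solves \gt. This uses only the Honest Adversary's ability to \emph{drop} elements, so the bound holds under \hadv and a fortiori under \aadv.

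\textbf{Step 2 (lower bound for the selection condition).} It remains to show that any family with the ``isolate-the-maximum with at most two elements'' property has $t=\Omega(\min\{n,k^2\})$. I would argue by adversarially constructing a bad set. Restrict to the sub-universe $U=[M]$ with $M=\min\{n,\Theta(k^2)\}$ and to sets $K=\{z^{*}\}\cup B$ with $z^{*}=\max K$ and $B\subseteq[z^{*}-1]$. Call a query \emph{thin at} $z$ if $z\in Q$ and $|Q\cap[z-1]|\le 1$; each query is thin only at its two smallest elements, so at most $2t$ elements of $U$ are protected by a thin query. Every other element $z^{*}$ has all through-queries of trace-size $\ge 2$, and a chosen $B$ defeats isolation of $z^{*}$ precisely when $B$ \emph{doubly covers} all these traces (hits each in at least two points). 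I would show that an un-doubly-coverable system (with budget $k-1$) forces $\Omega(k)$ essentially disjoint small traces through $z^{*}$, while a single query supplies such a small trace only to a few of its lowest-rank elements; a double counting of (element, small-trace) incidences then yields $t=\Omega(M)=\Omega(\min\{n,k^2\})$, and for $n\le k^2$ the same count directly gives $\Omega(n)$.

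The main obstacle is the combinatorial core of Step 2: quantifying ``un-doubly-coverable with $k-1$ elements'' in terms of the number of queries. A naive first-moment/union bound over choices of $K$ only yields $\Omega(k)$, because large queries inflate degrees cheaply while contributing easily-covered (hence useless) large traces; the challenge is to charge each defended element to $\Omega(k)$ genuinely small, disjoint traces and to bound how many such traces one query can provide. Making this charging tight, via an LP/fractional double-cover bound or a direct greedy construction of $B$, is the delicate part. An attractive alternative I would pursue in parallel is to mirror the reduction of Theorem~\ref{thm:fulllower}, turning the isolation property into a genuine selector and invoking the selector lower bound of~\cite{ClementiMS01} to obtain $\Omega(\min\{n,k^2\})$ directly.
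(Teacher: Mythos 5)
Your Step 1 is correct and is a legitimate (localized) version of the indistinguishability argument the paper also uses: comparing $K$ with $K\setminus\{z^*\}$ and letting the Honest Adversary serve a common $\alpha$-subset of $Q\cap K'$ does show that every $K$ must have its maximum contained in some query meeting $K$ in at most $2$ elements. The genuine gap is Step 2, and you partly concede it yourself: the entire quantitative content of the theorem --- getting from ``each unprotected element needs many spread traces'' to $t=\Omega(k^2)$ --- is left as a sketch (``I would show\dots'', ``the delicate part''), and the obstruction you name (large queries make incidence counting vacuous, since a query of size $s$ supplies $s$ cheap incidences while costing the adversary only $2$ elements --- its two smallest --- to neutralize \emph{all} of its traces at once) is exactly what kills the natural double-counting. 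Your max-only condition makes this worse, not better: because a set defeating $z^*$ must lie entirely below $z^*$, the adversarial budgets for different elements cannot be shared, and each element's defense must be paid for separately, which is why your own accounting stalls at $\Omega(k)$. The paper escapes precisely by \emph{not} restricting to maxima: it takes $R$ to be the elements never among the two smallest of any query (your ``unprotected'' elements, $|N\setminus R|\le 2t$), fixes the top $j^*=\lfloor k/2\rfloor$ elements $R_{1,j^*}$ of $R$ as a \emph{shared core} of the hidden set, and defeats a non-maximal element $r_i$ with $S=R_{1,j^*}\cup A\cup B$, $S'=S\setminus\{r_i\}$. Queries through $r_i$ meeting $R_{i+1,j^*}$ in $\ge 2$ points are then doubly covered for free by the core, so only the queries with intersection $0$ or $1$ (the sets $A_{j^*}(i)$, $B_{j^*}(i)$) consume private budget, forcing $2|A_{j^*}(i)|+|B_{j^*}(i)|>k-j^*$; crucially these sets are pairwise disjoint across $i$ (the paper's Claim 2, which follows from the nesting of the $R_{i+1,j^*}$), so summing over $i$ gives $3t\ge j^*\cdot k/2$ directly, with no incidence counting at all. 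Reproducing this effect under your max-only condition would essentially require reinventing that nested-core argument.

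Your fallback route also does not work as stated. Intersecting your family with an $(n,2,2)$-selector turns the Step 1 property into ``the maximum of every $K$ is exactly isolated,'' which is a maximum-isolating family, not an $(n,k,k)$-selector, so the lower bound of~\cite{ClementiMS01} cannot be invoked: Step 1 gives you isolation only of $\max K$, never of interior elements, and no known selector bound applies to max-isolating families (note that $n$ singleton queries max-isolate every set, so any such reduction is capped at bounds of order $n$ and cannot import the selector bound wholesale). This asymmetry --- the selector reduction works for the universal lower bound of Theorem~\ref{thm:fulllower} but not here --- is exactly the point the paper makes with the pair $F_1,F_2$. So: Step 1 stands, but the combinatorial core of the theorem is missing, and the proposed repairs (LP/charging, or the selector reduction) either remain unexecuted or fail.
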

\begin{proof}
Assume, that we have a sequence of queries solving \gt. Let us denote the queries by $Q_1,Q_2,\dots, Q_t$, where $t$ is the number of queries. Recall that $N$ denotes the set of all the elements.

If $k \geq n/2$, then assuming that we had $t < k / 2$, take an arbitrary set of $k$ elements $K \subset N$. And observe that feedback to each query reveals at most two identifiers. The total number of identifiers revealed is $t \cdot 2 < k$. Hence there is an element $x \in K$ that is never returned by the feedback. It is easy to see that by the properties of the feedback function $F_1$, the feedbacks for set $K \setminus \{x\}$ would be identical as for set $K$ for each query. Hence if $k \geq n/2$ we must have $t \geq k/2 \geq n/4$.

Let us now consider the more interesting range of $k < n/2$. We define sets of indices $T_{>1} = \{\tau :  |Q_\tau| > 1\}, T_{= 1} = \{\tau :  |Q_\tau|  = 1\}$. Denote the following set of elements:
\[
M = \bigcup_{\tau \in T_{=1}} Q_\tau \cup \bigcup_{\tau \in T_{>1}}\left( \{ \min Q_{\tau} \} \cup \{ \min (Q_{\tau} \setminus \min Q_{\tau})\} \right)
\ .
\]
We know that $|M| \leq 2t$, because we take at most $2$ elements from each query. Denote set $R= N \setminus M$. Set $R$ are the elements from $N$ that are not smallest (or second smallest) in any of the queries.  

If we have $|R| < k$, then $n - 2t < k$ and since $k \leq n/2$ we have $t \geq n/4$. 

Assume that $|R| \geq k$, consider $R$ ordered in the decreasing order of identifiers. Denote this ordering as $r_1,r_2,\dots$ and let $R_{i,j} = \{r_i,r_2,\dots, r_j\}$. Denote the indices of queries that include element $r_i$ as $T^{(r_i)}_{> 1} = \{\tau \in T_{>1} | r_i \in Q_\tau\}$.

For any $j \in N$ and $i=1,2,\dots,j$, define two sets of query indices:
\[
A_j(i) = \{\tau \in T^{r_i}_{>1}, |Q_{\tau} \cap R_{i+1,j}| = 0 \}~,
\]
\[
B_j(i) = \{\tau \in T^{r_i}_{>1}, |Q_{\tau} \cap R_{i+1,j}| = 1 \}~.
\]
We will prove the following: \\
\textit{Claim 1: For any $j,i$ we have $2 \cdot |A_j(i)| + |B_j(i)| > k - j$.}

Assume on the contrary that this does not hold for some particular $j,i$ and observe that then we can find for every query $Q_\tau$ for $\tau \in A_j(i)$ two elements that belong to $Q_\tau$ and are smaller than $r_i$. Take such two elements for each $\tau \in A_j(i)$. We have $2|A_j(i)|$ elements, call this set $A$. For every $\tau \in B_j(i)$ find one element that belongs to $Q_\tau$ and is smaller than $r_i$. We take $|B_j(i)|$ such elements (one for each of $B_j(i)$) and call this set $B$. Consider two sets 
\[
S = R_{1,j} \cup A \cup B~,
\]
\[
S^' = R_{1,j} \cup A \cup B \setminus \{r_i\}~.
\]
Observe that $|S| \leq |S^'| \leq  j + k -j = k$. We will compare the feedbacks for $S$ and $S^'$ and show that the feedbacks are identical for each query.
Note that for every $\tau \in T^{(r_i)}_{> 1}$, $S^' \cap Q_\tau$ has at least two elements that are smaller than $r_i$. Hence if $|S \cap Q_\tau| \leq \alpha$ then surely $F_1(S^' \cap Q_\tau) = F_1(S \cap Q_\tau)$. If $|S \cap Q_\tau| > \alpha$, there is a simple strategy of an adversary to ensure equal feedbacks. The adversary selects an arbitrary set $X$ with $|X| = \alpha$ satisfying, $X \subset S^' \cap Q_\tau$ and  $X \subset S \cap Q_\tau$ and passes $X$ to the feedback function. Hence the feedback in step $\tau$ is identical for both $S$ and $S^'$. Note that, since $r_i \notin M$, then $r_i$ does not belong to any other query than the queries with indices in $ T^{(r_i)}_{> 1}$, hence we cannot distinguish $S$ from $S^'$. This means that the query sequence does not solve the set learning problem. We obtained a contradiction, which proves the claim.
\\
In the next claim we prove that sets $A_j(i)$ and $B_j(i)$ are disjoint.
\\
\textit{Claim 2: For any $j$, we have $A_j(i) \cap A_j(i^') = \emptyset$ and $B_j(i) \cap B_j(i^') = \emptyset$, for $i,i^' \leq j$ and $i \neq i^'$.}

Assume on the contrary that for some $j$ and $i,i^' \leq j$ we have $A_j(i) \cap A_j(i^') \neq \emptyset$ and take arbitrary $\tau^* \in A_j(i) \cap A_j(i^')$. Assume without loss of generality that $i < i^'$. By the definition of sets $A$ we have $r_i \in Q_{\tau^*}$ and $r_{i^'} \in Q_{\tau^*}$. Since $i<i^' \leq j$ we also have $r_{i^'} \in R_{i+1,j}$, hence $|Q_{\tau^*} \cap R_{i+1,j}| \geq 1$ and $\tau^* \notin A_j(i)$ a contradiction. 
Now if $B_j(i) \cap B_j(i^') \neq \emptyset$ then similarly take $\tau^* \in B_j(i) \cap B_j(i^')$ and assume $i < i^'$. We have $r_i, r_{i^'} \in Q_{\tau^*}$. We know that $|Q_{\tau^*} \cap R_{i^' +1,j}| = 1$ thus $|Q_{\tau^*} \cap R_{i^',j}| = 2$. Which implies that $|Q_{\tau^*} \cap R_{i +1,j}| \geq 2$ and $\tau^* \notin B_j(i)$. We obtained a contradiction proving the claim.

We fix $j^* = \lfloor k/2 \rfloor$. From Claim 1 we have $2 \cdot |A_{j^*}(i)| + |B_{j^*}(i)| > k / 2$ for each $i = 1,2,\dots,j$. Sets $A_j(i)$ contain indices of queries hence using Claim 2 we get: 

\[
t \geq \left | \bigcup_{i =1}^{j^*} A_{j^*}(i) \right | =  \sum_{i =1}^{j^*} \left |A_{j^*}(i) \right |~,
\]
\[
t \geq \left | \bigcup_{i =1}^{j^*} B_{j^*}(i) \right | =  \sum_{i =1}^{j^*}  \left |B_j(i) \right |~.
\]
Adding up the above inequalities gives us:
\[
3t \geq 2\cdot \sum_{i =1}^{j^*}  \left |A_{j^*}(i) \right | +   \sum_{i =1}^{j^*} \left |B_{j^*}(i) \right | = \sum_{i=1}^{j^*} (2 |A_{j^*}(i) | + \left |B_{j^*}(i) \right |)  \geq j^* \cdot k / 2 \geq k^2 / 4 - k/2
\ .
\]
Thus, finally we get $t \geq k^2 / 12 - k/6$.
%
\end{proof}

\dpa{
\paragraph{Randomized counterpart \dk{result}}
In Theorem~\ref{thm:lower-2min} we show that any sequence that solves \gt under feedback $F_1$ under \hadv must have length of at least $\Omega\left(\min\{n,k^2\}\right)$. Thus, a randomized algorithm generating sequences that solve \gt under this feedback with at least a constant probability must have expected query complexity of $\Omega\left(\min\{n,k^2\}\right)$, \dk{since each correct sequence must have such length (by Theorem~\ref{thm:lower-2min}).}
\begin{corollary}
For any $k, \alpha$, query complexity of any randomized solution to \gt under feedback $F_1$ is $\Omega(\min\{n,k^2\})$ under adaptive \hadv.
\end{corollary}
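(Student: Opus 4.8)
The plan is to exploit the essential weakness of $F_1$: on any query it discloses only the two smallest elements of the (adversarially truncated) intersection, and always in fixed positional order. Hence an element can be \emph{hidden} whenever every query containing it also retains at least two \emph{smaller} elements, since then the \hadv can keep the two revealed slots unchanged. I would first dispose of the dense regime $k \ge n/2$ by direct counting: each query contributes at most two identifiers to the feedback, so $t$ queries reveal at most $2t$ identifiers; if $t < k/2$, then for any fixed $K$ with $|K|=k$ some $x \in K$ is never revealed, whence $K$ and $K\setminus\{x\}$ yield identical feedback on every query, contradicting correctness. This gives $t \ge k/2 = \Omega(\min\{n,k^2\})$ here.

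The substantive case is $k < n/2$. I would introduce the set $M$ of elements that can ever occupy one of the two revealed slots --- the minimum and second minimum of each multi-element query, plus the element of each singleton query --- so that $|M| \le 2t$. If its complement $R = N \setminus M$ has fewer than $k$ elements, then $n - 2t < k$ and, using $k \le n/2$, again $t = \Omega(n)$. Otherwise $|R| \ge k$, and each element of $R$ is, in every query containing it, strictly larger than at least two other elements of that query; this is the structural property enabling the adversary to bury it. Ordering $R$ decreasingly as $r_1 > r_2 > \cdots$ and writing $R_{i,j} = \{r_i,\dots,r_j\}$, I would charge to each $r_i$ (for $i \le j^\ast = \lfloor k/2 \rfloor$) the index sets $A_{j^\ast}(i)$ and $B_{j^\ast}(i)$ of queries containing $r_i$ whose intersection with $R_{i+1,j^\ast}$ is empty, respectively a singleton.

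The two engines are a lower bound (Claim~1) $2|A_{j^\ast}(i)| + |B_{j^\ast}(i)| > k - j^\ast \ge k/2$ for each such $i$, and a disjointness statement (Claim~2) that the families $\{A_{j^\ast}(i)\}_i$ are pairwise disjoint as index sets, and likewise $\{B_{j^\ast}(i)\}_i$. Given these, summing Claim~1 over $i = 1,\dots,j^\ast$ and bounding both $\sum_i |A_{j^\ast}(i)|$ and $\sum_i |B_{j^\ast}(i)|$ by $t$ via Claim~2 yields $3t \ge \sum_i \big(2|A_{j^\ast}(i)| + |B_{j^\ast}(i)|\big) \ge j^\ast \cdot k/2 = \Omega(k^2)$, closing the bound. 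Claim~2 is a short consequence of the definitions: an index in $A_{j^\ast}(i) \cap A_{j^\ast}(i')$ with $i < i'$ would correspond to a query containing both $r_i$ and $r_{i'} \in R_{i+1,j^\ast}$, contradicting emptiness (the $B$ case is analogous with a count of $2$).

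The hard part will be Claim~1, the distinguishability-forcing bound, which I would prove by contraposition. If $2|A_{j^\ast}(i)| + |B_{j^\ast}(i)| \le k - j^\ast$ for some $i$, I build an indistinguishable pair: starting from $R_{1,j^\ast}$, I pick for each query in $A_{j^\ast}(i)$ two of its elements below $r_i$ and for each query in $B_{j^\ast}(i)$ one such element, calling their unions $A$ and $B$. The assumed inequality forces $|R_{1,j^\ast} \cup A \cup B| \le j^\ast + (k - j^\ast) = k$, so both $S = R_{1,j^\ast} \cup A \cup B$ and $S' = S \setminus \{r_i\}$ are legal. The crux is verifying that $F_1$, with a suitable \hadv, returns identical feedback for $S$ and $S'$ on every query: on queries containing $r_i$ the padding keeps at least two elements below $r_i$ present in both sets, so the two revealed minima are untouched by deleting $r_i$ (when the intersection exceeds $\alpha$ the adversary selects a common $\alpha$-subset of both intersections retaining those small elements); on queries avoiding $r_i$ the intersections coincide, since $r_i \in R$ means $r_i$ appears in no other query's revealed slot. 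The delicacy throughout is to keep $S,S'$ within size $k$ while guaranteeing the two surviving smaller elements the adversary needs to preserve the ordered pair of minima.
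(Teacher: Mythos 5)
Your proposal is, in substance, a correct reconstruction of the paper's proof of the \emph{deterministic} lower bound, Theorem~\ref{thm:lower-2min}: the dense-regime counting for $k\geq n/2$, the sets $M$ and $R=N\setminus M$, the charging families $A_{j^\ast}(i)$ and $B_{j^\ast}(i)$ with $j^\ast=\lfloor k/2\rfloor$, Claims~1 and~2, and the final summation $3t\geq\sum_i\bigl(2|A_{j^\ast}(i)|+|B_{j^\ast}(i)|\bigr)=\Omega(k^2)$ all match the paper's argument step for step; likewise your contrapositive construction of the indistinguishable pair $S$ and $S'=S\setminus\{r_i\}$, with the honest adversary passing a common $\alpha$-subset of both intersections, is exactly the paper's. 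So the combinatorial engine is sound and identical in approach.

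However, the statement under review is the \emph{randomized} corollary, and your proposal never addresses randomness: every step quantifies over a single fixed sequence $Q_1,\dots,Q_t$ assumed to solve \gt. What is missing is the transfer step, which is the entire content of the paper's own (one-sentence) proof of this corollary: by Definition~\ref{def:randomized}, a randomized algorithm succeeding with probability $1-c$ generates, with probability at least $1-c$, a realized sequence that solves \gt in the sense of Definition~\ref{def:framework}, where correctness is judged against an adversary who sees the realized sequence --- this adaptivity is precisely what makes the pointwise deterministic bound applicable (against an oblivious adversary, who fixes $K$ before the coin flips, a realized sequence need not solve \gt in the deterministic sense and the argument would not go through). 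Since by your deterministic argument \emph{every} solving sequence has length $\Omega(\min\{n,k^2\})$, the expected query complexity is at least $(1-c)\cdot\Omega(\min\{n,k^2\})=\Omega(\min\{n,k^2\})$. The repair is a one-liner given the paper's definitions, but as written your text proves Theorem~\ref{thm:lower-2min} rather than the corollary, and the role of the adaptive adversary --- the only genuinely new ingredient --- goes unmentioned.
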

}

%
\section{Discussion of results and open directions}
\label{sec:future}
We conclude the paper with four promising future directions. 
\paragraph{Sparsity}
In addition to the query complexity, there are two additional metrics of \grouptesting solutions that are 
\dk{studied in
}
literature. These parameters are: the maximum number of queries to which 
\dk{an
element belongs to (typically denoted by $w$) 
and the maximum size of a query (typically denoted by $\rho$).} The interplay between all these three parameters, \dk{i.e., query complexity, $w$ and $\rho$,} was carefully studied in~\cite{Inan2020}  in case of the Beeping feedback,
\dk{and in some other recent works \cite{HKK20,Johnson2020} the sparsity of some particular selectors was established and discussed.}
It is possible to derive bounds on parameters $w$ and $\rho$ also for the query sequences considered in this paper. In particular, the sequence in Theorem~\ref{thm:binary} under the $\feed{\alpha}$ feedback has $w = O(k \log(n/k))$ and $\rho = O(n\alpha / k)$, where both bounds can be obtained by a small modification of the analysis in Section~\ref{sec:binary}. An interesting future direction would be to study tradeoffs between query complexity and the values of $w,\rho$ for different feedback models, \dk{in particular, for different capacity $\alpha$ and expressiveness $\beta$.} 


\paragraph{Randomness}
A popular line of research in \grouptesting is to consider randomized solutions~\cite{coja2020information, Bondorf21, Johnson2020}. While in this work we focus on deterministic solutions, some of our algorithms 
\dk{have their simply constructed randomized counterparts, also presented in this work.}
Randomized algorithms defined in this way 
\dk{correctly distinguish}
{\em all} sets~$K$.
This can be contrasted with existing solutions \dk{that, typically, have weaker guarantees: with some probability, to correctly distinguish a} 
{\em randomly chosen} set~$K$ \dk{from other sets of size at most~$k$, or to correctly identify each element only with some probability (resulting in some false-positives and/or false-negatives with non-zero probability)}.
\dk{Moreover, they typically work against a weaker non-adaptive version of an adversary, who has to choose the unknown set~$K$ before the random choices of the algorithm.}
An interesting future direction would be to 
investigate 
\dk{how different probabilistic guarantees and types of adversaries influence the
query complexity of generalized Group Testing.} 
\dk{Another intriguing question is how random perturbations of the feedback function (see e.g.,~\cite{Scarlett2020}) affect the query complexity.
Finally, designing efficient coding (i.e., constructing queries) and decoding (i.e., reconstructing set $K$ from the feedback) algorithms, working in polynomial time, is a challenging open direction, sometimes even for randomized algorithms (c.f., Section~\ref{sec:geberal-feedback} with $\abfeed{\alpha,\beta}(X)$ feedback).}


\paragraph{Other feedbacks}

The third direction, motivated by subtle examples of the considered $(\alpha,2\lceil\log_2 n\rceil)$-feedbacks of different
query complexity in Section~\ref{sec:case-study}, is to study other specific well-motivated classes of $(\alpha,\beta)$-feedbacks and their complexities.
Although \dk{all $(\alpha,\beta)$-feedbacks} have to observe the universal lower bounds, \dk{such as the one in Theorem~\ref{thm:fulllower},}
their actual query complexity might be asymptotically larger.

\paragraph{Other adversaries}
Observe that in our proofs of the lower bounds, Theorems~\ref{thm:fulllower} and \ref{thm:lower-2min}, we use a weak \hadv. This makes our lower bounds stronger and suggests that in case of deterministic non-adaptive algorithms, the adversary that uses \dk{{\em some}} fixed function $\adv$ \dk{may have} similar power to the one being allowed to return arbitrary subsets. 
\dk{What actually follows from our results is that this adversarial impact may be similar for the best feedbacks in the class of $(\alpha,\beta)$-feedbacks, but does not necessarily tell us about the impact for a specific feedback function.} 
This opens an interesting direction of studying the impact of 
adversarial power, \dk{and more generally non-adaptiveness and ``maliciousness'',} to the \grouptesting problem, \dk{not only for general classes of $(\alpha,\beta)$-feedbacks (universal lower bounds, matching by upper bounds obtained for some $(\alpha,\beta)$-feedbacks), but also for specific well-motivated feedback functions.}

\bibliographystyle{abbrv}
\bibliography{biblio}

\begin{thebibliography}{10}

\bibitem{Aldridge2019}
M.~Aldridge.
\newblock Individual testing is optimal for nonadaptive group testing in the
  linear regime.
\newblock {\em IEEE Transactions on Information Theory}, 65(4):2058--2061,
  2019.

\bibitem{Aldridge2014}
M.~Aldridge, L.~Baldassini, and O.~Johnson.
\newblock Group testing algorithms: Bounds and simulations.
\newblock {\em IEEE Transactions on Information Theory}, 60(6):3671--3687,
  2014.

\bibitem{aldridge2019group}
M.~Aldridge, O.~Johnson, and J.~Scarlett.
\newblock Group testing: An information theory perspective.
\newblock {\em Found. Trends Commun. Inf. Theory}, 15(3-4):196--392, 2019.

\bibitem{Bar-YehudaGI92}
R.~Bar{-}Yehuda, O.~Goldreich, and A.~Itai.
\newblock On the time-complexity of broadcast in multi-hop radio networks: An
  exponential gap between determinism and randomization.
\newblock {\em Journal of Computer and System Sciences}, 45(1):104--126, 1992.

\bibitem{bay2020optimal}
W.~H. Bay, E.~Price, and J.~Scarlett.
\newblock Optimal non-adaptive probabilistic group testing in general sparsity
  regimes.
\newblock {\em arXiv preprint arXiv:2006.01325}, 2020.

\bibitem{Bondorf21}
S.~Bondorf, B.~Chen, J.~Scarlett, H.~Yu, and Y.~Zhao.
\newblock Sublinear-time non-adaptive group testing with o(k log n) tests via
  bit-mixing coding.
\newblock {\em IEEE Transactions on Information Theory}, 67(3):1559--1570,
  2021.

\bibitem{Bshouty09}
N.~H. Bshouty.
\newblock Optimal algorithms for the coin weighing problem with a spring scale.
\newblock In {\em {COLT} 2009 - The 22nd Conference on Learning Theory,
  Montreal, Quebec, Canada, June 18-21, 2009}, 2009.

\bibitem{capetanakis1979generalized}
J.~Capetanakis.
\newblock Generalized tdma: The multi-accessing tree protocol.
\newblock {\em IEEE Transactions on Communications}, 27(10):1476--1484, 1979.

\bibitem{capetanakis1979tree}
J.~Capetanakis.
\newblock Tree algorithms for packet broadcast channels.
\newblock {\em IEEE Transactions on Information Theory}, 25(5):505--515, 1979.

\bibitem{Censor-HillelHL15}
K.~{Censor{-}Hillel}, B.~Haeupler, N.~A. Lynch, and M.~M{\'{e}}dard.
\newblock Bounded-contention coding for the additive network model.
\newblock {\em Distributed Computing}, 28(5):297--308, 2015.

\bibitem{chlebus2017randomized}
B.~S. Chlebus.
\newblock Randomized communication in radio networks.
\newblock {\em CoRR}, abs/1801.00074, 2018.

\bibitem{choi2010optimal}
S.-S. Choi and J.~H. Kim.
\newblock Optimal query complexity bounds for finding graphs.
\newblock {\em Artificial Intelligence}, 174(9-10):551--569, 2010.

\bibitem{ClementiMS01}
A.~E.~F. Clementi, A.~Monti, and R.~Silvestri.
\newblock Selective families, superimposed codes, and broadcasting on unknown
  radio networks.
\newblock In {\em Proceedings of the Twelfth Annual Symposium on Discrete
  Algorithms, January 7-9, 2001, Washington, DC, {USA}}, pages 709--718.
  {ACM/SIAM}, 2001.

\bibitem{clifford2010pattern}
R.~Clifford, K.~Efremenko, E.~Porat, and A.~Rothschild.
\newblock Pattern matching with don't cares and few errors.
\newblock {\em Journal of Computer and System Sciences}, 76(2):115--124, 2010.

\bibitem{coja2020information}
A.~Coja-Oghlan, O.~Gebhard, M.~Hahn-Klimroth, and P.~Loick.
\newblock Information-theoretic and algorithmic thresholds for group testing.
\newblock {\em IEEE Transactions on Information Theory}, 66(12):7911--7928,
  2020.

\bibitem{cormode2005s}
G.~Cormode and S.~Muthukrishnan.
\newblock What's hot and what's not: tracking most frequent items dynamically.
\newblock {\em ACM Transactions on Database Systems (TODS)}, 30(1):249--278,
  2005.

\bibitem{cormode2006combinatorial}
G.~Cormode and S.~Muthukrishnan.
\newblock Combinatorial algorithms for compressed sensing.
\newblock In {\em International colloquium on structural information and
  communication complexity}, pages 280--294. Springer, 2006.

\bibitem{DAMASCHKE}
P.~Damaschke.
\newblock Threshold group testing.
\newblock {\em Electronic Notes in Discrete Mathematics}, 21:265 -- 271, 2005.
\newblock General Theory of Information Transfer and Combinatorics.

\bibitem{BonisGV03}
A.~{De Bonis}, L.~Gasieniec, and U.~Vaccaro.
\newblock Generalized framework for selectors with applications in optimal
  group testing.
\newblock In {\em Automata, Languages and Programming, 30th International
  Colloquium, {ICALP} 2003, Eindhoven, The Netherlands, June 30 - July 4, 2003.
  Proceedings}, volume 2719 of {\em Lecture Notes in Computer Science}, pages
  81--96. Springer, 2003.

\bibitem{BonisV17}
A.~{De Bonis} and U.~Vaccaro.
\newblock {\(\epsilon\)}-almost selectors and their applications to
  multiple-access communication.
\newblock {\em IEEE Transactions on Information Theory}, 63(11):7304--7319,
  2017.

\bibitem{MarcoJK19}
G.~{De Marco}, T.~Jurdzinski, and D.~R. Kowalski.
\newblock Optimal channel utilization with limited feedback.
\newblock In L.~A. Gasieniec, J.~Jansson, and C.~Levcopoulos, editors, {\em
  Fundamentals of Computation Theory - 22nd International Symposium, {FCT}
  2019, Copenhagen, Denmark, August 12-14, 2019, Proceedings}, volume 11651 of
  {\em Lecture Notes in Computer Science}, pages 140--152. Springer, 2019.

\bibitem{MarcoJKRS20}
G.~{De Marco}, T.~Jurdzinski, D.~R. Kowalski, M.~R{\'{o}}zanski, and
  G.~Stachowiak.
\newblock Subquadratic non-adaptive threshold group testing.
\newblock {\em Journal of Computer and System Sciences}, 111:42--56, 2020.

\bibitem{djackov1975search}
A.~Djackov.
\newblock On a search model of false coins.
\newblock In {\em Topics in Information Theory (Colloquia Mathematica
  Societatis Janos Bolyai 16). Budapest, Hungary: Hungarian Acad. Sci}, pages
  163--170, 1975.

\bibitem{SURV}
B.~Doerr.
\newblock Probabilistic tools for the analysis of randomized optimization
  heuristics.
\newblock {\em CoRR}, abs/1801.06733, 2018.

\bibitem{dorfman1943detection}
R.~Dorfman.
\newblock The detection of defective members of large populations.
\newblock {\em The Annals of Mathematical Statistics}, 14(4):436--440, 1943.

\bibitem{duhwang}
D.~Du, F.~K. Hwang, and F.~Hwang.
\newblock {\em Combinatorial group testing and its applications}, volume~12.
\newblock World Scientific, 2000.

\bibitem{erdos1985families}
P.~Erd{\"o}s, P.~Frankl, and Z.~F{\"u}redi.
\newblock Families of finite sets in which no set is covered by the union of r
  others.
\newblock {\em Israel Journal of Mathematics}, 51(1-2):79--89, 1985.

\bibitem{erdos1963two}
P.~Erd{\"o}s and A.~R{\'e}nyi.
\newblock On two problems of information theory.
\newblock {\em Magyar Tud. Akad. Mat. Kutat{\'o} Int. K{\"o}zl}, 8:229--243,
  1963.

\bibitem{gallager1985perspective}
R.~Gallager.
\newblock A perspective on multiaccess channels.
\newblock {\em IEEE Transactions on information Theory}, 31(2):124--142, 1985.

\bibitem{GrebinskiK00}
V.~Grebinski and G.~Kucherov.
\newblock Optimal reconstruction of graphs under the additive model.
\newblock {\em Algorithmica}, 28(1):104--124, 2000.

\bibitem{greenberg1987estimating}
A.~G. Greenberg, P.~Flajolet, and R.~E. Ladner.
\newblock Estimating the multiplicities of conflicts to speed their resolution
  in multiple access channels.
\newblock {\em Journal of the ACM (JACM)}, 34(2):289--325, 1987.

\bibitem{greenberg1985lower}
A.~G. Greenberg and S.~Winograd.
\newblock A lower bound on the time needed in the worst case to resolve
  conflicts deterministically in multiple access channels.
\newblock {\em Journal of the ACM (JACM)}, 32(3):589--596, 1985.

\bibitem{HKK20}
E.~Hradovich, M.~Klonowski, and D.~R. Kowalski.
\newblock Contention resolution on a restrained channel.
\newblock In {\em 26th {IEEE} International Conference on Parallel and
  Distributed Systems, {ICPADS} 2020, Hong Kong, December 2-4, 2020}, pages
  89--98. {IEEE}, 2020.

\bibitem{Inan2020}
H.~A. Inan, P.~Kairouz, and A.~{\"{O}}zg{\"{u}}r.
\newblock Sparse combinatorial group testing.
\newblock {\em IEEE Transactions on Information Theory}, 66(5):2729--2742,
  2020.

\bibitem{Indyk97}
P.~Indyk.
\newblock Deterministic superimposed coding with applications to pattern
  matching.
\newblock In {\em 38th Annual Symposium on Foundations of Computer Science,
  {FOCS} '97, Miami Beach, Florida, USA, October 19-22, 1997}, pages 127--136.
  {IEEE} Computer Society, 1997.

\bibitem{Johnson2020}
O.~Johnson, M.~Aldridge, and J.~Scarlett.
\newblock Performance of group testing algorithms with near-constant tests per
  item.
\newblock {\em IEEE Transactions on Information Theory}, 65(2):707--723, 2019.

\bibitem{kautz1964nonrandom}
W.~Kautz and R.~Singleton.
\newblock Nonrandom binary superimposed codes.
\newblock {\em IEEE Transactions on Information Theory}, 10(4):363--377, 1964.

\bibitem{KomlosG85}
J.~Koml{\'{o}}s and A.~G. Greenberg.
\newblock An asymptotically fast nonadaptive algorithm for conflict resolution
  in multiple-access channels.
\newblock {\em IEEE Transactions on Information Theory}, 31(2):302--306, 1985.

\bibitem{Liang2021}
W.~Liang and J.~Zou.
\newblock Neural group testing to accelerate deep learning.
\newblock In {\em {IEEE} International Symposium on Information Theory, {ISIT}
  2021}. {IEEE}, 2021.

\bibitem{lindstrom1975determining}
B.~Lindstrom.
\newblock Determining subsets by unramified experiments.
\newblock {\em A Survey of Statistical Design and Linear Models}, 1975.

\bibitem{massey1981collision}
J.~L. Massey.
\newblock Collision-resolution algorithms and random-access communications.
\newblock In {\em Multi-user communication systems}, pages 73--137. Springer,
  1981.

\bibitem{mitrinovic1970analytic}
D.~S. Mitrinovic and P.~M. Vasic.
\newblock {\em Analytic inequalities}, volume~1.
\newblock Springer, 1970.

\bibitem{roth2006introduction}
R.~M. Roth.
\newblock Introduction to coding theory.
\newblock {\em IET Communications}, 47, 2006.

\bibitem{Scarlett2020}
J.~Scarlett and O.~Johnson.
\newblock Noisy non-adaptive group testing: {A} (near-)definite defectives
  approach.
\newblock {\em IEEE Transactions on Information Theory}, 66(6):3775--3797,
  2020.

\bibitem{wolf1985born}
J.~Wolf.
\newblock Born again group testing: Multiaccess communications.
\newblock {\em IEEE Transactions on Information Theory}, 31(2):185--191, 1985.

\end{thebibliography}

\newpage

\appendix


\section{Auxiliary tools}
\subsection{Proof of Proposition~\ref{fct:expres}}
\begin{proof}
There are at most $2^{\bar{\alpha}}$ possible inputs to function $f_1$. Hence, since it is deterministic, there are at most $2^{\bar{\alpha}}$ possible outputs. Take the family of all subsets of $N$ of size at most $N$ and define the partition of this family into subfamilies -- \dk{each consisting of sets}
with 
the same value of $f_1$. This partition has at most $2^{\bar{\alpha}}$ elements, \dk{because this is the size of the domain of $f_1$}. Fix an arbitrary ordering of this partition and enumerate its elements. Each subfamily receives a unique label with at most $\bar{\alpha}$ bits. Let feedback function $f_2$ for each set $K$ return the label of the subfamily to which $K$ belongs. Such feedback function has expressiveness at most $\bar{\alpha}$ and clearly it 
satisfies the property required from $f_2$ in the statement of the fact.
\end{proof}

\subsection{Proof of Proposition~\ref{fct:technique}}
\begin{proof}
Let $\mathcal{S}_{adv}$ denotes the set of all strategies of \aadv. From the definition of the adversary, we have $\adv(Q_{\tau} \cap K_1, \tau) = Q_{\tau} \cap K_1$ and $\adv(Q_{\tau} \cap K_2, \tau) = Q_{\tau} \cap K_2$. Hence position $\tau$ of feedback vector $\mathcal{F}(K_1,\adv)$ equals to $\gfeed(Q_{\tau} \cap K_1)$ for any strategy of the adversary. 
Similarly position $\tau$ of feedback vector $\mathcal{F}(K_2,\adv)$ equals to $\gfeed(Q_{\tau} \cap K_2)$ for any strategy of the adversary. Since $\gfeed(Q_{\tau} \cap K_1) \neq \gfeed(Q_{\tau} \cap K_2)$, then 
\dk{$\{\mathcal{F}(K_1,\adv) : \adv \in \mathcal{S}_{adv}(\cQ,K_1)\} \cap  \{\mathcal{F}(K_2,\adv) : \adv \in \mathcal{S}_{adv}(\cQ,K_2)\} = \emptyset$.}
\end{proof}

\subsection{Proof of Fact~\ref{clm:oddeven}}
\begin{proof}
We have:
\begin{align*}
(1-2p)^n = ((1-p) - p)^n &= \sum_{k=0}^n {n \choose k} (-p)^k (1-p)^{n-k} \\
& = \sum_{k = 0}^{\lceil n/2 \rceil} {n \choose 2k}p^{2k}(1-p)^{n-2k} - \sum_{k = 0}^{\lceil n/2 \rceil} {n \choose 2k+1}p^{2k+1}(1-p)^{n-2k-1} \\
& = \Pr{X \text{ is even}} - \Pr{X \text{ is odd}}. 
\end{align*}
An since $1= \Pr{X \text{ is even}} + \Pr{X \text{ is odd}} $, we get $\Pr{X \text{ is odd}} = (1 - (1-2p)^n)/2$.
\end{proof}


\end{document}